\documentclass{amsart}
\usepackage{amsaddr}
\usepackage[a4paper, total={5.8in, 9in}]{geometry}
\usepackage{caption, subcaption}
\usepackage[leftcaption]{sidecap}
\usepackage{mathtools}
\usepackage{amssymb}
\usepackage{tikz}
\usepackage{tikzsymbols}
\usepackage{array}
\usepackage{hhline}
\usepackage{bbm}
\usepackage{mathrsfs}
\usepackage{todonotes}
\usepackage{tabularx}
\usepackage{enumitem}
\usepackage[export]{adjustbox}
\usepackage[
unicode,
psdextra=false,     
hypertexnames=false 
]{hyperref}

\DeclareSymbolFont{mathscrLC}{OT1}{pzc}{m}{n} 

	\usetikzlibrary{calc, automata, chains, arrows.meta, quotes, backgrounds, positioning, shapes.misc, shapes.geometric}
	
	\tikzset{
		LabelStyle/.style = { rectangle, rounded corners, draw,
			minimum width = 2em, fill = yellow!50,
			text = red, font = \bfseries, fontsize = 6 }}

	\newtheorem{thm}{Theorem}

	\newtheorem{prop}{Proposition}
	\newtheorem{lem}{Lemma}
	\newtheorem{cor}{Corollary}
	
	\newtheorem{alg}{Algorithm}
	
	\theoremstyle{definition}
	\newtheorem{exmp}{Example}
	\newtheorem{ass}{Assumption}

	\newcommand{\Z}{\mathbb{Z}}
	\newcommand{\R}{\mathbb{R}}
	\newcommand{\C}{\mathbb{C}}
	\newcommand{\ee}{\mathbbm{e}}
	
	\newcommand{\one}{\mathsf{1}}
	
	\newcommand{\One}{\mathbbm{1}}
	\newcommand{\diag}{\mathrm{diag}}
	
	\newcommand{\transpose}{\mathsf{T}}
	\renewcommand{\Re}{\mathrm{Re}}
	
	\newcommand{\dd}{\mathrm{d}}
	\newcommand{\set}[2]{\{\ #1 \ | \ #2 \ \}}
	\newcommand{\stack}[2]{\genfrac{}{}{0pt}{}{#1}{#2}}

\makeatletter
\g@addto@macro{\endabstract}{\@setabstract}
\newcommand{\authorfootnotes}{\renewcommand\thefootnote{\@fnsymbol\c@footnote}}%
\makeatother

\title{Coexistence coalitions in propagule disperser quasi-communities}

\begin{document}
	
\maketitle

\begin{center}
	\authorfootnotes
	Leonardo Aguirre$^a$\footnote{email: leonardo.aguirre@ceab.csic.es}, Jos\'e A. Capit\'an$^b$ and David Alonso$^a$\footnote{email: dalonso@ceab.csic.es}
	\par \bigskip
	
	$^a$ Centre d’Estudis Avançats de Blanes, Consejo Superior de Investigaciones Cient\'ificas (CEAB-CSIC), Blanes, Catalonia  \par \bigskip
	
	$^b$ Complex Systems Group and Department of Applied Mathematics, Universidad Politécnica de Madrid, Madrid, Spain  \par \bigskip
	
\end{center}

\begin{abstract}
	Many natural ecosystems harbor large numbers of coexisting species competing for far fewer distinct resources, in apparent defiance of the competitive exclusion principle. Various mechanisms have been proposed to explain this apparent paradox, among the most prominent being competition--colonization trade-offs, environmental heterogeneity, and ecological neutrality. We develop a unified stochastic model class that combines all three coexistence narratives in the context of propagule disperser communities and show that this setting encompasses several important classical models. We then prove a general theorem on coexistence at macroscopic equilibria and provide an algorithm that determines equilibrium coalitions solely from readily available matrix spectra, thereby bypassing the costly computation of exact equilibrium states. Using illustrative examples, we demonstrate the potential of this approach for quantifying the relative merits of different coexistence narratives and for studying their synergistic effects.
\end{abstract}

\section{Introduction}\label{sec:introduction}

The biodiversity paradox is one of the fundamental enigmas in theoretical community ecology, referring to the tension between highly diverse natural communities and mechanistic explanations provided by classical niche theory, which severely limit coexistence opportunities through the competitive exclusion principle. A considerable portion of the debate has revolved around propagule disperser species, whose life cycle consists of a sessile adult stage and a mobile propagule stage, which occur, for instance, among intertidal organisms such as barnacles (\cite{Iwasa1986}) and ascidians (\cite{Edwards2010}), as well as in terrestrial plant communities such as forests (\cite{Leck2008}) and grasslands (\cite{Tilman1994}). 

Numerous attempts have been made to explain high diversity in propagule disperser communities using minimalist demonstration models (\cite{Evans2013}) that incorporate one of several coexistence-enabling mechanisms. Prominent approaches fall roughly into three categories: (i) nonlinear trade-offs between competition and colonization traits, (ii) spatiotemporal heterogeneity of the habitat, and (iii) ecological neutrality. Each of these approaches provides a plausible explanation for coexistence in some cases but has been ruled out in others. This ambiguous state of affairs has led to a large body of work debating the relative merits of these explanations and questioning their generality or predictive ability (\cite{Vellend2010}).

Rather than taking sides, we start from the premise that real ecosystems might feature all three proposed mechanisms to varying degrees. It would therefore be desirable to have an overarching framework that combines the traditional single-mechanism models and allows us to quantify their relative merits and investigate non-trivial interactions. Here we develop a stochastic model class for propagule disperser communities that covers competition–colonization trade-offs, environmental heterogeneity as well as ecological neutrality, and begin exploring coexistence conditions by proving a general theorem about coexistence at macroscopic equilibrium. We furthermore supplement the theorem with an algorithm that allows us to generate coexistence phase diagrams in a computationally inexpensive way.

The article is organized as follows: In Section \ref{sec:coexistence_mechanisms} we briefly revisit several prominent coexistence models. Section \ref{sec:model_class} introduces the general notion of propagule disperser quasi-communities and formalizes it as a class of stochastic process models. We derive the macroscopic continuum approximation and demonstrate how it yields several classical coexistence models as special cases. Section \ref{sec:coexistence_results} contains the main result of this article, namely a linear-algebraic characterization of macroscopic coexistence conditions, and illustrates it with a few examples. The final Section \ref{sec:discussion} discusses how this unified propagule disperser framework could be used in subsequent research. Mathematical proofs are sketched in the main text with details deferred to the respective appendices. Appendix \ref{app:notations} contains an overview of the mathematical terminology and notation used throughout the article.

\section{Classical coexistence mechanisms}\label{sec:coexistence_mechanisms}

There have been various approaches to enable stationary coexistence in propagule disperser models. Many of them build on Levins' classical metapopulation model (\cite{Levins1969}), which considers a focal species in a (meta-)habitat consisting of $S$ sites, each of which can either be occupied by the focal species or be empty. The Levins model further posits a constant rate\footnote{Symbols of the form $c_{\text{\_}}$ will generally denote rate parameters with units 1/s in the model under consideration. The semantic correspondence between subscripts and ecological interpretation is not necessarily consistent across different models.} $c_1$ at which an occupied site becomes empty due to local extinction, and another rate $c_2$ at which the local population of some occupied site colonizes an empty site. The expected number $x$ of occupied sites is then governed by the ordinary differential equation
\begin{equation}\label{eq:Levins}
	\dot{x} = - c_1 x + c_2 x \left(1-\frac{x}{S}\right) 
\end{equation}
Note that (\ref{eq:Levins}) can also be viewed as a model of a logistically growing well-mixed population with growth rate $c_2$, carrying capacity $S$, and additional mortality $c_1$. This ambiguity arises from the crucial assumption that colonization is equally likely between any two sites, irrespective of their spatial distance. An alternative local interpretation of (\ref{eq:Levins}) is obtained by viewing the equation not as describing occupation across spatially well-separated sites, but as describing abundance within a single macrosite subdivided into many establishment microsites. Various later coexistence models build on one of these interpretations by extending the model to multiple species and introducing a particular coexistence-enabling mechanism. We now briefly review three widely used approaches.

\subsection{Displacement competition}\label{subsec:displacement_competition}

A popular coexistence narrative for propagule dispersers invokes a competition–colonization trade-off, i.e. that greater material investment into each propagule increases its chance of establishing as a sessile individual, while at the same time reducing the rate at which propagules can be produced and hence the effective colonization rate. Levins and Culver (\cite{Levins1971}) used this narrative to extend (\ref{eq:Levins}) to a two-species model in which the second species is a better colonizer but a worse competitor in the sense of displacement competition, meaning that propagules of the first species may successfully colonize microsites already occupied by sessiles of the second species, thereby replacing the resident individual. This model was extended to any finite set $\mathfrak{I}$ of species by Tilman (\cite{Tilman1994}), where again better colonizers are always outcompeted by worse colonizers -- a requirement that was slightly relaxed in a further extension by Calcagno (\cite{Calcagno2006}).\footnote{Similar approaches have been used by Muller-Landau \cite{MullerLandau2010} and in the work on adaptive dynamics initiated by Geritz \cite{Geritz1995}.} The latter approach leads to a system of ordinary differential equations for state variables $x^i$ (written with upstairs index $i\in\mathfrak{I}$ for consistency with later sections) representing the fraction of establishment microsites occupied by species $i$:
\begin{equation}\label{eq:Calcagno}
	\dot{x}^i = - c_1 x^i + c_{2,i} x^i\Big(1-\sum_{i'} x^{i'}\Big) + x^i \sum_{i'} c_{2,i} p^i_{i'} x^{i'} - p^{i'}_i c_{2,i'} x^{i'}\ ,\quad i \in\mathfrak{I}
\end{equation}
Here again $c_1$ is the mortality rate, and $c_{2,i}$ is the rate at which species $i$ attempts to colonize a uniformly chosen microsite, which always succeeds if the microsite is empty and succeeds with probability $p^i_{i'}$ if it is occupied by an individual of species $i'$. We may impose a total order on $\mathfrak{I}$ by $i<i' \Leftrightarrow c_{2,i} < c_{2,i'}$. Calcagno then makes the additional assumption that $p^i_{i'}$ is proportional to $(1 + \exp(\beta (c_{2,i} - c_{2,i'})))^{-1}$, a sigmoid function of the difference in colonization attempt rates that smooths the strict displacement ordering of Tilman’s model. The latter is recovered in the limit $\beta\to\infty$ where the sigmoid converges to a step function, i.e. $p^{i}_{i'} = 1$ for $i < i'$ and $p^{i}_{i'} = 0$ otherwise. Tilman exploited this triangular coefficient structure to give an algebraic criterion for colonization rates allowing coexistence at an equilibrium state, which then turns out to be unique and stable. Such general criteria are harder to obtain in Calcagno's model, but the examined cases are consistent with uniqueness and stability of coexistence equilibria whenever they exist (\cite{Calcagno2006}).

It should be noted that in all models revisited so far, coexistence necessarily requires displacement competition and cannot occur under strict preemption, where propagules compete only among themselves for empty microsites (\cite{Yu2001}). Whether this assumption is adequate depends on the system of interest. Displacement competition seems to be at work in some intertidal organisms such as ascidians (\cite{Edwards2010}) and in Tilman's original studies (\cite{Gleeson1990}) on grassland species, where a competitive advantage through displacement seems to originate from lower soil nutrient requirements, and a trade-off with colonization ability could plausibly arise from differences in metabolic allocation between roots and above-ground reproductive organs. On the other hand, there are many highly diverse communities, such as tropical forests, where displacement competition appears to be negligible. Nevertheless, there has been a long tradition of seeking some form of competition--colonization trade-off in forest communities, which supposedly manifests itself in an observable trade-off between seed mass and seed production rate. A critical evaluation of this rather contentious narrative has been conducted by Moles and Leishman in \cite[Ch.10]{Leck2008} with the overall conclusion that seed mass correlates with several effects at particular stages of the plant life cycle whose combined impact on species fitness is far from obvious. This may suggest that explicitly resolving additional life-cycle stages, such as propagules, could help relate models more closely to field observations.

\subsection{Environmental heterogeneity}\label{subsec:environmental_heterogeneity}

Another basic coexistence narrative works via spatiotemporal heterogeneity of the environment, where survival and colonization are moderated through local habitat properties. A precursor of this approach is the seminal work of Hanski and Ovaskainen (\cite{Ovaskainen2001}), who extend (\ref{eq:Levins}) to a ``spatially realistic Levins model'' of a single-species metapopulation on a finite set of macrosites $i\in\mathfrak{I}$:
\begin{equation}\label{eq:Hanski}
	\dot{x}^i = - c_{1,i} x^i + (1-x^{i})\sum_{i'} K^i_{i'}x^{i'}
\end{equation}
where $x^i$ is the expected value of site $i$ being occupied, the $c_{1,i}$ are mortality rates and the $K^i_{i'}$ are effective colonization rates from $i'$ to $i$. While (\ref{eq:Hanski}) only considers a single species and no explicit propagule dynamics, there are extensions to propagule disperser communities. An early example of this sort is the famous theory article \cite{Iwasa1986} by Iwasa and Roughgarden, where the authors find that species are able to coexist through environmental heterogeneity, with maximal biodiversity being bounded by the number of sites. A more recent contribution by Padmanabha et al. (\cite{Padmanabha2024}) additionally takes into account an explicit propagule migration network linking macrosites. Paraphrasing their model using the present notation, we consider species $n\in\mathfrak{N}$ inhabiting a metahabitat consisting of macrosites $b\in\mathfrak{B}$, each comprising several establishment microsites. We then index sessile concentrations $x^i$ as well as propagule concentrations $y^i$ with species-macrosite identifiers $i=(n,b)\in \mathfrak{N}\times\mathfrak{B} = \mathfrak{I}$. The dynamics of these state variables are formulated as a continuous-time Markov chain, and the authors derive its macroscopic continuum approximation
\begin{align}
	\dot{x}^i &= -c_{1,i} x^i + \left(1-\sum_{B(i')=B(i)} x^{i'}\right) c_{5,i} y^i \label{eq:Padovians_x}\\
	\dot{y}^i &= -c_{2,i} y^i + \sum_{N(i')=N(i)} d^i_{i'} c_{3,i'} x^{i'} - \left(c_{4,i}y^i - \sum_{N(i')=N(i)} d^i_{i'}c_{4,i'}y^{i'}\right) - c_{5,i} y^i \label{eq:Padovians_y}
\end{align}
where $c_{1,i}$ and $c_{2,i}$ are mortality rates of sessiles and propagules respectively\footnote{The rates $c_{2,i}$ are actually set to zero in \cite{Padmanabha2024}}, $c_{3,i}$ are propagule production rates, $c_{4,i}$ are rates at which propagules attempt migration between macrosites and $d^i_{i'}$ gives the proportion of propagules surviving this migration. Lastly, $c_{5,i}$ are rates at which propagules attempt to establish on a microsite, which always succeeds if the microsite is empty and kills the propagule otherwise (strict preemption). The authors then perform a quasi-static approximation by assuming the propagule dynamics to unfold on a faster time scale and thus recover a multi-species version of (\ref{eq:Hanski}) with some effective colonization kernel $(K^i_{i'})$. It is shown that stable coexistence generically occurs over a broad parameter regime.

Since environmental heterogeneity is ubiquitous in nature, this modeling approach appears rather attractive. However, it does not seem to provide a universal explanation for biodiversity as there are examples of biodiverse communities in reasonably homogeneous environments, as discussed in the above-mentioned grassland studies.

\subsection{Ecological neutrality}\label{subsec:ecological_neutrality}

Yet another famous, albeit controversial, coexistence narrative is the neutral theory of biodiversity, which posits that different species' trait profiles are equivalent insofar as they contribute to fitness, resulting in a stochastic Markov process of joint species abundances that unfolds purely through ``ecological drift''. The simplest formalization of this narrative is a Markov process governed by the same birth, death, and speciation propensities across all species (\cite{Hubbell2011}). In an ecologically neutral model, the observed biodiversity is not a property of some macroscopic equilibrium state but rather of the process's (quasi-)stationary state distribution, representing typical abundances with approximately stationary statistics over humanly observable time scales. Traditionally, neutral theory has been most useful as a null model for discerning statistically significant species differences. A fresh perspective has been given by Clark et al. (\cite{Clark2007}), who argue that it is not necessary to assume species to be equivalent, but merely to recognize that equivalent trait profiles are consistent with the highly uncertain estimates that can be obtained from observations. The differences in actual traits are likely masked by apparent stochasticity originating from sources such as unmodeled environmental degrees of freedom, low-dimensional noisy observations, and intraspecific trait variability. Nevertheless, they might leave a discernible trace in the shape of the quasi-stationary state distribution. Here, in particular, the effect of intraspecific trait variability seems somewhat underappreciated when it comes to the classical theory of propagule disperser models (\cite{Snell2019}). Indeed, it has been empirically demonstrated that propagule mass, as a correlate of establishment probability, can vary considerably within a species (\cite{Eriksson1999}) and seems to possess a heritable component (\cite{Pelabon2021}).

\section{Propagule disperser quasi-communities}\label{sec:model_class}

Having revisited three classical approaches for modeling stationary coexistence, we now set out to devise an overarching model class that combines all of their features. We consider a community of several propagule disperser species, each occurring in different trait-types, which inhabit a metahabitat subdivided into macrosites, each comprised of several establishment microsites. Sessiles reproduce by non-faithful inheritance of their trait-type into propagules which travel across macrosites and may at some point attempt to establish as new sessiles occupying a microsite. Note that non-faithful inheritance of trait-types effectively makes each species into a quasi-species in the sense of \cite{Page2002}, and we will henceforth use this term along with ``quasi-community'' for a community of quasi-species. We now develop this informal description into a stochastic process model.

\subsection{State space}

We classify individuals according to identifier triples  $(n,a,b)$ whose entries are labels of quasi-species $n\in\mathfrak{N}$, trait-type $a\in\mathfrak{A}_n$, and macrosite $b\in\mathfrak{B}$.\footnote{Fraktur letters such as $\mathfrak{N}, \mathfrak{A}, \mathfrak{B}$ etc. will generally refer to abstract index sets.} We then denote the set of all identifiers by
$\mathfrak{I} = \bigcup_{n\in\mathfrak{N}}\{n\}\times \mathfrak{A}_n \times \mathfrak{B}$
and define the canonical projections $N((n,a,b)):= n$, $A((n,a,b)):= a$, $B((n,a,b)):= b$. We introduce for each identifier $i\in\mathfrak{I}$ a state variable $X^i$, written with upstairs index $i$, representing the number of sessile individuals of quasi-species $N(i)$ and trait-type $A(i)$ that are established at (some microsite inside) macrosite $B(i)$, as well as a state variable $Y^i$ giving the analogous number of propagules present at macrosite $B(i)$. We further denote by $S_b$ the number of microsites contained in macrosite $b$.

In order to deal with state vectors and covectors, we set $\mathcal{V}:= \R^\mathfrak{I}$ and write $\{\ee_{i}\}_{i\in\mathfrak{I}}$ for the standard basis of $\mathcal{V}$ and $\{\ee^{i}\}^{i\in\mathfrak{I}}$ for its linear dual basis in $\mathcal{V}^\ast = \R_\mathfrak{I}$.\footnote{See Appendix \ref{app:linear_algebra_notations} for these and other linear algebra notations.} We collect the state variables into sessile resp. propagule states $X,Y\in\mathcal{V}$ and write the complete state as a block column vector 
$$\begin{pmatrix} X \\ Y \end{pmatrix} \in \mathcal{X}\times\mathcal{Y} \subset \mathcal{V}\oplus \mathcal{V}$$
with $\mathcal{X} = \set{X\in \mathcal{V}}{X \in \Z_{\geq 0}^\mathfrak{I},\ \forall b\in\mathfrak{B}: \sum_{B(i)=b} X^i \leq S_b}$ and $\mathcal{Y} = \set{Y\in \mathcal{V}}{Y \in \Z_{\geq 0}^\mathfrak{I}}$.

\subsection{Stochastic kinetics}

We set up the propagule disperser dynamics as a stochastic model, more specifically a continuous-time Markov chain, on the state space $\mathcal{X}\times\mathcal{Y}$. This is to say that the dynamics unfolds as a stochastic process whose trajectories are produced as sequences of random state jumps separated by exponentially distributed holding times. Each such jump event represents a physical event involving sessiles and propagules such as death, propagule dispersal, etc. We will refer to the distinct kinds of jump events as \emph{kinetic events} $k\in\mathfrak{K}$ since they govern abundances $X^i,Y^i$ in a way reminiscent of stochastic chemical kinetics. In order to specify the stochastic process, we need to prescribe for each kinetic event two items:
\begin{enumerate}
	\item \emph{Propensity $r_k:\mathcal{X}\times\mathcal{Y}\to \R_{\geq 0}$.} This specifies the probability that event $k$ happens in the next infinitesimal time interval. A particularity in this article is that all propensity functions are first-order mass-action kinetic, meaning that they correspond to physical events performed spontaneously by single $i$-sessile or $i$-propagule individuals. As a consequence, all propensities are of the form $r_k(X,Y) = c_k X^i$ or $r_k(X,Y) = c_k Y^i$ where the \emph{propensity coefficient} is the constant propensity that the event happens to a specific $i$-sessile or $i$-propagule.
	\item \emph{Random jump vector $V_k$.} This is a $(\mathcal{V}\oplus\mathcal{V})$-valued random variable that prescribes how the current state changes when event $k$ happens. More precisely, whenever $k$ happens, a realization of $V_k$ is drawn and added to the state vector. The distribution of $V_k$ may also depend on the current state.
\end{enumerate}

The set $\mathfrak{K}$ of kinetic events is subdivided into five groups:
\begin{itemize}[leftmargin=*]
	\item \emph{Sessile mortality events} $k_{1,i}$: We posit propensity coefficients $c_{1,i}\equiv c_{k_{1,i}}> 0$ for the death of $i$-sessiles, leading to the propensity and the constant jump vector
	$$r_{k_{1,i}}(X,Y)=c_{1,i}X^i\quad \text{and}\quad V_{k_{1,i}} = \begin{pmatrix} -\ee_i \\ 0
	\end{pmatrix}$$
	\item \emph{Propagule mortality events} $k_{2,i}$: In the same way as for $i$-sessiles, we also posit propensity coefficients $c_{2,i}\equiv c_{k_{2,i}}> 0$ for the death of  $i$-propagules, leading to the propensity and the jump vector
	$$r_{k_{2,i}}(X,Y)=c_{2,i}Y^i\quad \text{and}\quad V_{k_{2,i}} = \begin{pmatrix} 0 \\ -\ee_i
	\end{pmatrix}$$
	
	\item \emph{Propagule dispersal events} $k_{3,i}$: We posit propensity coefficients $c_{3,i}\equiv c_{k_{3,i}} > 0$ for $i$-sessiles to release a propagule, leading to the propensity
	$$r_{k_{3,i}}(X,Y) = c_{3,i} X^i$$
	Each such propagule may be of a different trait-type $a'$ and be dispersed to a different macrosite $b'$. The probability for such a change to happen, i.e. for the propagule to be of type  $i'=(N(i),a',b')$, is posited as $0\leq d^{i'}_{i}\leq 1$. Defined in this way, the matrix $(d^{i'}_{i})$ has to be column-substochastic  (allowing column sums smaller than 1 allows for propagules to be lost from the modeled macrosites during dispersal). Following this prescription, we obtain a random jump vector with realizations
	$$V_{k_{3,i}} = \begin{pmatrix} 0 \\ \ee_{i'}\end{pmatrix} \text{ with probability } d^{i'}_{i} \text{ and } V_{k_{3,i}} = \begin{pmatrix} 0 \\ 0	\end{pmatrix} \text{ with probability } 1-\sum_{i'}d^{i'}_{i}$$
	
	\item \emph{Propagule migration events} $k_{4,i}$: Beyond the initial dispersal, propagules may also migrate to a different macrosite later on, either actively or by being carried there. We posit propensity coefficients $c_{4,i}\equiv c_{k_{4,i}}\geq 0$ for $i$-propagules to embark on such a migration, leading to
	$$r_{k_{4,i}}(X,Y) = c_{4,i} Y^i$$
	and probabilities $0\leq \tilde{d}^{i'}_{i} \leq 1$ with which an $i$-propagule reaches macrosite $b'$, i.e. it becomes a propagule of type $i'=(N(i),A(i),b')$. Again, we note that $(\tilde{d}^{i'}_{i})$ is column-substochastic and obtain a random jump vector with realizations
	$$V_{k_{4,i}} = \begin{pmatrix} 0 \\ \ee_{i'} - \ee_i	\end{pmatrix} \text{ with probability } \tilde{d}^{i'}_{i} \text{ and } V_{k_{4,i}} = \begin{pmatrix} 0 \\ -\ee_i	\end{pmatrix} \text{ with probability } 1-\sum_{i'}\tilde{d}^{i'}_{i}$$
	
	\item \emph{Establishment events} $k_{5,i}$: Finally, we
	 posit propensity coefficients $c_{5,i}\equiv c_{k_{5,i}}\geq 0$ with which $i$-propagules attempt to establish at some randomly chosen microsite inside macrosite $B(i)$, leading to 
	 $$r_{k_{5,i}}(X,Y) = c_{5,i} Y^i$$
	 If the microsite is free, which it is with probability $1 - \sum_{B(i')=B(i)}\tfrac{X^{i'}}{S_{B(i)}}$, the propagule succeeds in establishing with some posited probability $p^i_0$. If the microsite is occupied by a sessile of type $i'=(n',a',B(i))$, which it is with probability $\tfrac{X^{i'}}{S_{B(i)}}$, the propagule succeeds with probability $p^i_0p^i_{i'}\leq p^i_0$ for some posited $0\leq p^i_{i'}\leq 1$. The resulting probability distribution of the random jump vector $V_{k_{5,i}}$ is given in Table \ref{tbl:mechanisms}.
\end{itemize}

We make another assumption on drift/dispersal probabilities $d_i^{i'}$ and migration probabilities $\tilde{d}_i^{i'}$, namely that any trait-type/macrosite combination can be reached from any other by a sequence of dispersal or migration links. The technical property is

\begin{ass}[irreducibility]\label{ass:block_irreducibility}
	  For any $i,i'\in\mathfrak{I}$ with $N(i)=N(i')$, there exists a sequence $i_0,\ldots,i_L\in\mathfrak{I}$ with $i_0=i$, $i_L=i'$ and  $d_{i_{l-1}}^{i_{l}}>0$ or $\tilde{d}_{i_{l-1}}^{i_{l}} > 0$ for any $l=1,\ldots,L$.
\end{ass}

\begin{table}[h]
	\begin{tabularx}{15.2cm}{|l|l|l|X|}
	\hline
	$k$ & parameters & $r_k$ & $V_k$ \\
	\hline\hline
	$k_{1,i}$ & $c_{1,i}\in (0,\infty)$ & $c_{1,i}X^i$ & $V_k = \begin{pmatrix} -\ee_i \\ 0 \end{pmatrix}$\\
	\hline
	$k_{2,i}$ & $c_{2,i}\in (0,\infty)$ & $c_{2,i}Y^i$ & $V_k = \begin{pmatrix} 0 \\ -\ee_i \end{pmatrix}$\\
	\hline
	$k_{3,i}$ & $\begin{aligned}
		&c_{3,i}\in (0,\infty)\\
		&(d^{i'}_i)\in [0,1]^\mathfrak{I},\\
		&d^{i'}_i = 0 \text{ for } N(i')\neq N(i)\\
		&\sum_{i'} d^{i'}_i \leq 1 
	\end{aligned}$ & $c_{3,i}X^i$ & $\begin{aligned}
		&\Pr\left[ V_k = \begin{pmatrix} 0 \\ \ee_{i'} \end{pmatrix} \right] = d^{i'}_i\\
		& \Pr\left[ V_k = \begin{pmatrix} 0 \\ 0 \end{pmatrix} \right] = 1 - \sum_{i'} d^{i'}_i
	\end{aligned}$\\
	\hline
	$k_{4,i}$ & $\begin{aligned}
		&c_{4,i}\in [0,\infty)\\
		&(\tilde{d}^{i'}_i)\in [0,1]^\mathfrak{I},\\
		&\tilde{d}^{i'}_i = 0 \text{ for } N(i')\neq N(i)\\
		&\sum_{i'} \tilde{d}^{i'}_i \leq 1 
	\end{aligned}$ & $c_{4,i}Y^i$ & $\begin{aligned}
		&\Pr\left[ V_k = \begin{pmatrix} 0 \\ \ee_{i'} - \ee_i \end{pmatrix} \right] = \tilde{d}^{i'}_i\\
		& \Pr\left[ V_k = \begin{pmatrix} 0 \\ -\ee_i \end{pmatrix} \right] = 1 - \sum_{i'} \tilde{d}^{i'}_i
	\end{aligned}$\\
	\hline
	$k_{5,i}$ & $\begin{aligned}
		&c_{5,i}\in (0,\infty)\\
		&p^i_0 \in [0,1]\\
		&(p^i_{i'}) \in [0,1]_\mathfrak{I}\\
		&p^i_{i'} = 0 \text{ for } B(i')\neq B(i)
	\end{aligned}$ & $c_{5,i}Y^i$ & $\begin{aligned}
		&\Pr\left[ V_k = \begin{pmatrix} \ee_i-\ee_{i'} \\ -\ee_{i} \end{pmatrix} \right] = p^i_0p^i_{i'}\tfrac{X^{i'}}{S_{B(i)}}\\
		&\Pr\left[ V_k = \begin{pmatrix} \ee_i \\ -\ee_i \end{pmatrix} \right] = p^i_{0}\left(1-\sum_{B(i')=B(i)}\tfrac{X^{i'}}{S_{B(i)}}\right)\\
		&\Pr\left[ V_k = \begin{pmatrix} 0 \\ -\ee_{i} \end{pmatrix} \right] = 1 - \sum_{i'} p^i_0p^i_{i'}\tfrac{X^{i'}}{S_{B(i)}} \\ 
		&\hspace{3.5cm} -p^i_0\left(1-\sum_{B(i')=B(i)}\tfrac{X^{i'}}{S_{B(i)}} \right)
	\end{aligned}$\\
	\hline
\end{tabularx}

	\caption{Overview of kinetic events in the stochastic model along with parameter domains (Assumption \ref{ass:block_irreducibility} is omitted from this table).} \label{tbl:mechanisms}
\end{table}

\subsection{Macroscopic approximation}

To investigate the large-abundance behavior of the dynamics, we approximate stochastic trajectories $t\mapsto(X(t),Y(t))^\transpose$ by deterministic trajectories $t\mapsto (x(t),y(t))^\transpose$ according to van Kampen's system-size expansion. More precisely, we pick some reference system size unit $S_0$ and write $\bar{S}_b = S_b/S_0$ for the relative size of macrosite $b$. Likewise we introduce concentration variables $x^i = X^i/S_0$, $y^i = Y^i/S_0$. Van Kampen's seminal article (\cite{Kampen1961}) shows how to construct a \emph{macroscopic vector field} $\bar{v}$ on $\mathcal{V}\times\mathcal{V}$ such that, for large $S_0$, the stochastic trajectories $(X(t),Y(t))^\transpose$ are approximated up to fluctuations of order $\sqrt{S_0}$ by a deterministic trajectory $(S_0 x(t), S_0 y(t))^\transpose$, where $(x(t),y(t))^\transpose$ solves
\begin{equation}\label{eq:macroscopic_equation}
	{\textstyle\begin{pmatrix}	x \\ y \end{pmatrix}}^{\boldsymbol{\cdot}} = \bar{v}(x,y),
\end{equation}

\begin{prop}[macroscopic vector field]\label{prop:macroscopic_vectorfield}
	The macroscopic vector field of the continuous-time Markov chain defined by the kinetic events $(r_k, V_k), k\in\mathfrak{K}$, takes the form
	\begin{equation}\label{eq:macroscopic_vectorfield}
		\bar{v}(x,y) = \begin{pmatrix}
			-\left(C_1 + \diag(P^\transpose \bar{S}^{-1}P_0 C_5 y)\right) & \left(\bar{S} - \diag((E - P)x)\right)\bar{S}^{-1}P_0C_5\\
			DC_3 & - (C_2 + (\One-\tilde{D})C_4 + C_5)
		\end{pmatrix}\begin{pmatrix} x \\ y \end{pmatrix}
	\end{equation}
	where $C_1,\ldots, C_5$ are diagonal matrices containing the propensity coefficients, $D$ and $\tilde{D}$ are matrices containing the dispersal/mutation and migration probabilities,  $P_0$ is a diagonal matrix containing the basal establishment probabilities, $P$ is a matrix containing the displacement probabilities, $\bar{S}$ is a diagonal matrix containing the relative macrosite sizes and $E$ is a matrix whose entries $E^i_{i'}$ are equal to $1$ whenever $B(i)=B(i')$ and $0$ otherwise.\footnote{Formal definitions of these parameter matrices can be found in Appendix \ref{app:macroscopic_vf}.}
\end{prop}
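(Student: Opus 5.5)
Our plan is to use the standard first-order formula from van Kampen's system-size expansion. The macroscopic vector field is the limit of the expected jump per unit time, rescaled by the system size:
\[
\bar{v}(x,y)=\lim_{S_0\to\infty}\frac{1}{S_0}\sum_{k\in\mathfrak{K}} r_k(S_0x,S_0y)\,\mathbb{E}[V_k\mid (S_0x,S_0y)] .
\]
Every propensity is first order, $r_k(S_0x,S_0y)=S_0\,c_k x^i$ or $S_0\,c_k y^i$. In the jump laws the state enters only through the ratios $X^{i'}/S_{B(i)}=x^{i'}/\bar{S}_{B(i)}$, which are invariant under rescaling. So the factor $S_0$ cancels exactly and no limit is actually needed. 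The proof therefore reduces to computing $r_k\,\mathbb{E}[V_k]$ for each of the five event groups, summing over $i$, and collecting the result into the $x$- and $y$-components. We then check that these agree with the block expression \eqref{eq:macroscopic_vectorfield}, with the parameter matrices as defined in Appendix \ref{app:macroscopic_vf}. Concretely, we take $(D)^{i'}_i=d^{i'}_i$, $(\tilde{D})^{i'}_i=\tilde{d}^{i'}_i$, $(P)^i_{i'}=p^i_{i'}$, and diagonal $C_\ell$, $P_0$, $\bar{S}$ with $\bar{S}^i_i=\bar{S}_{B(i)}$.

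The event groups contribute as follows.
\begin{itemize}
\item Sessile mortality $k_{1,i}$ gives $(-c_{1,i}x^i\ee_i,0)$; summed over $i$ this is $(-C_1x,0)$.
\item Propagule mortality $k_{2,i}$ gives $(0,-C_2y)$.
\item Dispersal $k_{3,i}$ gives $(0,\,c_{3,i}x^i\sum_{i'}d^{i'}_i\ee_{i'})$; summed this is $(0,DC_3x)$.
\item Migration $k_{4,i}$ has expected jump $\sum_{i'}\tilde{d}^{i'}_i\ee_{i'}-\ee_i$, since the $-\ee_i$ occurs with total probability one. Summed this gives $(0,-(\One-\tilde{D})C_4y)$.
\end{itemize}
These account for the lower block row and for the $-C_1$ entry.

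The establishment events $k_{5,i}$ carry the nonlinearity. The $y$-component of $V_{k_{5,i}}$ is always $-\ee_i$, which gives $-C_5y$ and completes the lower row. The $x$-component is $\ee_i-\ee_{i'}$ with probability $p^i_0p^i_{i'}x^{i'}/\bar{S}_{B(i)}$, is $\ee_i$ with probability $p^i_0\bigl(1-\sum_{B(i')=B(i)}x^{i'}/\bar{S}_{B(i)}\bigr)$, and is zero otherwise. Multiplying by $c_{5,i}y^i$, the coefficient of $\ee_i$ becomes
\[
c_{5,i}p^i_0y^i\Bigl(1-\sum_{i'}\tfrac{(E-P)^i_{i'}x^{i'}}{\bar{S}_{B(i)}}\Bigr),
\]
where we used $p^i_{i'}=0$ for $B(i')\neq B(i)$ so that $P$ is supported inside the blocks of $E$. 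In vector form this is $(\bar{S}-\diag((E-P)x))\bar{S}^{-1}P_0C_5y$, because diagonal matrices commute. The displacement losses $-\ee_{i'}$ contribute, in component $i'$,
\[
-x^{i'}\sum_i p^i_{i'}\,\bar{S}_{B(i)}^{-1}p^i_0c_{5,i}y^i=-x^{i'}\bigl(P^\transpose\bar{S}^{-1}P_0C_5y\bigr)_{i'} .
\]
In vector form this is $-\diag(P^\transpose\bar{S}^{-1}P_0C_5y)\,x$, which is the remaining term of the upper-left block.

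The main obstacle is the bookkeeping in this last step rather than any analytic difficulty. We must get the index placement of $P$ versus $P^\transpose$ right, use the block support of $P$ and $E$ consistently, and note that the nonlinear terms are written as state-dependent matrices acting on $(x,y)$. We should also remark that the justification for taking the expected jump rate as the macroscopic field is van Kampen's expansion, cited above. This is valid because the rates are of the density-dependent form $S_0 f(X/S_0)$. The irreducibility assumption (Assumption \ref{ass:block_irreducibility}) plays no role here.
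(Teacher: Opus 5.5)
Your proposal is correct and follows the same route as the paper's Appendix~\ref{app:macroscopic_vf}: because all events are first order, the field is $\sum_k r_k(x,y)\langle V_k\rangle$, which is evaluated event by event and regrouped into the parameter matrices. Your handling of the establishment term matches the paper's computation, including the use of $P$ being supported inside the blocks of $E$ and the collection of displacement losses through $P^\transpose$.
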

\begin{proof}[Proof (sketch, details in Appendix \ref{app:macroscopic_vf})]
	Since all events are of kinetic order one, the system-size expansion leads to the approximate mean dynamics
	$(\dot{x},\dot{y})^\transpose = \sum_k r_k(x,y) \langle V_k(X,Y)\rangle$, which may then be expressed through the parameter matrices as detailed in the appendix.
\end{proof}

In the stochastic model, event propensities and jump vectors are such that the dynamics naturally preserves the state space $\mathcal{X}\times\mathcal{Y}$. This property carries over to the macroscopic dynamics and the \emph{macroscopic state space} $\bar{\mathcal{X}}\times\bar{\mathcal{Y}}$ with $\bar{\mathcal{X}} := \set{x\in\mathcal{V}}{x\geq 0,\ \ee^i\cdot(E-P)x \leq \bar{S}_{B(i)} }$ and  $\bar{\mathcal{Y}} = \set{y \in \mathcal{V}}{y \geq 0}$, namely
\begin{cor}
	The flow of the macroscopic vector field $\bar{v}$ preserves $\bar{\mathcal{X}}\times\bar{\mathcal{Y}}$.
\end{cor}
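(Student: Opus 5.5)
The plan is to prove forward invariance with the classical Nagumo (sub-tangency) criterion for closed convex sets. The field $\bar{v}$ in (\ref{eq:macroscopic_vectorfield}) is polynomial (at most quadratic) in $(x,y)$, so it is locally Lipschitz and solutions are unique. $\bar{\mathcal{X}}\times\bar{\mathcal{Y}}$ is a closed convex polyhedron cut out by finitely many affine inequalities. For such a set it suffices to check that at every boundary point, for every constraint active there, the time derivative of that constraint function along $\bar{v}$ has the correct (non-strict) sign. Uniqueness lets us upgrade this weak inward-pointing condition to invariance, for instance by perturbing $\bar{v}$ to a strictly inward field $\bar{v}+\varepsilon w$ and letting $\varepsilon\to 0$. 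I would first record this general lemma and then check the three families of faces one at a time.

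\emph{Faces $y^i=0$.} Here $\dot y^i = (DC_3x)^i - (C_2+(\One-\tilde D)C_4+C_5)^i_{i'}y^{i'}$. The diagonal part vanishes because $y^i=0$. The off-diagonal part is $+\sum_{i'\neq i}\tilde d^i_{i'}c_{4,i'}y^{i'}\geq 0$, and $(DC_3x)^i\geq 0$ because $D$, $C_3$ and $x$ are all nonnegative. So $\dot y^i\geq 0$.

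\emph{Faces $x^i=0$.} The first block row evaluated at $x^i=0$ leaves only the term
\[
\dot x^i = \Big(1-\tfrac{\ee^i\cdot(E-P)x}{\bar S_{B(i)}}\Big)p^i_0c_{5,i}y^i .
\]
This is nonnegative precisely because $x\in\bar{\mathcal{X}}$ forces $\ee^i\cdot(E-P)x\leq \bar S_{B(i)}$. This explains why $\bar{\mathcal{X}}$ is defined by these constraints rather than by total occupancy: $1-\ee^i\cdot(E-P)x/\bar S_{B(i)}$ is exactly the macroscopic success probability of an $i$-establishment attempt.

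\emph{Faces $\ee^i\cdot(E-P)x=\bar S_b$ with $b=B(i)$.} This is the step I expect to be the main obstacle. Write $\phi_i(x)=\sum_{B(i')=b}(1-p^i_{i'})x^{i'}$, so that we need $\dot\phi_i\le 0$ whenever $\phi_i=\bar S_b$. I would expand $\dot\phi_i=\sum_{i'}(1-p^i_{i'})\dot x^{i'}$ event by event, mirroring the proof of Proposition \ref{prop:macroscopic_vectorfield}:
\begin{itemize}
\item mortality contributes $-\sum_{i'}(1-p^i_{i'})c_{1,i'}x^{i'}\le 0$;
\item an establishment of a $j$-propagule into an empty microsite contributes $(1-p^i_j)$ times the nonnegative rate $p^j_0c_{5,j}y^j(1-\sum_{B(i')=b}x^{i'}/\bar S_b)$;
\item a displacement of an $i'$-sessile by a $j$-propagule contributes $\big((1-p^i_j)-(1-p^i_{i'})\big)=p^i_{i'}-p^i_j$ times its rate $p^j_0p^j_{i'}c_{5,j}y^jx^{i'}/\bar S_b$.
\end{itemize}
The aim is to show that the positive contributions are dominated at the boundary. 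Since $\phi_i=\bar S_b$ and $\phi_i\le\sum_{B(i')=b}x^{i'}$, the empty-site fraction is forced to vanish or become nonpositive, which kills the colonization gains. What remains is to control the displacement terms with $p^i_{i'}>p^i_j$.

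My first attempt would be to pair terms and use the boundary identity together with the other constraints $\phi_j\le\bar S_b$ to absorb these terms. If this pairing does not close, I would check whether the intended definition of $P$ (see Appendix \ref{app:macroscopic_vf}) carries additional structure, for example $p^i_i=0$ or a consistency relation between rows, which makes the displacement terms sign-definite. The fallback is to follow the microscopic argument: the jump vectors in Table \ref{tbl:mechanisms} preserve $\mathcal{X}\times\mathcal{Y}$, and one would transfer this to $\bar{\mathcal{X}}\times\bar{\mathcal{Y}}$ through the system-size limit.
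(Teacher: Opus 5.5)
Your overall strategy is the same as the paper's one-line argument that $\bar v$ points inward at the boundary: a sub-tangency check on the closed polyhedron $\bar{\mathcal X}\times\bar{\mathcal Y}$, done face by face. Your computations on the faces $y^i=0$ and $x^i=0$ are correct.

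\textbf{The gap.} The third family of faces is left open, and the way you set it up cannot close it. On $\phi_i(x)=\bar S_b$ you use only the \emph{sign} of the empty-site factor to discard the colonization terms. You then try to bound the displacement terms on their own, but those terms are not sign-controlled by themselves. For example, suppose $p^i_i=0$ and $p^i_l=1$ for some $l\neq i$ in macrosite $b$. Then displacement of $l$-sessiles by $i$-propagules contributes $+p^i_0c_{5,i}y^ix^l/\bar S_b>0$. This is cancelled only by the matching part of the empty-site term, which is exactly what you threw away.

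\textbf{The missing step.} Use the boundary identity exactly, not just its sign:
\[
1-\frac{1}{\bar S_b}\sum_{B(l)=b}x^l=\frac{\phi_i(x)-\sum_{B(l)=b}x^l}{\bar S_b}=-\frac{1}{\bar S_b}\sum_{B(l)=b}p^i_lx^l .
\]
Substitute this into your event-by-event expansion and collect the $(j,l)$-terms, with all sums over $B(j)=B(l)=b$. This gives
\[
\dot\phi_i=-\sum_{l}(1-p^i_l)c_{1,l}x^l+\sum_{j,l}\frac{p^j_0c_{5,j}y^jx^l}{\bar S_b}\Big[(p^i_l-p^i_j)p^j_l-(1-p^i_j)p^i_l\Big].
\]
The bracket equals $-(1-p^i_j)\,p^i_l\,(1-p^j_l)-p^i_j\,p^j_l\,(1-p^i_l)\leq 0$. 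Hence $\dot\phi_i\le 0$ on that face, using only $x,y\ge 0$ and $p\in[0,1]$. You need neither the remaining constraints $\phi_j\le\bar S_b$ nor extra structure of $P$ such as $p^i_i=0$. With this inequality your Nagumo argument is complete.

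\textbf{The fallback would not work.} The rescaled microscopic constraint is total occupancy, $\ee^i\cdot Ex\le\bar S_{B(i)}$. By contrast, $\bar{\mathcal X}$ is cut out by the weaker condition $\ee^i\cdot(E-P)x\le\bar S_{B(i)}$. So $\bar{\mathcal X}$ is in general strictly larger than the rescaled $\mathcal X$. For instance, with a single type and $p^i_i=\tfrac12$, it is $[0,2\bar S_b]$ rather than $[0,\bar S_b]$. A system-size argument can at best give invariance of the occupancy-bounded subset. It says nothing about trajectories started in the rest of $\bar{\mathcal X}$, where the ``empty-site fraction'' is negative and has no microscopic counterpart.
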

\begin{proof}[Proof (sketch)]
	It is easy to see that the vector field $\bar{v}$ points inward at the boundary of $\bar{\mathcal{X}}\times\bar{\mathcal{Y}}$.
\end{proof}

As we will be mostly interested in equilibrium coexistence, we employ  Proposition \ref{prop:macroscopic_vectorfield} in order to simplify the equilibrium condition $\bar{v}(x,y) = 0$ as follows.
\begin{prop}[equilibrium condition]\label{prop:stationary}
	The solutions of $\bar{v}(x,y) = 0$ are determined by
	\begin{gather}
		\left(- C_1 - \diag(P^\transpose Kx) + (\bar{S} - \diag((E - P)x))K\right) x = 0 \label{eq:stationary_x}\\
		y = (P_0 C_5)^{-1} \bar{S} K x \label{eq:stationary_y}
	\end{gather}
	with the non-negative matrix $K = \bar{S}^{-1} P_0 C_5 (C_2 + (\One-\tilde{D})C_4 + C_5)^{-1} D C_3$.
\end{prop}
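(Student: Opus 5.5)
The plan is to treat $\bar{v}(x,y)=0$ as two block equations and eliminate the propagule variables $y$ using the second block row, which is linear in $(x,y)$. Write
$$M := C_2 + (\One-\tilde{D})C_4 + C_5 .$$
The second block row of (\ref{eq:macroscopic_vectorfield}) reads $DC_3x - My = 0$. So the first step is to show that $M$ is invertible with $M^{-1}\geq 0$ entrywise. That gives $y = M^{-1}DC_3 x$ as the unique solution in $y$ for given $x$.

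To prove the invertibility claim, I would show that $M$ is a nonsingular M-matrix.
\begin{itemize}
\item Its off-diagonal entries are $-\tilde{d}^{i'}_i c_{4,i}\leq 0$.
\item In column $i$, the diagonal entry is $c_{2,i} + c_{4,i}(1-\tilde{d}^i_i) + c_{5,i}$.
\item The absolute values of the off-diagonal entries in column $i$ sum to $c_{4,i}\sum_{i'\neq i}\tilde{d}^{i'}_i \leq c_{4,i}(1-\tilde{d}^i_i)$, by column-substochasticity of $\tilde{D}$.
\end{itemize}
Since $c_{2,i}>0$ (and also $c_{5,i}>0$), $M$ is strictly column diagonally dominant with positive diagonal and nonpositive off-diagonal. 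Hence $M$ is a nonsingular M-matrix and $M^{-1}\geq 0$.

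A concrete alternative route: write $M = \Delta(\One - Q)$ with $\Delta$ the positive diagonal part of $M$. Then $Q\geq 0$ has column sums $<1$, so the Neumann series $\sum_m Q^m$ converges and is nonnegative. This step is the only real obstacle. Everything after it is bookkeeping.

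Next, substitute into the first block row. This needs the combination $\bar{S}^{-1}P_0C_5\,y$, which becomes
$$\bar{S}^{-1}P_0C_5M^{-1}DC_3x = Kx .$$
The first block row is
$$-C_1x - \diag(P^\transpose \bar{S}^{-1}P_0C_5 y)\,x + \big(\bar{S}-\diag((E-P)x)\big)\bar{S}^{-1}P_0C_5\,y .$$
Substituting $\bar{S}^{-1}P_0C_5 y = Kx$ turns it into
$$-C_1x - \diag(P^\transpose Kx)\,x + \big(\bar{S}-\diag((E-P)x)\big)Kx .$$
Factoring out $x$ on the right gives exactly (\ref{eq:stationary_x}). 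Conversely, any $x$ solving (\ref{eq:stationary_x}), paired with $y=M^{-1}DC_3x$, solves $\bar{v}=0$. So the solution sets correspond bijectively.

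To bring $y$ into the stated form (\ref{eq:stationary_y}), note that $(P_0C_5)^{-1}\bar{S}K = M^{-1}DC_3$ whenever $P_0$ is invertible, i.e.\ $p^i_0>0$. I would remark that for $p^i_0=0$ one should read (\ref{eq:stationary_y}) as $y=M^{-1}DC_3x$, or equivalently assume $p^i_0>0$. Finally, $K$ is nonnegative because it is a product of the nonnegative diagonal matrices $\bar{S}^{-1}, P_0, C_5, C_3$, the nonnegative matrix $D$, and $M^{-1}\geq 0$.
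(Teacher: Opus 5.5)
Your proposal is correct and follows the paper's own argument. You eliminate $y$ via the linear propagule row, use strict column diagonal dominance of $M=C_2+(\One-\tilde{D})C_4+C_5$ to get a non-singular M-matrix with $M^{-1}\geq 0$, and substitute $\bar{S}^{-1}P_0C_5\,y=Kx$ into the sessile row. Your remark that (\ref{eq:stationary_y}) tacitly requires $p^i_0>0$ is a fair refinement that the paper leaves implicit.

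There is one small slip in your optional Neumann-series aside. With $M=\Delta(\One-Q)$, the column sums of $Q$ need not be below $1$, because row $i'$ is rescaled by $1/\Delta_{i'}$. Factor instead as $M=(\One-Q)\Delta$: then the column sums of $Q$ are $<1$ and $M^{-1}=\Delta^{-1}\sum_{m\geq 0}Q^m\geq 0$.
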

\begin{proof}[Proof (sketch, details in Appendix \ref{app:equilibrium_condition})]
	We exploit the fact that the propagule row in (\ref{eq:macroscopic_vectorfield}) is linear in $x$ and $y$, leading to (\ref{eq:stationary_y}) and subsequently also (\ref{eq:stationary_x}). Furthermore, as detailed in the appendix, $K$ is non-negative as the product of non-negative matrices and the inverse of a non-singular M-matrix.
\end{proof}

Notably, only the sessile equation (\ref{eq:stationary_x}) is a non-trivial condition on the existence of a non-zero equilibrium state.
A similar approach can be taken to derive an approximation of the  non-stationary macroscopic dynamics (\ref{eq:macroscopic_vectorfield}) in the quasi-static regime where the propagule dynamics is much faster than the sessile dynamics --- a plausible assumption in certain ecological scenarios. Specifically, the following limit is an accurate approximation of the sessile dynamics if the propensity coefficients $c_{2,i}, c_{3,i}, c_{4,i}$ are much larger than the $c_{1,i}$, $c_{5,i}$.

\begin{prop}[quasi-static limit]\label{prop:quasi-static}
	For $\kappa\in(0,1]$, consider (\ref{eq:macroscopic_vectorfield}) with propagule propensity coefficients $c_{m,i}$, $m=2,3,4$, replaced by rescaled coefficients $\tfrac1\kappa c_{m,i}$. In the limit $\kappa\to 0$, the sessile part of a solution to (\ref{eq:macroscopic_vectorfield})  converges uniformly on any bounded time interval to the corresponding solution of
	\begin{equation}
		\dot{x} = \left(- C_1 - \diag(P^\transpose\tilde{K}x) + (\bar{S} - \diag((E - P)x))\tilde{K}\right) x \label{eq:quasi-static}\\
	\end{equation}
	with the non-negative matrix $\tilde{K} = \bar{S}^{-1}P_0 C_5(C_2 + (\One-\tilde{D})C_4)^{-1} D C_3$.
\end{prop}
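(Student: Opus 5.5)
This is a singular perturbation statement, and I will prove it with Tikhonov's theorem, in the form covering fast variables that equilibrate exponentially onto a globally attracting linear slow manifold.

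\textbf{Setting up the fast–slow system.} After rescaling, the system reads
\begin{align*}
\dot x &= -\bigl(C_1+\diag(P^\transpose \bar S^{-1}P_0C_5 y)\bigr)x + \bigl(\bar S-\diag((E-P)x)\bigr)\bar S^{-1}P_0C_5\, y,\\
\kappa\dot y &= DC_3 x - \bigl(C_2+(\One-\tilde D)C_4 + \kappa C_5\bigr) y .
\end{align*}
Write $M_\kappa := C_2+(\One-\tilde D)C_4+\kappa C_5$. The matrix $(\One-\tilde D)C_4$ is a Z-matrix. Because $\tilde D$ is column-substochastic, $M_\kappa$ has positive diagonal, and it is strictly column diagonally dominant thanks to $C_2>0$. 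So $M_\kappa$ is a non-singular M-matrix for all $\kappa\in[0,1]$, with $M_\kappa^{-1}\ge 0$. Its eigenvalues have real part at least $\min_i c_{2,i}>0$, which can be seen via the Gershgorin column discs. Hence the fast subsystem $\kappa\dot y = DC_3x - M_\kappa y$, with $x$ frozen, has a unique exponentially stable equilibrium $y=M_\kappa^{-1}DC_3x$. The stability rate is uniform in $x$ and $\kappa$.

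\textbf{Reduced equation.} Substituting $y = M_0^{-1}DC_3x$ into the $x$-equation, and noting $\bar S^{-1}P_0C_5M_0^{-1}DC_3=\tilde K$, gives exactly (\ref{eq:quasi-static}). This uses $\diag(P^\transpose \bar S^{-1}P_0C_5y)=\diag(P^\transpose\tilde Kx)$ and the same identity for the gain term.

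\textbf{Convergence argument.} I will avoid quoting Tikhonov abstractly and instead give a direct Gronwall argument, which is feasible since the fast equation is linear. Set $z := y - M_\kappa^{-1}DC_3x$. Then
\[
\kappa\dot z = -M_\kappa z - \kappa M_\kappa^{-1}DC_3\dot x .
\]
By the Corollary the trajectory stays in $\bar{\mathcal X}\times\bar{\mathcal Y}$, so $x$ is bounded.

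The main obstacle is bounding $\dot x$ uniformly in $\kappa$, since $y$ is not a priori bounded independently of $\kappa$ by invariance alone. The bilinear $x$-equation is linear in $y$ with coefficients bounded on $\bar{\mathcal X}$, so it suffices to bound $y$. For this I use variation of constants on the $y$-equation: $y(t)=e^{-M_\kappa t/\kappa}y(0)+\int_0^t \kappa^{-1}e^{-M_\kappa(t-s)/\kappa}DC_3x(s)\,ds$. The bound $\|e^{-M_\kappa t}\|\le c\,e^{-\mu t}$ holds with $\mu>0$ uniform in $\kappa$, since $M_\kappa$ ranges over a compact set of stable matrices. This yields $\|y(t)\|\le c\|y(0)\|+c'\sup\|x\|$, uniformly in $\kappa$.

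Then $\|\dot x\|$ is bounded, and the $z$-equation gives $\|z(t)\|\le c\,e^{-\mu t/\kappa}\|z(0)\|+O(\kappa)$. Here $\int_0^T e^{-\mu t/\kappa}dt=O(\kappa)$. Also $M_\kappa^{-1}-M_0^{-1}=O(\kappa)$.

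Finally, write the $x$-equation as $\dot x = F(x,\,M_0^{-1}DC_3x) + G(x)\,(y-M_0^{-1}DC_3x)$, with $F,G$ locally Lipschitz. Let $\hat x$ solve (\ref{eq:quasi-static}) from the same $x(0)$. Gronwall on $[0,T]$ gives
\[
\sup_{[0,T]}\|x-\hat x\|\le e^{LT}\int_0^T\|G\|\bigl(\|z\|+O(\kappa)\bigr)\,dt = O(\kappa),
\]
since the initial layer contributes only $O(\kappa)$ after integration. This establishes uniform convergence of the sessile part on bounded time intervals, even without assuming the initial $y$ lies on the slow manifold. Only $y$ itself has a boundary layer, which is why the statement concerns the sessile part. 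Non-negativity of $\tilde K$ follows as in Proposition \ref{prop:stationary}, from $M_0^{-1}\ge0$.
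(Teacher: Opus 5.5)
Your proof is correct. It shares the paper's skeleton: rescale to $\kappa\dot y = DC_3x - M_\kappa y$, use that $M_0 = C_2+(\One-\tilde D)C_4$ is a strictly diagonally dominant non-singular M-matrix so that the frozen fast system has the globally attracting equilibrium $M_0^{-1}DC_3x$, and substitute this into the sessile equation. The difference is the final step. The paper drops the $\kappa C_5$ term in the fast equation, checks global stability of the adjoined system, and invokes Tikhonov's theorem. You replace that citation with a self-contained estimate that exploits the linearity of the fast equation and the fact that $e^{-M_\kappa t}$ decays uniformly in $\kappa$. Variation of constants gives a $\kappa$-uniform bound on $y$, hence on $\dot x$. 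The shifted variable $z = y - M_\kappa^{-1}DC_3x$ is then $O(\kappa)$ apart from an initial layer whose integrated contribution is $O(\kappa)$, and Gronwall closes the argument.

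What your route buys is an explicit $O(\kappa)$ rate. It also makes explicit two points the citation leaves implicit: the $\kappa$-dependence of the fast vector field, handled by $M_\kappa^{-1}-M_0^{-1}=O(\kappa)$, and the a priori boundedness of $y$ needed to control the slow vector field. You also avoid having to match the hypotheses of a particular version of Tikhonov's theorem. The paper's route is shorter and would survive nonlinear fast dynamics, e.g.\ the state-dependent propensities suggested in the Discussion. Your $z$-substitution and variation-of-constants bounds, by contrast, rely on the fast equation being linear with an $x$-independent matrix.

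One minor remark concerns the $\kappa$-uniform bound on $x$. The set $\bar{\mathcal X}$ as literally defined via $E-P$ can be unbounded in degenerate cases where some $p^i_{i'}=1$. It is cleaner to use $\{x\ge 0,\ Ex\le \bar S\}$, which is bounded. It is also invariant for both the full and the reduced flow, because the displacement terms cancel when $\dot x$ is summed over a macrosite.
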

\begin{proof}[Proof (sketch, details in Appendix \ref{app:quasi-static})]
	After substituting the rescaled propensity coefficients into the macroscopic vector field, it can be written as two coupled equations for sessile dynamics $\dot{x}$ and propagule dynamics $\dot{y}$. The latter can be brought into the form of a singularly perturbed differential equation, namely $\kappa\dot{y} = F(y; x,\kappa)$, where $F$ is affine-linear in $y$. Tikhonov's singular perturbation theorem then asserts that, in the limit $\kappa\to 0$, the sessile part of a solution to the coupled $(\dot{x},\dot{y})$-system converges uniformly on bounded time intervals to the solution of the sessile equation coupled to the algebraic constraint $F(y;x,0) = 0$, given that the latter determines a globally stable solution to $\dot{y} = F(y;x,0)$ for any $x$. This condition is verified in the appendix. Solving the algebraic constraint for $y$ and substituting this into the sessile equation yields (\ref{eq:quasi-static}).
\end{proof}

\subsection{Classical cases}\label{subsec:special_case_equations}
We demonstrate that the classical models revisited in Section \ref{sec:coexistence_mechanisms} can be obtained as special cases of the propagule disperser model class.

\begin{itemize}[leftmargin=*, topsep=4pt, itemsep=4pt]
	\item \textbf{Displacement competition:} Since the displacement competition models from Section \ref{subsec:displacement_competition} do not model propagules explicitly but instead regard their dynamics as instantaneous, we apply the quasi-static approximation (\ref{eq:quasi-static}). Furthermore, those models only consider a single trait-type per species and only a single macrosite of unit size, i.e. $|\mathfrak{A}|=1=|\mathfrak{B}|$, $D = \One = \tilde{D}$, $\bar{S}=\One$. Thus we get (\ref{eq:quasi-static}) with $E$ being a matrix of ones and the diagonal matrix $\tilde{K} = P_0 C_5 C_2^{-1} C_3$. Grabbing the $i$-th row and rearranging terms yields
	$$\dot{x}^i = -c_{1,i}x^i + \tilde{K}^i_i x^i (1-\sum_{i'}x^{i'}) + x^i\sum_{i'} \tilde{K}^i_i P^i_{i'} x^{i'} - P^{i'}_i \tilde{K}^{i'}_{i'} x^{i'}$$
	which is evidently of the same form as (\ref{eq:Calcagno}) with specific mortality rates $c_{1,i}$ and colonization attempt rates $\tilde{K}^i_i$ and displacement probabilities $P^i_{i'}$.
	
	\item \textbf{Environmental heterogeneity:} Since the models from Section \ref{subsec:environmental_heterogeneity} consider just a single trait-type per species, the matrices $D,\tilde{D}$ merely encode propagule dispersal and migration probabilities among macrosites. While these two sets of probabilities might come from different ecological mechanisms, the model (\ref{eq:Padovians_x},\ref{eq:Padovians_y}) restricts attention to $D=\tilde{D}$ as the source article \cite{Padmanabha2024} is framed around ``explorer'' propagules without a distinct initial dispersal mechanism. Furthermore, all macrosites have unit size, i.e. $\bar{S}=\One$, there is no displacement competition, i.e. $P=0$, and any propagule that attempts to establish in an empty microsite also succeeds, i.e $P_0=\One$. This simplifies the macroscopic dynamics (\ref{eq:macroscopic_vectorfield}) to
	\begin{equation}\label{eq:spacial_heterogeneity_MFA}
		\begin{pmatrix}	x \\ y \end{pmatrix}^{\boldsymbol{\cdot}} = \begin{pmatrix}
			-C_1 & (\One - \diag(Ex)) C_5\\
			\tilde{D}C_3 & - (C_2 + (\One-\tilde{D})C_4 + C_5)
		\end{pmatrix}\begin{pmatrix} x \\ y \end{pmatrix}
	\end{equation}
	Again grabbing the $i$-th row of sessiles and propagules  leads to
	\begin{align*}
		\dot{x}^i &= -c_{1,i} x^i + \left(1-\sum_{B(i')=B(i)} x^{i'}\right) c_{5,i} y^i \\
		\dot{y}^i &= -c_{2,i} y^i + \sum_{N(i')=N(i)} \tilde{d}^i_{i'} c_{3,i'} x^{i'} - \left(c_{4,i}y^i - \sum_{N(i')=N(i)} \tilde{d}^i_{i'}c_{4,i'}y^{i'}\right) - c_{5,i} y^i
	\end{align*}
	which is precisely (\ref{eq:Padovians_x},\ref{eq:Padovians_y}).
	
	\item \textbf{Ecological neutrality:} The essential claim of ecological neutrality is that species' trait profiles are equivalent insofar as they contribute to fitness. In order to later discuss what parametric conditions this entails in the present quasi-community setting, we specialize the propagule disperser model class, considering an environment consisting of a single macrosite of unit size, i.e. $|\mathfrak{B}| = 1$ and $\bar{S}=\One$, implying that $\tilde{D}=0$ and the matrix $D$ only models trait inheritance. We also assume there is no displacement competition, i.e. $P=0$, and may thus simplify (\ref{eq:macroscopic_vectorfield}) to
	\begin{equation}\label{eq:trait-drift_equation}
		\begin{pmatrix}	x \\ y \end{pmatrix}^{\boldsymbol{\cdot}} = \begin{pmatrix}
			-C_1 & (1 - \one^\transpose x)P_0 C_5\\
			DC_3 & - (C_2 + C_4 + C_5)
		\end{pmatrix}\begin{pmatrix} x \\ y \end{pmatrix}
	\end{equation}
	where $\one\in\mathcal{V}$ denotes a vector of ones. We will call (\ref{eq:trait-drift_equation}) the \emph{trait-drift model}.
\end{itemize}

\section{Equilibrium coexistence}\label{sec:coexistence_results}

In this section we turn to equilibrium states of the macroscopic propagule disperser dynamics (\ref{eq:macroscopic_equation},\ref{eq:macroscopic_vectorfield}) and investigate whether there are general conditions for coexistence, i.e. for positive concentrations of more than one quasi-species at equilibrium. Conditions of this sort have been given in some classical special cases such as Tilman's displacement competition model (\cite{Tilman1994}). As the main result of this article, we find that equilibrium coexistence is determined through the spectrum of the Jacobian $\dd \bar{v}(0,0)$ at the zero equilibrium. In particular, it turns out that the existence of an equilibrium with a given coexistence coalition can be determined without computing the equilibrium state itself.

\subsection{The trait-drift model}
As a warm-up, we first investigate the trait-drift model (\ref{eq:trait-drift_equation}), which is the only special case from Section \ref{subsec:special_case_equations} not previously treated in the literature. The equilibrium condition is obtained by specializing (\ref{eq:stationary_x}), which can be brought into the form
\begin{equation}\label{eq:trait-drift_EV_equation}
	C_1^{-1}K x = (1-\one^\transpose x)^{-1} x
\end{equation}
We observe that all parameter matrices (except for $P$ which is $0$ in this case) are $\mathfrak{N}$-block-diagonal.\footnote{See Appendix \ref{app:block_notation} for this terminology and the following block vector/matrix notations.} Therefore, also $C_1^{-1}K$ is $\mathfrak{N}$-block-diagonal, with every principal $n$-block corresponding to a quasi-species. In Appendix \ref{app:trait-drift_spectrum} we invoke the Perron--Frobenius theorem to show that, for any $n\in\mathfrak{N}$, the dominant Eigenvalue\footnote{i.e. the Eigenvalue with maximum real part} $r_n$ of $[C_1^{-1}K]^n_n$ is positive, real, and simple, and its one-dimensional Eigenspace is spanned by an Eigenvector $\rho_n$ whose $n$-block entries can be chosen positive with unit sum, while all its other entries are zero. Now let $\mathcal{R}:=\set{r\in\R}{r>1,\ \exists n:\ r = r_n}$ and, for any $r\in\mathcal{R}$, define the simplex $\bar{\mathcal{X}}_{0,r} \subset \bar{\mathcal{X}}$ as the convex hull of $\set{(1-\tfrac{1}{r})\rho_n}{n\in\mathfrak{N},\ r_n = r}$ and the corresponding full-state simplex as $$\bar{\mathcal{M}}_{0,r} := \set{\left(x, (P_0 C_5)^{-1}K x\right)\in \bar{\mathcal{X}}\times\bar{\mathcal{Y}}}{x \in \bar{\mathcal{X}}_{0,r}}$$

\begin{prop}\label{prop:trait-drift}
	If $\mathcal{R} = \varnothing$, the only non-negative equilibrium of (\ref{eq:trait-drift_equation}) is the zero state and it is stable. Otherwise the zero state is an unstable equilibrium and, for every $r\in\mathcal{R}$, the simplex $\bar{\mathcal{M}}_{0,r}$ consists of non-zero non-negative equilibria. All equilibria in  $\bar{\mathcal{M}}_{0,r}$, $r<\max(\mathcal{R})$, are unstable.
\end{prop}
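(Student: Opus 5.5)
The plan is to reduce everything to Perron--Frobenius statements about the $\mathfrak{N}$-block-diagonal non-negative matrix $M:=C_1^{-1}K$, as in Appendix \ref{app:trait-drift_spectrum}. For the trait-drift model we have $\bar{S}=\One$, $P=0$ and $E=\one\one^\transpose$. Proposition \ref{prop:stationary} then reduces to (\ref{eq:trait-drift_EV_equation}) together with $y=(P_0C_5)^{-1}Kx$.

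\emph{Step 1 (classification of equilibria).} Let $x\geq 0$ be a non-zero equilibrium and write $s=\one^\transpose x$. Since $x\in\bar{\mathcal{X}}$ we have $s\leq 1$, and $s=1$ is excluded because (\ref{eq:stationary_x}) would then give $C_1x=0$. Hence $Mx=\lambda x$ with $\lambda=(1-s)^{-1}>1$. Because $M$ is block-diagonal, each non-zero block $x_n\geq 0$ is a non-negative eigenvector of the irreducible block $[M]^n_n$; irreducibility follows from Assumption \ref{ass:block_irreducibility} and is already used in the appendix. By Perron--Frobenius, the only non-negative eigenvector of an irreducible non-negative matrix is the Perron vector. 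Therefore $\lambda=r_n$ and $x_n=\alpha_n\rho_n$ with $\alpha_n>0$, and all non-zero blocks share the same value $r:=\lambda\in\mathcal{R}$. The unit-sum normalisation of the $\rho_n$ gives $\sum_n\alpha_n=s=1-\tfrac1r$, so $x\in\bar{\mathcal{X}}_{0,r}$. Conversely, every convex combination of the points $(1-\tfrac1r)\rho_n$ with $r_n=r$ satisfies (\ref{eq:trait-drift_EV_equation}), so every point of $\bar{\mathcal{M}}_{0,r}$ is a non-zero non-negative equilibrium. If $\mathcal{R}=\varnothing$, no non-zero non-negative equilibrium exists.

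\emph{Step 2 (stability of zero).} The Jacobian at $0$ is the Metzler matrix
\[J_0=\begin{pmatrix}-C_1 & P_0C_5\\ DC_3 & -(C_2+C_4+C_5)\end{pmatrix}.\]
Its spectral abscissa is negative iff $-J_0$ is a non-singular M-matrix. Since the lower-right block $C_2+C_4+C_5$ is a positive diagonal matrix, this holds iff the Schur complement $C_1-K$ is a non-singular M-matrix, i.e.\ iff $\rho(C_1^{-1}K)=\max_n r_n<1$. Thus some $r_n>1$ makes $0$ unstable. If all $r_n<1$, then $0$ is linearly stable. The borderline case $\max_n r_n=1$ (still $\mathcal{R}=\varnothing$) needs a separate argument. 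I would handle it with a linear Lyapunov function: take $w^\transpose=(u^\transpose,\,u^\transpose P_0C_5(C_2+C_4+C_5)^{-1})$ with $u$ the positive left Perron vectors of $C_1-K$ blockwise. Along trajectories, $w^\transpose\dot{(x,y)}=-(\one^\transpose x)\,u^\transpose P_0C_5y\leq 0$, and LaSalle's principle should then finish.

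\emph{Step 3 (instability for $r<\max\mathcal{R}$).} Fix an equilibrium $(x^*,y^*)\in\bar{\mathcal{M}}_{0,r}$ and a species $m$ with $r_m=\max\mathcal{R}>r$; necessarily $x^*_m=0=y^*_m$. The Jacobian is
\[J=\begin{pmatrix}-C_1-P_0C_5y^*\one^\transpose & \tfrac1r P_0C_5\\ DC_3 & -(C_2+C_4+C_5)\end{pmatrix}.\]
The rank-one term $P_0C_5y^*\one^\transpose$ has non-zero rows only outside block $m$. Hence the $m$-block rows of $J$ involve only $m$-block variables, so $J$ is block-triangular with respect to the split ($m$-block, rest). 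Consequently the spectrum of $J$ contains that of
\[J_m=\begin{pmatrix}-C_{1,m} & \tfrac1r (P_0C_5)_m\\ (DC_3)_m & -(C_2+C_4+C_5)_m\end{pmatrix}.\]
Applying the Schur-complement criterion of Step 2 to $J_m$ shows that its spectral abscissa is positive iff $\rho(\tfrac1r[M]^m_m)=r_m/r>1$, which holds. So $J$ has an eigenvalue with positive real part, and the equilibrium is unstable.

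The main technical point I expect is the M-matrix/Schur-complement equivalence linking the sign of the spectral abscissa of these Metzler Jacobians to $\rho(C_1^{-1}K)$ (or its blocks), together with the degenerate case $r_n=1$ in Step 2. The rest follows directly from Perron--Frobenius uniqueness of non-negative eigenvectors.
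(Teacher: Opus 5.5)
Your proof is correct. Step 1 is the same Perron--Frobenius classification the paper uses, but Steps 2 and 3 take a genuinely different route.

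\emph{Stability of the zero state.} The paper reduces the Jacobian eigenproblem to $\bigl(-(C_1+s)+K_s\bigr)\sigma_x=0$. It then uses the monotonicity $0<K_s<K$ and an intermediate-value argument to produce a positive real eigenvalue when some $z_0^n>0$. A second continuity argument in $r\mapsto -C_1+\tfrac1r K$ is needed to convert $z_0^n>0$ into $r_n>1$. Your Schur-complement criterion for the Metzler matrix $J_0$ links stability to $\rho(C_1^{-1}K)$ in one step.

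\emph{Borderline case.} The paper does a one-dimensional center-manifold reduction for each critical block and shows $\alpha''(0)<0$. This in effect gives stability only from the orthant side. It also treats critical blocks one at a time, although the center manifold is multi-dimensional when several $r_n=1$. Your linear copositive Lyapunov function handles all blocks at once on the forward-invariant state space. Combined with LaSalle, it even gives attraction of the whole state space.

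\emph{Subdominant simplices.} The paper explicitly constructs an eigenvector of the full Jacobian. It takes the Perron vector of an invader block at the $s_1>0$ where that block's spectral radius equals $r$, adds a resolvent solve on the resident block, and normalizes to $\one^\transpose\sigma_x=1$. You instead observe that the rank-one feedback $P_0C_5y^*\one^\transpose$ only enters rows where $y^*\neq 0$. So the Jacobian is block-triangular, and its spectrum contains that of the invader's linearization $J_m$ at free space $1/r$. This gives the same eigenvalue $s_1$ without building the eigenvector, and it makes the invasion criterion $r_m/r>1$ explicit.

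\emph{Two details to tighten.}
\begin{enumerate}
\item As stated, your chain ``$\mu(J_0)<0$ iff $-J_0$ is a non-singular M-matrix iff $\rho<1$'' only yields $\mu(J_0)\geq 0$ when $\rho>1$. Instability needs $\mu>0$. One fix is the version for possibly singular M-matrices: $\mu\leq 0$ iff the Schur complement is an M-matrix iff $\rho\leq 1$. Alternatively, apply your criterion to $J_0-\epsilon\One$, whose Schur complement is $(C_1+\epsilon)-K_\epsilon$, and use continuity of $\rho\bigl((C_1+\epsilon)^{-1}K_\epsilon\bigr)$ for small $\epsilon>0$. The same remark applies to $J_m$ in Step 3.
\item The identity $\tfrac{\dd}{\dd t}\,w^\transpose(x,y)^\transpose=-(\one^\transpose x)\,u^\transpose P_0C_5y$ holds only if every block is critical. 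On a block with $r_n<1$, the left Perron vector satisfies $u_n^\transpose[C_1-K]^n_n=\lambda_n u_n^\transpose$ with $\lambda_n>0$. This adds a term $-\sum_n\lambda_n u_n^\transpose[x]_n\leq 0$, which leaves the argument intact.
\end{enumerate}
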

\begin{proof}[Proof (sketch, details in Appendix \ref{app:trait-drift_proof})]
	In order to establish the $\bar{\mathcal{M}}_{0,r}$ as sets of non-zero equilibria, one may observe that any non-zero non-negative solution $x$ of (\ref{eq:trait-drift_EV_equation}) has to be a Perron--Frobenius Eigenvector of $C_1^{-1}K$ and thus it can have non-zero entries in two distinct $n$-blocks only if the respective principal blocks of $C_1^{-1}K$ have the same dominant Eigenvalue $r$. Further, the magnitude of $x$ has to be chosen such that $r = (1 - \one^\transpose x)^{-1}$, which is possible for non-negative $x$ if and only if $r>1$. Hence the set of such $x$ is precisely $\bar{\mathcal{X}}_{0,r}$.
	To prove the stability claims, we need to consider the Jacobian $\dd\bar{v}(x,y)$. For the zero equilibrium, it turns out that $\dd\bar{v}(0,0)$ has non-negative off-diagonal entries and thus its dominant Eigenvalue is real (see Lem.\ref{lem:essentially-non-negative} in Appendix \ref{app:non-negative}). One may then show by a monotonicity/continuity argument, that the dominant Eigenvalue is positive resp. negative iff $\max(\mathcal{R})$ is larger resp. smaller than 1, implying that the zero equilibrium is linearly stable iff $\max(\mathcal{R})<1$. A special case occurs if $\max(\mathcal{R})=1$, meaning that the zero equilibrium lies on a center manifold of the dynamics and stability is not fully determined by linear stability analysis (see Appendix \ref{app:equilibria}). Restricting the dynamics to the center manifold, however, reveals that the zero equilibrium is still nonlinearly stable. Turning to non-zero non-negative equilibria, it remains to be shown that the equilibria in $\bar{\mathcal{M}}_{0,r}$ are unstable for any non-maximal $r\in\mathcal{R}$. This is accomplished by explicitly constructing a positive Eigenvalue of $\dd\bar{v}(x,y)$ whose Eigenvector perturbs $(x,y)^\transpose$ in the direction of  $\bar{\mathcal{M}}_{0,\max(\mathcal{R})}$.
\end{proof}

The above proposition conforms with a common intuition under single-resource limitation, namely that a resident species $n$ will be replaced by an invader if the amount of resource left unused at resident equilibrium, here the amount of unoccupied space $1/r_n$, allows positive growth of the invader (\cite{Tilman1994}). Here we extend this intuition to the trait-drift dynamics of propagule disperser quasi-species. 

The reader may have noticed that Proposition \ref{prop:trait-drift} makes no stability claim about the dominant equilibrium simplex $\bar{\mathcal{M}}_{0,\max(\mathcal{R})}$. Such an assessment is indeed considerably more intricate in the quasi-species setting than for conventional species communities due to the possibility of non-trivial dynamics among a dominant quasi-species' trait-types. However, we note a special case which is relevant to the discussion of ecological neutrality, namely where the dominant quasi-species' effective mortalities are all identical:
\begin{prop}\label{prop:trait-drift_special}
	Suppose that parameters of the trait-drift model (\ref{eq:trait-drift_equation}) are such that $\mathcal{R}\neq \varnothing$ and, for any $i,i'\in\mathfrak{I}$ with $r_{N(i)} = r_{N(i')} = \max(\mathcal{R})$,
	$$c_{1,i} = c_{1,i'}\quad \text{ and }\quad c_{2,i} + c_{4,i} + c_{5,i} = c_{2,i'} + c_{4,i'} + c_{5,i'}$$
	Then the dominant equilibrium simplex $\bar{\mathcal{M}}_{0,\max(\mathcal{R})}$ is a neutrally-stable attractor.
\end{prop}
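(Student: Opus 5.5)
The plan is to reduce the dominant part of the dynamics to one planar system that all dominant quasi-species share. Let $\mathfrak{N}^\ast=\set{n\in\mathfrak{N}}{r_n=\max(\mathcal{R})}$, and write $c$ and $\gamma$ for the common values of $c_{1,i}$ and of $c_{2,i}+c_{4,i}+c_{5,i}$ on $N^{-1}(\mathfrak{N}^\ast)$. Put $G:=P_0C_5DC_3$, which is $\mathfrak{N}$-block-diagonal. Specializing $K$ from Proposition \ref{prop:stationary} gives $[C_1^{-1}K]^n_n=(c\gamma)^{-1}[G]^n_n$ for $n\in\mathfrak{N}^\ast$. Hence every $\rho_n$, $n\in\mathfrak{N}^\ast$, is a Perron--Frobenius Eigenvector of $G$ with one and the same Eigenvalue $\mu^\ast=c\gamma\max(\mathcal{R})$. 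Set $s(t):=1-\one^\transpose x(t)$. On the dominant blocks, (\ref{eq:trait-drift_equation}) then reads
$$\dot x=-cx+s\,P_0C_5\,y,\qquad \dot y=DC_3x-\gamma y .$$
The key observation is that for each $n\in\mathfrak{N}^\ast$ the two-dimensional subspace spanned by $(\rho_n,0)$ and $(0,DC_3\rho_n)$ is invariant for every value of $s$. On it, coordinates $(\alpha_n,\beta_n)$ obey
$$\dot\alpha_n=-c\alpha_n+s\mu^\ast\beta_n,\qquad \dot\beta_n=\alpha_n-\gamma\beta_n .$$
This linear system is identical for all $n\in\mathfrak{N}^\ast$, and the coupling runs only through the common scalar $s(t)$. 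Consequently the ratios $\alpha_n:\beta_n:\alpha_m:\beta_m$ among dominant Perron components are conserved once the shape of the solution is fixed. This conservation is exactly the neutral direction along the simplex $\bar{\mathcal{M}}_{0,\max(\mathcal{R})}$.

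The proof then proceeds in three steps.

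\emph{Step 1 (non-Perron directions decay).} Decompose the state on each block $n$ using the left Perron Eigenvector $\ell_n>0$ of $[G]^n_n$, which exists by Appendix \ref{app:trait-drift_spectrum}. The Perron component is $(\ell_n\cdot x_n,\ \ell_n\cdot y_n)$, and a complementary part is left over. For fixed $s$, the block matrix $A_n(s)$ has non-negative off-diagonal entries. Its Eigenvalues $\lambda$ satisfy $(\lambda+c_\cdot)(\lambda+\gamma_\cdot)=s\mu$, with $\mu$ ranging over Eigenvalues of the block of $G$. The dominant one is therefore real and simple, by Lemma \ref{lem:essentially-non-negative} and the Perron--Frobenius theorem. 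Every other Eigenvalue has strictly smaller real part, uniformly for $s$ in a compact subset of $(0,1]$. I would use this spectral gap to show that, inside each dominant block, the complementary part decays relative to the Perron component.

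\emph{Step 2 (non-dominant quasi-species die out).} For $m\notin\mathfrak{N}^\ast$, use the Lyapunov-type functional $\ell_m\cdot x_m+\theta_m\,\ell_m\cdot y_m$ with a suitable weight $\theta_m>0$. Its growth rate is governed by the dominant Eigenvalue of $A_m(s)$. As in the proof of Proposition \ref{prop:trait-drift}, this Eigenvalue is negative whenever $s\,r_m<1$, and in particular near $s=1/\max(\mathcal{R})$ because $r_m<\max(\mathcal{R})$.

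\emph{Step 3 (total dominant mass converges).} The aggregated variables $\bar\alpha=\sum_{n\in\mathfrak{N}^\ast}\alpha_n$ and $\bar\beta=\sum_{n\in\mathfrak{N}^\ast}\beta_n$ satisfy the same planar system. Using $\one^\transpose\rho_n=1$, one has $s=1-\bar\alpha$ up to the decaying terms from Steps 1 and 2. This is a Levins-type planar system, cooperative and with a bounded forward-invariant region. Its positive equilibrium is $\bar\alpha=1-1/\max(\mathcal{R})$, and I would show it is asymptotically stable, either by linearization or by a monotone-systems argument. Combining Steps 1 to 3 with the conserved ratios, every solution starting near the simplex converges to a point of $\bar{\mathcal{M}}_{0,\max(\mathcal{R})}$. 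That point is generally different from the nearest point of the simplex, which is exactly neutral stability plus attractivity.

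The main obstacle is that $s(t)$ is time-varying, so the fixed-$s$ spectral arguments of Steps 1 and 2 do not apply directly. Decay of the complementary and non-dominant parts must be established together with convergence of $s$, and each depends on the other. I would first close this loop by a local perturbation argument near the simplex. Work in coordinates consisting of the dominant Perron components, the complementary components and the non-dominant components. Linearize at a point of the simplex, check that the only neutral Eigenvalues correspond to the tangent directions of the simplex and all others have negative real part, and then invoke normal hyperbolicity of the equilibrium manifold. If that route proves messy, the fallback is a Grönwall/comparison estimate. Freeze $s$ in a small interval around $1/\max(\mathcal{R})$, which the planar dynamics enters and stays in, and bound the transversal components there by exponentials with a uniform negative rate.
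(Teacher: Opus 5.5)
Your argument is sound once you close it by the route you name at the end: linearize at each point of the simplex, then invoke normal hyperbolicity. That route has the same skeleton as the paper's proof, since both verify the sufficient criterion of Appendix \ref{app:equilibria} at every point of $\bar{\mathcal{M}}_{0,\max(\mathcal{R})}$. The difference is in how the Jacobian spectrum is controlled.

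The paper works with the Eigenvalue equation directly. It shows that the non-dominant blocks of any Eigenvector with $\Re(s)\geq 0$ vanish. It then reduces the rest, via the matrix determinant lemma, to $1+r\,\one^\transpose(C_1+s-\tfrac1r K_s)^{-1}C_1x=0$. The equal-rate hypothesis enters only through $K_s=\tfrac{\tilde c_2}{\tilde c_2+s}K$, giving $1+(r-1)\tilde c_1 f(s)^{-1}=0$ with $\Re f(s)>0$, which is impossible. The case $s=0$ is handled with the left Perron vector.

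You use the hypothesis structurally instead. Write $A(s)$ for the linear part with $1-\one^\transpose x$ frozen at $s$. The planes spanned by $(\rho_n,0)$ and $(0,DC_3\rho_n)$ are invariant under $A(s)$ for every $s$, and they have an $s$-independent invariant complement. You should state explicitly the fact that makes the Jacobian block-triangular, since it is the whole content of your ``check''. At an equilibrium, $P_0C_5y^\ast=Kx^\ast=cr\,x^\ast$. Hence the only Jacobian term beyond $A(1/r)$, namely $-(\one^\transpose\sigma_x)(P_0C_5y^\ast,0)$, takes values in the Perron planes.

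With $x^\ast=\sum_n a_n\rho_n$, the spectrum then splits into four pieces:
\begin{itemize}
\item the non-dominant blocks of $A(1/r)$, which are negative since $r_m<r$;
\item the complements, which are negative by your Step 1 gap, the Perron root being $0$;
\item the aggregated coordinates $(\sum_n\alpha_n,\sum_n\beta_n)$, with matrix $\begin{pmatrix}-cr & c\gamma\\ 1 & -\gamma\end{pmatrix}$, whose trace is negative and determinant is $c\gamma(r-1)>0$;
\item the invariant subspace $\sum_n\alpha_n=\sum_n\beta_n=0$, on which each plane evolves by $\begin{pmatrix}-c & c\gamma\\ 1 & -\gamma\end{pmatrix}$ with Eigenvalues $0$ and $-(c+\gamma)$; its $0$-Eigenvectors, $\alpha_n=\gamma\beta_n$, are exactly the tangent directions of the simplex.
\end{itemize}

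Each route buys something different. Yours explains why the dominant quasi-species move in lockstep. It also gives semisimplicity of the zero Eigenvalue, which the criterion needs and which the paper's $s=0$ argument addresses only for the kernel. The paper's route is shorter, and its determinant-lemma condition is reused to exhibit the unstable Eigenvalue in Example \ref{exmp:trait-drift_counterexample}.

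Several statements need correcting.
\begin{itemize}
\item The ratios $\alpha_n:\beta_n:\alpha_m:\beta_m$ are not conserved along trajectories. What is true is that all Perron planes evolve under one common planar fundamental matrix $\Phi(t)$. The limit point on the simplex comes from convergence of $\Phi(t)$ once $s(t)\to 1/r$ exponentially.
\item The Perron component must be $(\ell_n\cdot x_n,\ \ell_nP_0C_5\cdot y_n)$, not $(\ell_n\cdot x_n,\ \ell_n\cdot y_n)$. Because $\ell_nP_0C_5DC_3=\mu^\ast\ell_n$, the corrected pair spans a left-invariant subspace of $A_n(s)$ for all $s$, so its common kernel is the invariant complement. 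Your pair's kernel is in general not invariant.
\item On non-dominant blocks, $C_1$ and $C_2+C_4+C_5$ need not be scalar. So neither $(\lambda+c)(\lambda+\gamma)=s\mu$ nor a functional built from the Perron vector of $[G]^m_m$ is available there. Use the left Perron vector of the principal $m$-block of $A(1/r)$ instead.
\item Apply normal hyperbolicity to the whole affine hull of the simplex, then use forward invariance of $\bar{\mathcal{X}}\times\bar{\mathcal{Y}}$; this also takes care of the corners. Every point of the affine hull is an equilibrium with the same linearized spectrum, because the $a_n$ enter only through $\sum_n a_n=1-1/r$.
\end{itemize}
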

\begin{proof}
	See Appendix \ref{app:trait-drift_special_proof}.
\end{proof}
One might expect that stability of the dominant equilibrium simplex is a general property of trait-drift models. This is false. In fact, the proof of Proposition \ref{prop:trait-drift_special} suggests a way to construct the following simple counterexample.
\begin{exmp}\label{exmp:trait-drift_counterexample}
	Consider an environment consisting of a single macrosite $\mathfrak{B} = \{1\}$ of unit size inhabited by a single quasi-species, $\mathfrak{N} = \{1\}$, with two trait-types $\mathfrak{A}_1 = \{1,2\}$. The key to achieving instability is to choose very uneven mortalities, contrary to Proposition \ref{prop:trait-drift_special}, as well as some other more technical parameter features detailed in Appendix \ref{app:trait-drift_counterexample}. These choices yield $r=r_1 = 4$ and a very uneven non-zero non-negative equilibrium at
	\begin{equation*}
		x = \begin{pmatrix} 0.7425 \\ 0.0075 \end{pmatrix}\quad \text{and}\quad
		y = \begin{pmatrix} 5.94 \\ 0.2 \end{pmatrix}
	\end{equation*}
	This equilibrium turns out to be unstable. Indeed the Jacobian $\dd \bar{v}(x,y)$ has a pair of complex conjugate Eigenvalues with positive real parts, approximately equal to $0.212 \pm 1.619 \mathrm{i}$. Forward simulation of the macroscopic dynamics reveals a limit-cycle attractor around the unstable equilibrium (see Fig.\ref{fig:trait-drift_counterexample}).
\end{exmp}
\begin{figure}[h]
	\centering
	\includegraphics[width=\textwidth]{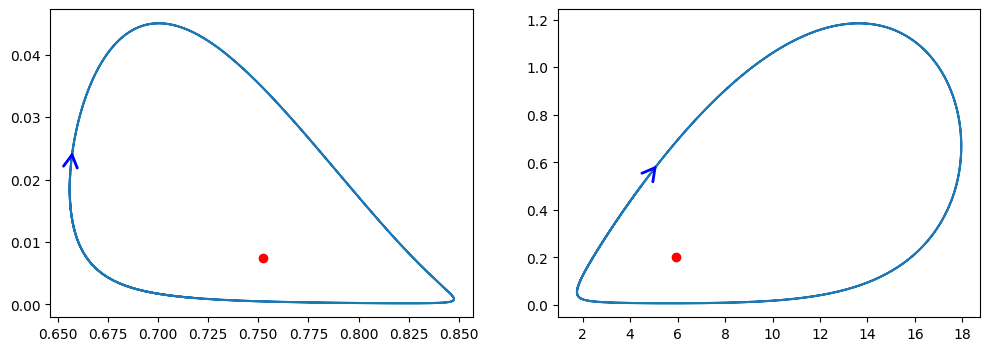}
	\caption{The limit cycle attractor from Example \ref{exmp:trait-drift_counterexample} with sessile coordinates on the left and propagule coordinates on the right. The direction of motion along the attractor is indicated by arrow tips and the red dot locates the associated equilibrium.}\label{fig:trait-drift_counterexample}
\end{figure}
This example demonstrates that already quite simple trait-drift models may exhibit non-trivial attractors. An in-depth discussion of possible scenarios is deferred to a follow-up publication due to volume constraints.

Propositions \ref{prop:trait-drift} and \ref{prop:trait-drift_special} show that equilibrium coexistence of several quasi-species under pure trait-drift dynamics requires fine-tuned parameter values -- a non-generic constraint which is unlikely to hold in real ecosystems.\footnote{See Appendix \ref{app:genericity} for the usage of ``generic'' in this article.}
\begin{cor}\label{cor:trait-drift}
	For generic parameter values, the trait-drift dynamics (\ref{eq:trait-drift_equation}) has a unique non-zero non-negative equilibrium for any $n\in\mathfrak{N}$ with dominant Eigenvalue $r_n$ of $\left[C_1^{-1}K\right]^n_n$ larger than 1. The equilibrium is the associated non-negative Eigenvector with sum of entries equal to $1-1/r_n$. The equilibrium is unstable if $r_n$ is smaller than the dominant Eigenvalue of $C_1^{-1}K$.
	\qed
\end{cor}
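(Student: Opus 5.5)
We read Corollary \ref{cor:trait-drift} off Proposition \ref{prop:trait-drift} once we fix what ``generic'' excludes. Let $\Delta$ be the set of parameter values for which two distinct quasi-species $n\neq n'$ have coinciding dominant Eigenvalues, $r_n = r_{n'}$. Each $r_n$ depends only on the parameters of the $n$-block: the principal block $[C_1^{-1}K]^n_n$ is built from $c_{1,i},c_{2,i},c_{3,i},c_{4,i},c_{5,i},p^i_0,d^{i'}_i,\tilde d^{i'}_i$ with $N(i)=N(i')=n$. So $r_n-r_{n'}$ is a real-analytic function of the parameters and is not identically zero. To see this, rescale all $c_{3,i}$ with $N(i)=n$ by a factor $\lambda>0$. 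Then $[C_1^{-1}K]^n_n$ scales by $\lambda$, so $r_n$ scales by $\lambda$, while $r_{n'}$ does not change. Hence $\Delta$ is a finite union of proper analytic subvarieties, i.e. closed and of measure zero. This is non-generic in the sense of Appendix \ref{app:genericity}.

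Outside $\Delta$, for each $r\in\mathcal{R}$ there is exactly one $n$ with $r_n=r$. The convex hull defining $\bar{\mathcal{X}}_{0,r}$ is then the single point $(1-\tfrac1{r_n})\rho_n$. Proposition \ref{prop:trait-drift} says that $\bar{\mathcal{M}}_{0,r}$ consists of equilibria. Its proof sketch also shows that every non-zero non-negative equilibrium lies in some $\bar{\mathcal{M}}_{0,r}$: such an $x$ must be a non-negative Eigenvector of $C_1^{-1}K$ with Eigenvalue $(1-\one^\transpose x)^{-1}>1$. Its support is a union of $n$-blocks, and on each block it must be a non-negative Eigenvector of an irreducible block, hence the Perron vector. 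The Eigenvalue must therefore equal $r_n$ for every block in the support, and outside $\Delta$ this forces a single block. So the non-zero non-negative equilibria are in bijection with $\{n : r_n>1\}$. The equilibrium for $n$ has $x=(1-1/r_n)\rho_n$, which is the non-negative Eigenvector with entry sum $1-1/r_n$, since $\rho_n$ has unit sum. Its $y$-component is uniquely determined by \eqref{eq:stationary_y}, giving uniqueness.

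For the instability claim, the dominant Eigenvalue of the block-diagonal matrix $C_1^{-1}K$ is $\max_n r_n$. If $r_n$ is smaller than this maximum, then $\max(\mathcal{R})$ exists and equals $\max_{n'} r_{n'}$, which is larger than $r_n>1$. Thus $r_n<\max(\mathcal{R})$, and Proposition \ref{prop:trait-drift} gives instability.

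The main obstacle is the genericity step: we must check that $r_n$ is analytic in the parameters and depends non-trivially on them. Analyticity holds because $r_n$ is a simple Eigenvalue (Perron--Frobenius with irreducibility, Assumption \ref{ass:block_irreducibility}), so analytic perturbation theory applies. Non-triviality is handled by the scaling argument above.
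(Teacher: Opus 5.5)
Your proposal is correct and follows the paper's argument: the corollary is read off from Proposition \ref{prop:trait-drift}, including the exhaustiveness shown in its proof, once genericity rules out two quasi-species with the same dominant Eigenvalue, so each simplex $\bar{\mathcal{X}}_{0,r}$ collapses to the single point $(1-\tfrac{1}{r_n})\rho_n$. The only difference is how genericity is certified: Appendix \ref{app:genericity} uses the Zariski notion and excludes the zero set of the discriminant of the characteristic polynomial of $C_1^{-1}K$, so your ``closed, measure zero'' set from the analytic function $r_n-r_{n'}$ should be recast through a polynomial that it forces to vanish (e.g.\ the numerator of the resultant of the two blocks' characteristic polynomials); your scaling argument then shows this polynomial is non-trivial, taking as base point one where both blocks are non-singular.
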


Note that Corollary \ref{cor:trait-drift} can be read as a simple algorithm for deriving the equilibrium of a generic trait-drift model by just computing dominant Eigenvalues and associated Eigenvectors --- an easy computational task. While generic parameter values imply monodominance at macroscopic equilibrium, the dominant quasi-species' trait-type distribution may still possess several modes, as demonstrated in the following example.

\begin{exmp}\label{exmp:trait-drift}
	We consider an environment consisting of a single macrosite $\mathfrak{B} = \{1\}$ subdivided into 1000 establishment microsites and a single quasi-species, $\mathfrak{N}=\{1\}$, subdivided into eleven trait-types $\mathfrak{A}_1 = \{0,1,2,\ldots,10\}$ which compete for microsites in a strictly preemptive way, i.e. $P=0$. We posit a nonlinear trade-off between sessile mortality and dispersal traits. For this, we consider the unit interval $\mathcal{U}=[0,1]$ as a one-dimensional trait space and associate to $i$-sessiles the trait value $u_i := \tfrac{A(i)}{10}$. We then parametrize sessile mortality and dispersal coefficients as $c_{1,i} = c_1(u_i):= 1 + u_i$ and $c_{3,i}= c_3(u_i):= 1 + 2u_i -4u_i^2 + 8u_i^3$, and set dispersal probabilities to $d^{i+1}_i = 0.2 = d^{i-1}_i$ and $d^i_i = 1- \sum_{i'\neq i} d^{i'}_i$. The remaining propensity coefficients are set to $c_{2,i} = 1$, $c_{4,i} = 0$ and $c_{5,i} = 10$. Fig.\ref{fig:trait-drift} shows the equilibrium state along with the implemented trade-off. Despite its artificial nature, this example demonstrates that equilibrium trait-type abundances can be multi-modal even in the absence of true quasi-species coexistence. This can intuitively be traced back to the nonlinear trade-off: While trait-types around $a=3$ would be optimal, they cannot be reliably inherited and will drift over time. So a lineage occasionally reaches trait-types $9$ and $10$, which are locally superior to, say, trait-type $8$, which thus functions as a slight barrier against returning to the global optimum. In order to get a feeling for the influence of nonlinearity, we also provide a phase diagram in Fig.\ref{fig:trait-drift}, where the horizontal $s_1$-axis deforms the dispersal nonlinearity as $c_3(u; s_1) = s_1 c_1(u) + (1-s_1) c_3(u)$, and the vertical $s_2$-axis deforms the spread of trait values around the middle as $u_i(s_2) = s_2 \tfrac{1}{2} + (1-s_2)\tfrac{A(i)}{10}$. 
\end{exmp}

\begin{figure}[h]
	\centering
	\includegraphics[width=.79\textwidth]{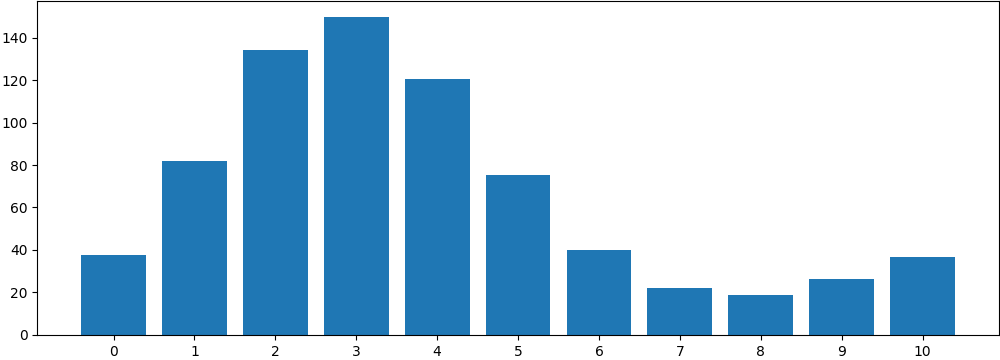}\\
	\begin{minipage}[t]{.4\textwidth}
		\centering
		\includegraphics[height=4.1cm, valign=t]{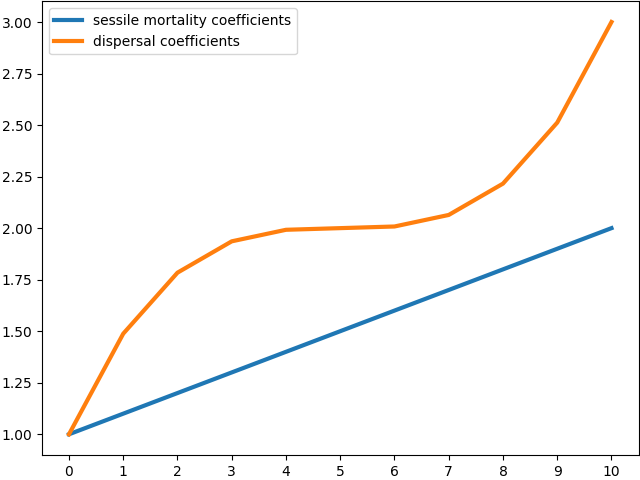}\hspace{.5cm}
	\end{minipage}
	\begin{minipage}[t]{.4\textwidth}
		\centering
		\includegraphics[height=4.3cm, valign=t]{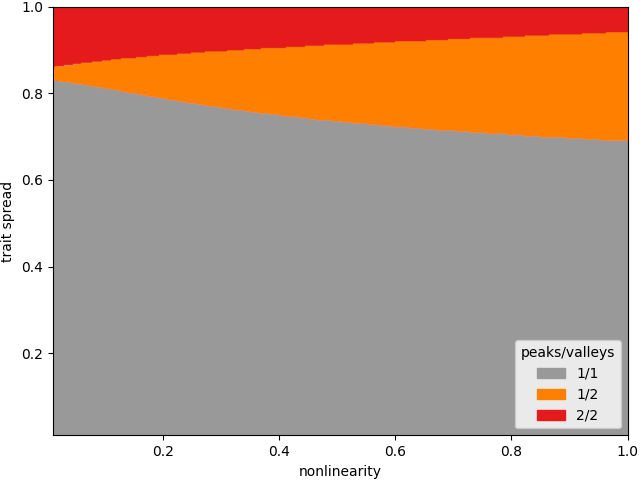}
	\end{minipage}
	\caption{Illustration of Example \ref{exmp:trait-drift}. Top: Equilibrium abundances of sessile trait-types. Bottom left: Sessile mortality coefficients $c_{1,i}$ and dispersal coefficients $c_{3,i}$ with $i=(1,a,1)$ as a function of trait-type $a=0,\ldots,10$. Bottom right: Phase diagram for deformed nonlinearity and spread of trait values (peaks/valleys including those at the boundary).}\label{fig:trait-drift}
\end{figure}

\subsection{The general case}

We now turn to the general propagule disperser model class which can display equilibrium coexistence as demonstrated by the special cases in Sections \ref{subsec:displacement_competition},\ref{subsec:environmental_heterogeneity}. While computing equilibrium states becomes more intricate than for the trait-drift model, we will show that the task of merely computing the \emph{coexistence coalition} present at an equilibrium state can be performed through a relatively simple algorithm based solely on the spectrum of $\dd\bar{v}(0,0)$, bypassing the computation of the equilibrium state itself. 

As a matter of notation, we collect all model parameters into a grand parameter vector $q$ in a suitable parameter space $\mathcal{Q}$ (see Table \ref{tbl:mechanisms}). As before, we use (\ref{eq:stationary_x}) to write the equilibrium condition in terms of sessile concentrations, resulting in
\begin{equation}\label{eq:sessiles_stationarity}
	F(x) x = 0 \quad \text{ with }\quad F(x):= -\left(C_1 + \diag(P^\transpose Kx)\right) + (\bar{S} - \diag((E - P)x))K
\end{equation}
Again we observe that $F(x)$ is $\mathfrak{N}$-block-diagonal. We denote by $z_0^n$ the dominant Eigenvalue of the principal $n$-block of $F(0)$ and invoke the Perron--Frobenius theorem (see Appendix \ref{app:general_spectrum}) to conclude that $z_0^n$ is real and simple, and we may choose right and left Eigenvectors $\varpi_n\in[\mathcal{V}]_n$,  $\varpi^n\in[\mathcal{V}^\ast]^n$ with non-negative entries, normalized such that $\varpi^n\cdot\varpi_{n'} = \delta^n_{n'}$. We further denote the span of block-dominant Eigenspaces by $\mathcal{Z}:=\mathrm{span}\set{\varpi_n}{n\in\mathfrak{N}}$, collect the dominant Eigenvalues into a vector $\zeta_0:=\sum_{n} z_0^n\varpi_n\in \mathcal{Z}$, and define the vectors
\begin{equation}\label{eq:cone_rays_unprojected}
	\psi_n := -(F(\varpi_n) - F(0)) \cdot \varpi\quad \text{ with }\quad \varpi = \sum_n \varpi_n
\end{equation}
We consider quasi-species coalitions $\mathfrak{n}\subset \mathfrak{N}$ and define dominant-Eigenspace projections
$$\pi_\mathfrak{n}: \R^\mathfrak{I} \to \mathcal{Z},\quad x\mapsto \sum_{n\in\mathfrak{n}} (\varpi^n\cdot x)\varpi_n$$ 
to delineate a subset of coalitions
$$\mathfrak{C}(q):= \set{\mathfrak{n}\subset\mathfrak{N}}{[\zeta_0]_\mathfrak{n} \text{ lies in the interior of the convex cone spanned by }\{\pi_\mathfrak{n}(\psi_n)\}_{n\in\mathfrak{n}}}$$
We now state the main result of this article:

\begin{thm}\label{thm:general_case}
	Let $q\in\mathcal{Q}$ be generic. If $\max_n z_0^n \leq 0$, the only non-negative equilibrium is the zero state and it is stable. Otherwise the zero equilibrium is unstable, we have $\varnothing\neq\mathfrak{C}(q)\subset \set{\mathfrak{n}\subset\mathfrak{N}}{\forall n\in\mathfrak{n}:\ z^n_0 > 0}$, and there are $|\mathfrak{C}(q)|$ non-zero non-negative equilibria, namely, for each $\mathfrak{n}\in\mathfrak{C}(q)$, there is a unique equilibrium $(x_\mathfrak{n}, y_\mathfrak{n})^\transpose \in \bar{\mathcal{X}}\times\bar{\mathcal{Y}}$ with $x_\mathfrak{n}^i, y_\mathfrak{n}^i > 0$ for $N(i)\in\mathfrak{n}$ and $x_\mathfrak{n}^i, y_\mathfrak{n}^i = 0$ otherwise.
\end{thm}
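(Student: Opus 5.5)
The plan is to reduce the equilibrium problem (\ref{eq:sessiles_stationarity}) to a finite-dimensional problem on the dominant Eigenspaces $\mathcal{Z}$. We then connect it, through a homotopy in the parameters, to a small-amplitude regime in which it becomes exactly the cone condition defining $\mathfrak{C}(q)$.

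\textbf{Zero state and support structure.} First I write $F(x)=F(0)-L(x)$ with $L$ linear in $x$, namely $L(x)=\diag(P^\transpose Kx)+\diag((E-P)x)K$. Then $\psi_n=L(\varpi_n)\varpi$. For the zero state I reuse the argument of Proposition \ref{prop:trait-drift}. The Jacobian $\dd\bar v(0,0)$ is essentially non-negative (Lem.\ref{lem:essentially-non-negative}), so its dominant Eigenvalue is real. A monotonicity argument via the Schur complement $F(0)$ then shows that its sign equals the sign of $\max_n z_0^n$. The boundary case $\max_n z_0^n=0$ is non-generic and is excluded.

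For a non-zero non-negative equilibrium $x$, let $\mathfrak{n}$ be the set of $n$ with $[x]_n\neq 0$. On each $n$-block, $[F(x)]^n_n$ is an irreducible Metzler matrix, by Assumption \ref{ass:block_irreducibility} together with the positivity of $K$ on blocks. Since $[F(x)]^n_n[x]_n=0$ with $[x]_n\geq0$ and $[x]_n\neq0$, Perron--Frobenius forces $[x]_n>0$ and makes $0$ the dominant Eigenvalue of that block. Because $L(x)\geq 0$ entrywise and is non-zero, the dominant Eigenvalue of $[F(x)]^n_n$ is strictly decreasing along $x$. Hence $z_0^n>0$ for all $n\in\mathfrak{n}$, which gives the inclusion $\mathfrak{C}(q)\subset\{\mathfrak{n}:z_0^n>0\ \forall n\in\mathfrak{n}\}$ once the correspondence below is proved. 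Positivity of $y_\mathfrak{n}$ follows from (\ref{eq:stationary_y}) and the positivity of $K$ on blocks.

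\textbf{Small-amplitude regime.} Next I embed $q$ in a one-parameter family $q_t$, $t\in(0,1]$, obtained by shifting the diagonal (mortalities $C_1$) so that each block-dominant Eigenvalue becomes $t z_0^n$. The Perron vectors $\varpi_n,\varpi^n$ and the vectors $\psi_n$ are unchanged, so the cone condition defining $\mathfrak{C}(q_t)$ is invariant in $t$: both sides scale positively. For small $t$, I perform a Lyapunov--Schmidt reduction on each coalition $\mathfrak{n}$, with ansatz $x=\sum_{n\in\mathfrak{n}}s_n\varpi_n+O(t^2)$, $s_n=O(t)$. Projecting $F(x)x=0$ with the left vectors $\varpi^n$ gives, to leading order,
\[
t z_0^n = \sum_{n'\in\mathfrak{n}} s_{n'}\,\varpi^n\cdot L(\varpi_{n'})\varpi_n ,\qquad n\in\mathfrak{n}.
\]
This is the linear system $[t\zeta_0]_\mathfrak{n}=\sum_{n'}s_{n'}\pi_\mathfrak{n}(\psi_{n'})$, up to checking that the projection of $\psi_{n'}$ onto block $n$ captures exactly the relevant term. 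Positive solutions $s$ exist, and are unique for generic $q$ (the vectors $\pi_\mathfrak{n}(\psi_n)$ are then linearly independent), precisely when $[\zeta_0]_\mathfrak{n}$ lies in the interior of the cone. The implicit function theorem then yields exactly one equilibrium per $\mathfrak{n}\in\mathfrak{C}(q_t)$ for small $t$. Non-emptiness of $\mathfrak{C}$ I would get from the complementarity structure of this reduced problem, where $L$ is non-negative. Alternatively, one can take a maximal-growth coalition built greedily, or argue via dissipativity and existence of an attractor in $\bar{\mathcal{X}}\times\bar{\mathcal{Y}}$ off the unstable zero state.

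\textbf{Continuation to $t=1$ (main obstacle).} I continue each branch in $t$ up to $t=1$. The support of a branch can change only if some block $[x]_n$ hits zero. That is a transcritical crossing with another coalition's branch, at which block $n$'s dominant Eigenvalue of $F(x)$ vanishes on a lower-dimensional support. The aim is to show generically that such crossings occur in pairs that would change the cone condition, which is impossible because that condition is $t$-invariant. Branches must also not fold: genericity plus a non-degeneracy of the Jacobian of the reduced map. I expect this exclusion of folds and support changes along the homotopy to be the hardest step. I would try first a degree argument. Count equilibria with each fixed support via the Brouwer degree of the block-restricted map on the compact set $\bar{\mathcal{X}}$, using the inward-pointing boundary from the Corollary. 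I would then show that the local index is $\pm1$ and constant, so that uniqueness persists for generic $q$.
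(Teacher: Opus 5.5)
Your local analysis is the paper's. Shifting the mortalities so that the block-dominant eigenvalues become $t z_0^n$ is exactly the deformation $F(x;s)=F(x)+(s-1)\sum_n z_0^n[\One]^n_n$. Your Lyapunov--Schmidt ansatz is the blow-up $x=s u_0+s^2u_-$, and your leading-order system is the cone equation $0=\zeta\odot_\varpi(\zeta_0-W\zeta)$ restricted to $\mathfrak{n}$. Working inside the invariant subspace $[\mathcal{V}]_\mathfrak{n}$, the local implicit-function step needs only invertibility of $[W]^\mathfrak{n}_\mathfrak{n}$, not the paper's Property 2.

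Two side points are simpler than you make them. First, every singleton $\{n\}$ with $z_0^n>0$ lies in $\mathfrak{C}(q)$, because $\pi_{\{n\}}(\psi_n)=(\varpi^n L(\varpi_n)\varpi_n)\,\varpi_n$ with $\varpi^n L(\varpi_n)\varpi_n>0$. So non-emptiness needs no complementarity or attractor argument. Second, pair $[\zeta_0]_\mathfrak{n}=\sum_{n'\in\mathfrak{n}}s_{n'}\pi_\mathfrak{n}(\psi_{n'})$, $s>0$, with $\varpi^n$. This gives $z_0^n\geq s_n\,\varpi^n L(\varpi_n)\varpi_n>0$, so the inclusion in the theorem is pure algebra and does not need the equilibrium correspondence.

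The genuine gap is the step you flag yourself. You must continue the branches from small $t$ to $t=1$ with support and positivity intact, and show that nothing else exists at $t=1$. Neither of your suggestions achieves this.

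\emph{The ``pairs'' argument is a non sequitur.} The cone condition only sees the germ of the branches at $t=0$. The $\mathfrak{n}$-branch can lose positivity at some $t^*\in(0,1)$ by passing transcritically through a face $[x]_n=0$. This happens where it meets an $(\mathfrak{n}\setminus\{n\})$-equilibrium whose invasion eigenvalue $\lambda_n(t)$ vanishes, with $\lambda_n(t)$ the dominant eigenvalue of $[F(x_{\mathfrak{n}\setminus\{n\}}(t);t)]^n_n$. With one-dimensional blocks the equations are linear on each support and $\lambda_n(t)=t\lambda_n(1)$. With multi-dimensional blocks the Perron vectors rotate with amplitude, so $\lambda_n(t)/t$ is not constant. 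A zero of $\lambda_n$ in $(0,1)$, like a fold, is a codimension-one event in a one-parameter family, and genericity of $q$ does not remove it.

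\emph{The degree plan fails too.} The block-restricted map $x\mapsto[F(x;t)x]_\mathfrak{n}$ vanishes at $x=0$ and at every sub-coalition equilibrium, all of which lie on $\partial[\bar{\mathcal{X}}]_\mathfrak{n}$. The Corollary's inward-pointing property only gives tangency on the invariant faces where a whole block vanishes. So the degree on the relative interior is undefined. After excising these zeros, the degree changes exactly when an interior zero exits through a face, which is the event you want to exclude, so the argument is circular. Moreover, a degree of $\pm1$ yields existence, not uniqueness. For uniqueness you would need every $\mathfrak{n}$-equilibrium to have the same index, a sign condition on the Jacobian you have no handle on.

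For comparison, the paper takes a different route through this step:
\begin{enumerate}
\item It treats $\{F(x;s)x=0\}$ as an affine algebraic curve and continues the branch through singular points of $\dd(F(x;s)x)$ via its smooth normalization.
\item It argues that block-diagonality of $F(x;s)$ preserves the non-vanishing blocks, so positivity persists by continuity of Perron vectors.
\item It obtains exhaustiveness by deforming an arbitrary equilibrium at $s=1$ back to the origin and matching it to the germs counted by $\mathfrak{C}(q)$.
\end{enumerate}
Note, though, that block-diagonality alone only guarantees that \emph{vanishing} blocks stay zero. The backward deformation also presupposes a branch that is monotone in $s$. So on any route, what is really required is a structural reason why $\lambda_n$ cannot vanish and no fold can occur for $s\in(0,1)$. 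Your proposal does not supply one.
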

\begin{proof}[Proof (sketch, details in Appendix \ref{app:general_proof})]
	Proving the statements about the zero equilibrium is carried out along the same lines as in the trait-drift case (Prop.\ref{prop:trait-drift}). In order to investigate the non-zero equilibria, we assume without loss of generality that all $z_0^n > 0$ and deform $F(x)$ into
	$$F(x;s) := F(x) + (s-1)\sum_{n=1}^{N}z_0^n [\One]^n_n$$
	for $s\in[0,1]$. Intuitively, decreasing the deformation parameter from $s=1$ downwards progressively increases sessile mortalities until, at $s=0$, the zero equilibrium becomes stable. More precisely, at $s=0$ the solutions of the deformed equilibrium condition
	\begin{equation}\label{eq:deformed_equilibria}
		F(x;s)x = 0
	\end{equation}
	pass through a transcritical bifurcation, which we blow up by applying a suitable parameter transform around the bifurcation locus. In blow-up coordinates, we can then determine which solutions of (\ref{eq:deformed_equilibria}) bifurcate from the zero equilibrium into $\bar{\mathcal{X}}$. It turns out that for any $\mathfrak{n}\in\mathfrak{C}(q)$ there is a unique solution that bifurcates into the positive-$\mathfrak{n}$-block coordinate plane $[\mathcal{V}]_\mathfrak{n} \cap \bar{\mathcal{X}}$, demonstrating the required condition close to the bifurcation locus. It remains to be shown that these solutions can be deformed back to $s=1$. This is done by recognizing that the deformed solutions trace out not merely a smooth curve but part of an affine algebraic variety defined by the quadratic condition (\ref{eq:deformed_equilibria}) in the variables $s$ and $x^i$, $i\in\mathfrak{I}$. One may then leverage insights from algebraic geometry to guarantee that the deformation curve can be extended up to $s=1$ and that, generically, all equilibria are obtained in this fashion.
\end{proof}

Theorem \ref{thm:general_case} can be read as an algorithm to efficiently compute all coexistence coalitions:

\begin{alg}\label{alg:general} Given a propagule disperser model with parameters $q$, the possible coexistence coalitions $\mathfrak{C}(q)$ can be computed as follows:
	\begin{enumerate}[itemsep=3pt]
	\item Form $F(x)$ as in (\ref{eq:sessiles_stationarity}).
	\item For any $n\in\mathfrak{N}$, compute the dominant principal $n$-block Eigenvalues $z_0^n$ of $F(0)$.
	\item Replace $\mathfrak{N}$ by $\set{n\in\mathfrak{N}}{z_0^n > 0}$ and replace $F(x)$ by the corresponding principal submatrix.
	\item Compute appropriately normalized left and right Eigenvectors $\varpi^n, \varpi_n$.
	\item For any $\mathfrak{n}\subset \mathfrak{N}$, check whether $\mathfrak{n}\in\mathfrak{C}(q)$:
	\begin{enumerate}
		\item Define $z:=(z_0^n)\in\R^\mathfrak{n}$ and $W:= \left(-\varpi^n\cdot (F(\varpi_{n'}) - F(0)) \cdot\varpi_n\right)\in\R^\mathfrak{n}_\mathfrak{n}$
		\item Solve the linear equation $Wz = z_0$
		\item If there is a solution $z$ with only positive entries, add $\mathfrak{n}$ to $\mathfrak{C}(q)$.\label{alg:general_last}
	\end{enumerate}
	\end{enumerate}
\end{alg}

Furthermore, inspecting the proof of Theorem \ref{thm:general_case}, we glean that an equilibrium with a given coexistence coalition can be computed by  performing a curve integral starting from the transcritical bifurcation locus. This can also be turned into a convenient algorithm:
\begin{alg}\label{alg:generalB}
	Given a propagule disperser model with parameters $q$ and a coexistence coalition $\mathfrak{n}\in\mathfrak{C}(q)$, the unique equilibrium $x_\mathfrak{n}, y_\mathfrak{n}$ with coexistence coalition $\mathfrak{n}$ can be computed as follows:
	\begin{enumerate}[itemsep=3pt]
		\item Take the corresponding solution $z$ of Algorithm 1.\ref{alg:general_last} and define the vector field
		\begin{equation*}
			\xi: \bar{\mathcal{X}}\to\mathcal{V}\ ,\quad
			\xi(x) := 
			\begin{cases}
				\sum_{n\in\mathfrak{n}} z^n \varpi_n & \text{if } x = 0\\
				- \dd F(x;s)^{-1} \tfrac{\partial}{\partial s} F(x;s) & \text{otherwise}
			\end{cases}
		\end{equation*}
		
		\item Integrate the initial value problem 
		\begin{equation*}
			\tfrac{\partial}{\partial s} x = \xi\ ,\quad
			x(0) = 0
		\end{equation*}
		from $s=0$ to $s=1$ to obtain $x_\mathfrak{n} = x(1)$.
		
		\item Compute $y_\mathfrak{n} = (C_2 + (\One - \tilde{D})C_4 + C_5)^{-1}DC_3 x_\mathfrak{n}$
	\end{enumerate}
\end{alg}

For a relatively small number of quasi-species, Algorithm \ref{alg:general} provides a computationally cheap way of scanning the parameter space and evaluating coexistence regimes, as we now demonstrate.

\begin{exmp}\label{exmp:general}
	\begin{figure}[h]
		\includegraphics[width=.9\textwidth]{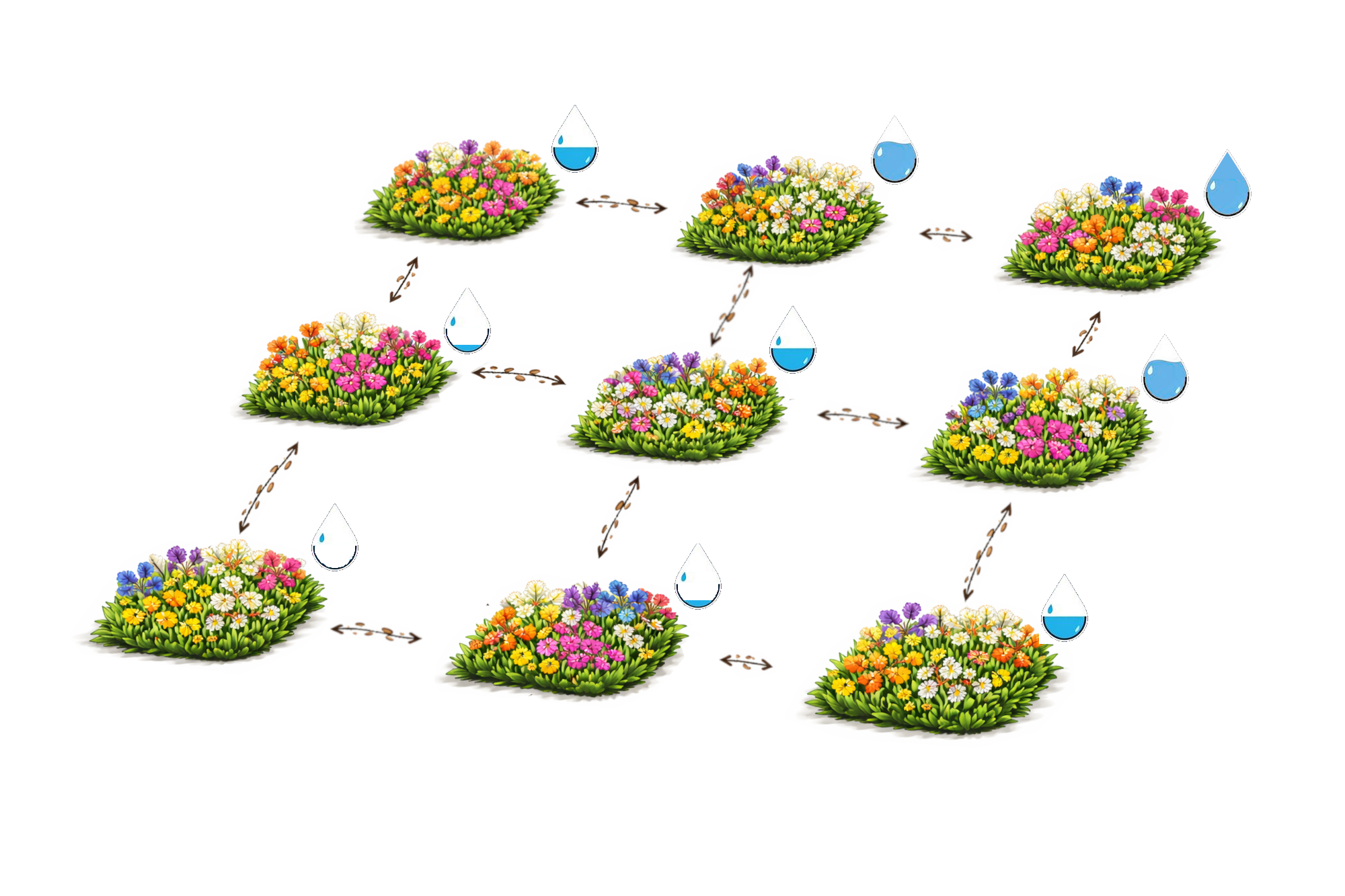}
		\caption{Illustration of the setup in Example \ref{exmp:general} showing the three-by-three grid of macrosites with diagonal resource availability gradient and dispersal connections indicated. }\label{fig:general_example}
	\end{figure}
	Consider a quasi-community with three quasi-species $\mathfrak{N} = \{0,1,2\}$ and three trait-types per species, $\mathfrak{A}_n = \{-1,0,1\}$. The environment consists of a three-by-three grid of macrosites, $\mathfrak{B} = \{-1,0,1\}\times\{-1,0,1\}$, each of unit size, which possess different levels of resource availability, e.g. soil water. Complementarily, trait-types differ through a one-dimensional resource requirement trait. Together, resource requirement and availability modulate sessile mortality. Furthermore, resource requirement modulates the respective dispersal propensity coefficient and the difference between resource requirements of two types enters into the displacement probabilities through Calcagno's sigmoid function (see Sec.\ref{subsec:displacement_competition}). These relationships implement the narrative from (\cite{Gleeson1990}) where higher tissue allocation to reproductive organs, thus higher dispersal propensity, is traded off against lower allocation to roots, thus higher resource requirement, which then manifests as higher sessile mortality but also as higher probability of being displaced by establishing individuals with lower resource requirements. A full account of model parameters is given in Appendix \ref{app:general_exmp}. In order to get snapshots of the coexistence phases in this model, we introduce three superparameters $s_1,s_2,s_3$:
	\begin{enumerate}
		\item \emph{Environmental heterogeneity superparameter} $s_1\in [0,1]$ parametrizes resource availability across the metahabitat. Specifically, resource availability at macrosite $(b_1, b_2)\in\mathfrak{B}$ is posited as
		$$f^+_b(s_1) := 1 + s_1 \tfrac{1}{4}(b_1+b_2)$$
		Thus there is a resource availability gradient running diagonally across the three-by-three habitat (see Fig.\ref{fig:general_example}). At zero heterogeneity, all environments have the same resource availability and, at maximum heterogeneity $s_1=1$, minimum and maximum resource availability are $\tfrac{1}{2}$ and $\tfrac{3}{2}$.
		
		\item \emph{Trait heterogeneity superparameter} $s_2\in [0,1]$ parametrizes resource requirement across species and trait-types. Specifically, sessiles of quasi-species $n$ and trait-type $a$ have resource requirement	
		$$f^-_{n,a}(s_2) := \tfrac{1}{2} + \tfrac14 (s_2 + \tfrac{1}{20})(n-1) + \tfrac{1}{10}a$$
		Thus the three trait-types of quasi-species $n$ have always an equal spacing of $\tfrac{1}{10}$ in resource requirement independently of $s_2$. The superparameter controls how much the trait intervals of different quasi-species overlap: For $s_2$ close to 0 they are almost equal whereas they are pulled apart for increasing $s_2$ until they are completely disjoint for $s_2$ close to 1.
		
		\item \emph{Displacement strength superparameter} $s_3\in[0,1]$ controls the strength of displacement competition such that the probability $p^i_{i'}$ for an establishing propagule with identifier $i = (n,a,b)$ and a resident sessile with identifier $i'=(n',a',b)$ is		
		$$p^i_{i'} = s_3 \left(1 + \exp\left(\beta(f^-_{n,a}(s_2) - f^-_{n',a'}(s_2))\right)\right)^{-1}$$
		according to Calcagno's sigmoid with fixed steepness $\beta$.
	\end{enumerate}
	\begin{figure}[h]
		\centering
		\begin{subfigure}{.3\textwidth}
			\centering
			\includegraphics[width=\linewidth]{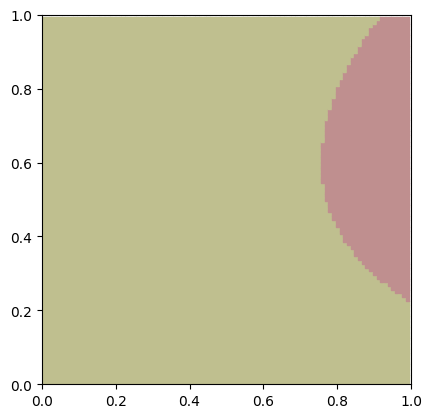}
		\end{subfigure}
		\begin{subfigure}{.3\textwidth}
			\centering
			\includegraphics[width=\linewidth]{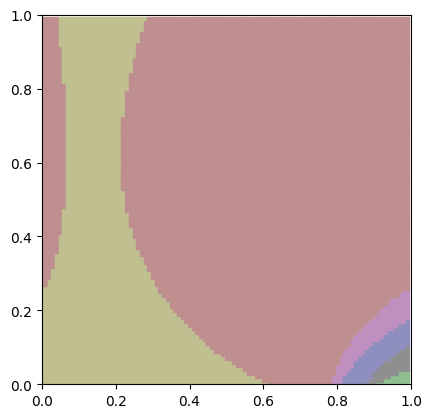}
		\end{subfigure}
		\begin{subfigure}{.3\textwidth}
			\centering
			\includegraphics[width=\linewidth]{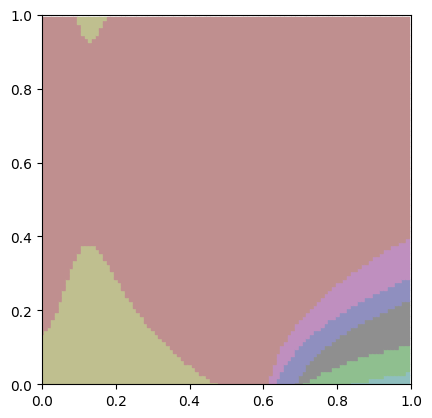}
		\end{subfigure}
		
		\begin{subfigure}{.3\textwidth}
			\centering
			\includegraphics[width=\linewidth]{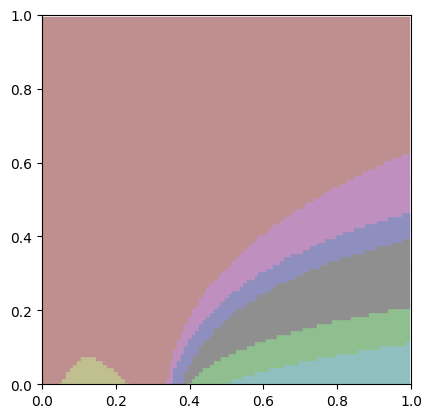}
		\end{subfigure}
		\begin{subfigure}{.3\textwidth}
			\centering
			\includegraphics[width=\linewidth]{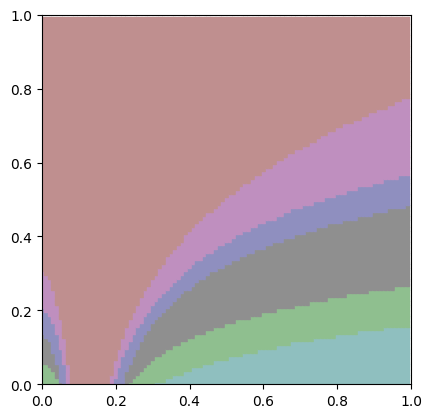}
		\end{subfigure}
		\begin{subfigure}{.3\textwidth}
			\centering
			\includegraphics[width=\linewidth]{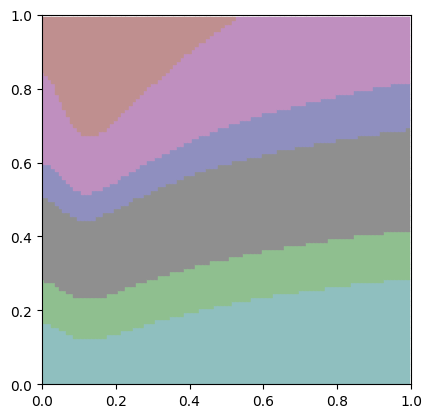}
		\end{subfigure}
		
		\medskip
		\begin{subfigure}{.8\textwidth}
			\centering
			\includegraphics[width=\linewidth]{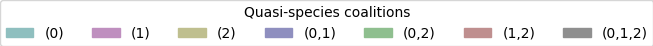}
		\end{subfigure}
		\caption{Coalition phases for Example \ref{exmp:general} in heterogeneity superparameter space for six values of displacement strength (values in main text). Environmental heterogeneity is indicated on the horizontal $s_1$-axis and trait heterogeneity on the vertical $s_2$-axis.}\label{fig:phase_diagram_general}
	\end{figure}
	We swept the superparameter space $[0,1]^3$ by selecting a cubical lattice of $100\times100\times1000$ equally-spaced points and executed Algorithm \ref{alg:general} for each of them. At all points there was a single coexistence coalition. We scrutinize the resulting phase diagram  according to slices at constant displacement strength $s_3$, each of which represents a two-dimensional heterogeneity space, namely environmental and trait heterogeneity (see Fig.\ref{fig:phase_diagram_general}): At low $s_3$, the entire heterogeneity space displays dominance of quasi-species 2, which is consistent with the trait-drift Example \ref{exmp:trait-drift}. However, already a small amount of displacement competition, namely $s_3 = 0.02$, is sufficient to facilitate coexistence of quasi-species 1 and 2 provided that there is high environmental heterogeneity and intermediate-to-high trait-heterogeneity. Increasing $s_3$ only a tiny bit more extends the dominance range of the (1,2)-coalition across an increasing portion of heterogeneity space. Interestingly, for some values around $s_3=0.027$, this dominance range is disconnected in the $s_1$-direction, meaning that the (1,2)-coalition tends to dominate for small and large environmental heterogeneity with a separating band where quasi-species $2$ dominates. Increasing displacement strength yet a bit further, say to $s_3=0.029$, the situation reverses and now the remaining $2$-monodominance range becomes disconnected in $s_2$-direction by the vastly larger $(1,2)$-dominance range. 
	Meanwhile all other coalitions start to appear for high environmental and low trait heterogeneity. By progressively increasing displacement strength, say up to $s_3=0.054$, the initially prevalent $2$-monodominance range disappears and all other dominance ranges start taking up bands of similar size, each of which covers all environmental heterogeneity conditions but only a certain range in trait heterogeneity direction. This pattern is sustained for the remaining range of $s_3$-values, pushing the respective dominance bands further up in $s_2$-direction, i.e. requiring larger trait heterogeneity. In this fashion, the bands with highest trait heterogeneity start to progressively disappear and the band with lowest trait heterogeneity, namely the $0$-monodominance range, takes over an increasingly large portion of heterogeneity space until, for displacement strength close to 1, almost all heterogeneity conditions lead to $0$-monodominance with a tiny remaining (0,2)-dominance range for low environmental heterogeneity and very large trait heterogeneity. This example demonstrates the intricate relationships between displacement competition, environmental heterogeneity and trait heterogeneity in quasi-communities. 
	In order to inspect the equilibrium state for a given parameter choice, we can apply Algorithm \ref{alg:generalB} (see Fig.\ref{fig:bifurcation_curve}).
	\begin{SCfigure}
		\includegraphics[width=.58\textwidth]{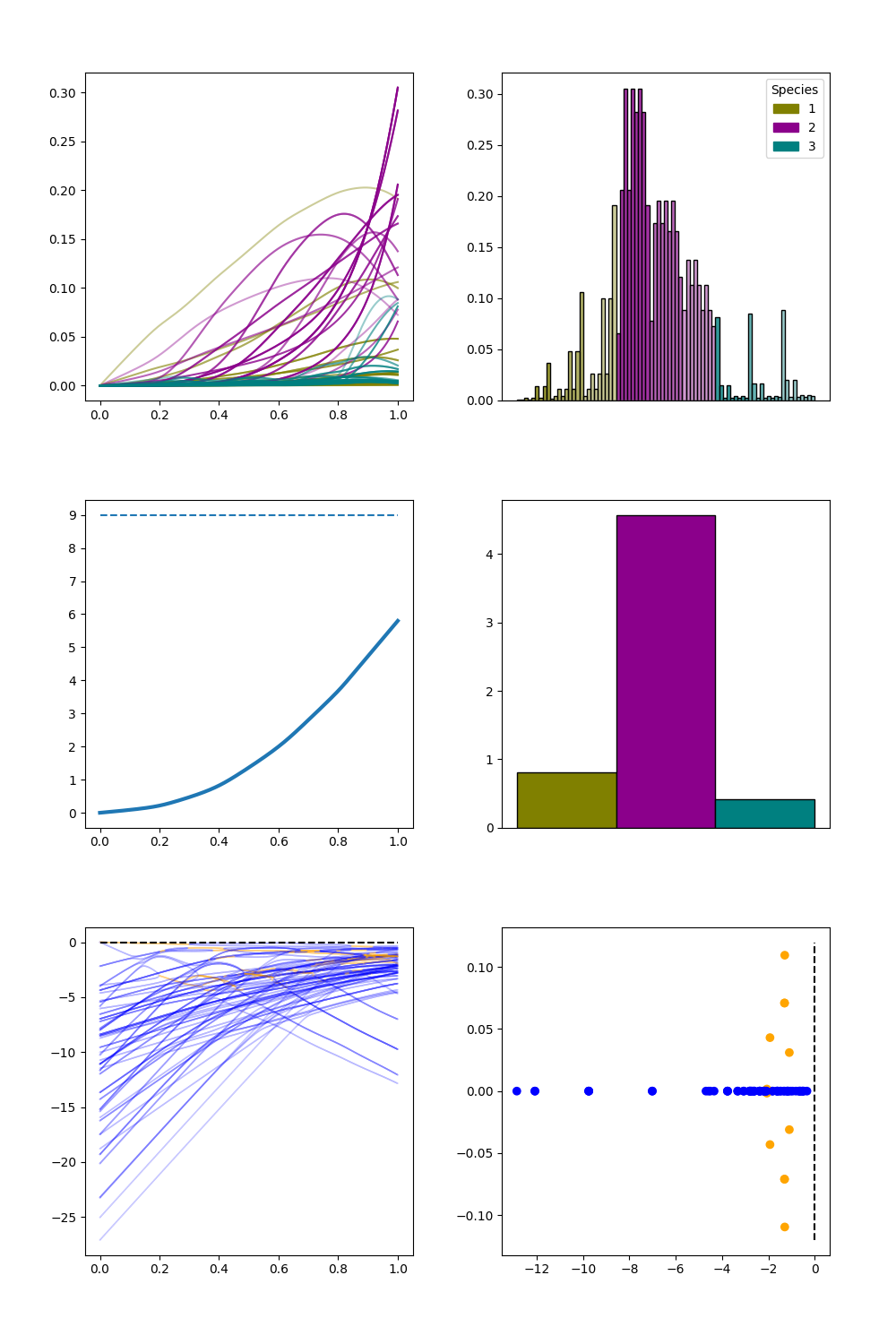}
		\caption{Computation of an equilibrium state according to Algorithm \ref{alg:generalB}. The left-hand panels show how several quantities are deformed away from the transcritical bifurcation locus to the target equilibrium represented in the right-hand panels. Top row: Concentrations of the 81 sessile concentrations $x^i$ with trait-types distinguished through varying shades. Middle row: Metapopulation capacity, i.e. the total concentration of sessiles aggregated over all macrosites. The left-hand panel shows how this quantity stays well below the upper bound of 9, corresponding to the nine unit-sized macrosites. The right-hand panel resolves the proportions of the three quasi-species at target equilibrium. Bottom row: Eigenvalues of the Jacobian $\dd \bar{v}$ at deformed equilibria with left-hand panel showing their real-parts along the deformation and right-hand panel showing their locations in the complex plane at target equilibrium (real Eigenvalues in blue and complex Eigenvalues in orange). The dashed lines demonstrate that equilibria are stable along the entire deformation curve.}\label{fig:bifurcation_curve}
	\end{SCfigure}
\end{exmp}

\section{Discussion}\label{sec:discussion}

In this article, we devised a class of stochastic process models for propagule disperser quasi-communities, which combines several prominent approaches for modeling species coexistence, and verified that the model class indeed contains various instances studied in the literature. We furthermore provided a computationally inexpensive algorithm that makes it possible to compute macroscopic coexistence coalitions and proved that each coalition generically corresponds to a unique macroscopic equilibrium state, although the associated equilibria need not always be stable. 

The class of propagule disperser models enables a more nuanced view of diversity-promoting mechanisms by interpolating between existing single-mechanism models and should therefore have broad relevance for the discussion of coexistence, biodiversity, and conservation. More concretely, we see several promising research directions emanating from the present foundation.

\emph{First}, the present work raises some questions in \emph{applied mathematics}. In particular, the highly consequential topic of macroscopic stability was largely postponed for another occasion. However, the contrast between the mortality-homogeneous scenario in Proposition \ref{prop:trait-drift_special} and the instability in Example \ref{exmp:trait-drift_counterexample} indicates the rich spectrum of dynamical behaviors, including oscillatory attractors, covered by the propagule disperser class. The issue of characterizing stability regimes ultimately boils down to a question about spectra of P-matrices --- a classical topic in applied mathematics which continues to attract attention due to its wide applicability. We plan to follow up on this in future work. Another related avenue is to investigate the full stochastic dynamics of propagule disperser models, particularly around macroscopic equilibria. Here, some preliminary exploration revealed an interesting scenario in which equilibria are macroscopically stable but can have an arbitrarily small margin of stability --- a situation where stochastic effects could play a substantial role.

A \emph{second} avenue for future research pertains to \emph{ecological theory}. Example \ref{exmp:general} gave a preliminary demonstration of non-trivial synergies between different coexistence mechanisms --- a phenomenon that warrants further investigation in less artificial setups. On a different note, it has been hypothesized that colonization events in metapopulations are subject to strong Allee effects (\cite{Ovaskainen2001}). These could be incorporated into propagule disperser models by letting propensity coefficients depend on the current state. Additionally, propensity coefficients could also be modified in a time-dependent manner to account for environmental change, such as meteorological seasons leading to recurring successional dynamics. Such phenomena have previously been modeled in the simpler setting of stable generalized Lotka--Volterra models, and it was found that macroscopic equilibria are typically deformed into ``structurally stable'' sets of transient states (\cite{Almaraz2024}). It would be interesting to investigate whether propagule disperser models are amenable to similar treatment. Yet another theory angle would be to relate propagule disperser models to evolutionary dynamics. While the incorporated trait-drift mechanism could already be considered a sort of evolutionary device, it seems like a crude way to model actual evolutionary phenomena such as speciation events. Nevertheless, trait drift may still play a role in the way species evolve --- a speculation that could potentially be formalized through the propagule disperser model class.

A \emph{third} direction for further investigation points towards \emph{empirical studies}. The fact that our model class explicitly resolves the propagule stage as well as dispersal connections inside a metahabitat presents various opportunities for injecting additional domain knowledge into the model calibration procedure. The need for this in modeling terrestrial plant communities has been raised in \cite[Ch.10]{Leck2008}, which makes clear that typically measured propagule traits, such as seed mass, exert a very specific influence on the plant's life cycle whose effects cannot be faithfully attributed to overall species fitness. The present work was partly motivated by the need to represent specific life-history details like these in an appropriate way. More generally, the fact that the model class accommodates a wide spectrum of resolutions, from relatively detailed accounts down to minimalist demonstration models, could also help to further elucidate the role that model complexity plays in representing propagule disperser communities and natural ecosystems at large. We hope that some of these avenues will prove fruitful in future research.

\bibliographystyle{plain}
\bibliography{references}

\appendix

\section{Mathematical terminology and notations}\label{app:notations}

\subsection{Cartesian spaces and linear algebra}\label{app:linear_algebra_notations}

For any finite index set $\mathfrak{I}$, the Cartesian space $\R^\mathfrak{I}$ is the vector space of upstairs-indexed tuples $(x^{i})^{i\in\mathfrak{I}}$. Any $x\in\R^\mathfrak{I}$ can be uniquely expressed as
\begin{equation}\label{eq:notations_vector}
	 x = \sum_{i\in\mathfrak{I}} x^i \ee_i
\end{equation}
with the standard basis vectors
$\ee_i = (\delta^j_i)^{j\in\mathfrak{I}}$, where $\delta^j_i$ is the Kronecker-delta symbol defined by
$$\delta^j_i = \begin{cases}
	1 & \text{if } j = i\\
	0 & \text{otherwise.}
\end{cases}$$
Alternatively one might totally order $\mathfrak{I} = \{i_1,\ldots, i_{|\mathfrak{I}|}\}$ and write $x$ $(x^{i})^{i\in\mathfrak{I}}$ as a column vector
\begin{equation}\label{eq:notations_vector_alt}
	x = \begin{pmatrix} x^{i_1} \\ \vdots \\ x^{i_{\mathfrak{I}}} \end{pmatrix}
\end{equation}
While this notation might be more familiar in the context of $\R^N \equiv \R^{\{1,\ldots,N\}}$, the present article will almost exclusively use the presentation (\ref{eq:notations_vector}) as it avoids an arbitrary ordering of the index set and leads to more compact notation.

The dual vector space $(\R^\mathfrak{I})^\ast = \R_\mathfrak{I}$ of one-forms or covectors is identified with the Cartesian space of downstairs-indexed tuples $(u_{i})_{i\in\mathfrak{I}}$ by virtue of the evaluation map 
$$(u_{i})_{i\in\mathfrak{I}}\cdot (x^{i})^{i\in\mathfrak{I}} := \sum_{i\in\mathfrak{I}} u_i x^i$$ 
Again any covector $u\in(\R^\mathfrak{I})^\ast$ can be presented through the dual basis covectors
$\ee^i = (\delta^i_j)_{j\in\mathfrak{I}}$ as
\begin{equation}\label{eq:notations_covector}
	 u = \sum_{i\in\mathfrak{I}} u_i \ee^i
\end{equation}
or, upon ordering $\mathfrak{I}$, as a row vector
\begin{equation}\label{eq:notations_covector_alt}
	u = \left(u_{i_1},\ldots, u_{i_{|\mathfrak{I}|}}\right)
\end{equation}
The standard basis and its dual evidently satisfy
\begin{equation}\label{eq:notation_covector-vector}
	\ee^i\cdot \ee_j = \delta^i_j
\end{equation}
The transpose operator is the linear duality $\R^\mathfrak{I} \leftrightarrow \R_\mathfrak{I}$ defined with respect to the standard basis and its dual by
$$(\ee_i)^\transpose = \ee^i\quad \text{and}\quad (\ee^i)^\transpose = \ee_i$$
which can be intuitively understood as flipping column vectors into row vectors and vice-versa.

When the need arises to index sets of vectors from $\R^\mathfrak{I}$, this will be done with ``contravariant'' downstairs indices, like $\ee_i$ or $v_1,v_2,\ldots \in \R^\mathfrak{I}$, while covectors from $\R_\mathfrak{I}$ will be indexed with ``covariant'' upstairs indices, like $\ee^i$ or $\nu^1,\nu^2,\ldots\in \R_\mathfrak{I}$. Whenever possible, the opposite upstairs/downstairs notation will be used for the entries of vectors resp. covectors as exemplified in (\ref{eq:notations_vector},\ref{eq:notations_covector}).

Linear maps $F: \R^\mathfrak{I} \to \R^\mathfrak{I}$ are identified with elements of the tensor product space $\R^\mathfrak{I}\otimes (\R^\mathfrak{I})^\ast$ and can be presented as
\begin{equation}\label{eq:notation_matrix}
	 F = \sum_{i,j\in\mathfrak{I}} F^j_i \ee_j\otimes\ee^i
\end{equation}
where $\{\ee_j\otimes\ee^i\}$ is the standard basis of  $\R^\mathfrak{I}\otimes (\R^\mathfrak{I})^\ast$ acting linearly on $\R^\mathfrak{I}$ as
\begin{equation}\label{eq:notation_matrix-vector}
	(\ee_j\otimes\ee^i) \cdot \ee_k = \delta^i_k \ee_j
\end{equation}
This action extends to a composition rule for linear maps $\R^\mathfrak{I}\to\R^\mathfrak{I}\to\R^\mathfrak{I}$ through the ``tensor contraction''
\begin{equation}\label{eq:notation_matrix-matrix}
	(\ee_l\otimes\ee^k)\cdot(\ee_j\otimes\ee^i) = \delta^k_j \ee_l\otimes \ee^i
\end{equation}
Generally, the dot symbol $\cdot$ will always denote some linear action or multiplication which should be clear from the context and is sometimes omitted for less verbose notation. Again, upon ordering $\mathfrak{I}$, a linear map $F$ can be written as a matrix
\begin{equation}\label{eq:notation_matrix_alt}
	F = \begin{pmatrix}
		F^{i_1}_{i_1} & \cdots & F^{i_1}_{i_{|\mathfrak{I}|}} \\ 
		\vdots & \ddots & \vdots \\
		F^{i_{|\mathfrak{I}|}}_{i_1} & \cdots & F^{i_{|\mathfrak{I}|}}_{i_{|\mathfrak{I}|}}
	\end{pmatrix}
\end{equation}
While the presentations (\ref{eq:notations_vector_alt},\ref{eq:notations_covector_alt},\ref{eq:notation_matrix_alt}) lead to the conventional mnemonic for covector-vector, matrix-vector and matrix-matrix multiplication, they lead to cumbersome notation for structured index sets which have no canonical ordering and no fixed size. For this reason, the present article almost always prefers the presentations (\ref{eq:notations_vector},\ref{eq:notations_covector},\ref{eq:notation_matrix}) together with the multiplication rules (\ref{eq:notation_covector-vector},\ref{eq:notation_matrix-vector},\ref{eq:notation_matrix-matrix}). However, as a figure of speech, the term ``matrix'' will still be used occasionally for linear maps $\R^\mathfrak{I}\to\R^\mathfrak{I}$.

\subsection{Block vectors and matrices}\label{app:block_notation}

Suppose $\mathfrak{I}$ is a finite index set and $N:\mathfrak{I}\to\mathfrak{N}$ is a projection to another index set $\mathfrak{N}$. This can be viewed as partitioning $\mathfrak{I}$ into subsets $N^{-1}(n)$ for $n\in\mathfrak{N}$ and a terminology for dealing with accordingly ``partitioned'' block vectors from $\R^\mathfrak{I}$ is defined as follows. For any subset $\mathfrak{n}\in\mathfrak{N}$ the $\mathfrak{n}$-block space of $\R^\mathfrak{I}$ is $\R^{N^{-1}(\mathfrak{n})}$ and, for any $x\in\R^\mathfrak{I}$, the $\mathfrak{n}$-block of $x$ is the vector
$$(x^i)^{i\in N^{-1}(\mathfrak{n})} = \sum_{i\in N^{-1}(\mathfrak{n})} x^i\ee_i\ \in\ \R^{N^{-1}(\mathfrak{n})}$$
Note that the same symbols $\ee_i$ are implicitly used for the standard bases of $\R^\mathfrak{I}$ and $\R^{N^{-1}(\mathfrak{n})}$. We extend this terminology to single indices $n\in\mathfrak{N}$ by defining the $n$-block of $x$ simply as its $\{n\}$-block. The $\mathfrak{n}$-blocks and $n$-blocks of covectors from $(\R^\mathfrak{I})^\ast = \R_\mathfrak{I}$ are defined analogously by restricting the index set from $\mathfrak{I}$ to $N^{-1}(\mathfrak{n})$ etc. In the same fashion, given two subsets $\mathfrak{n},\mathfrak{n}'\subset\mathfrak{N}$ and a linear map $F:\mathfrak{I}\to\mathfrak{I}$, the $(\mathfrak{n},\mathfrak{n}')$-block of $F$ is defined as
$$\sum_{\stack{i\in N^{-1}(\mathfrak{n})}{i'\in N^{-1}(\mathfrak{n}')}} F^{i'}_i \ee_{i'}\otimes\ee^i\ \in\  \R^{N^{-1}(\mathfrak{n}')}\otimes\left(\R^{N^{-1}(\mathfrak{n})}\right)^\ast$$
and, for $n,n'\in\mathfrak{N}$, the $(n,n')$-block of $F$ is simply its $(\{n\},\{n'\})$-block. Furthermore, the principal $\mathfrak{n}$-block of $F$ is its $(\mathfrak{n},\mathfrak{n})$-block and the principal $n$-block is the principal $\{n\}$-block.
$F$ is called $\mathfrak{N}$-block-diagonal if, for any $n,n'\in\mathfrak{N}$, the $(n,n')$-block of $F$ is zero unless $n=n'$ (i.e. unless the block is principal). Additionally, the following notation allows to project to blocks inside the same ambient space:
\begin{align*}
	[x]_\mathfrak{n} &:= \sum_{i\in N^{-1}(\mathfrak{n})} x^i\ee_i\ \in\ \R^\mathfrak{I}\\
	[\nu]^\mathfrak{n} &:= \sum_{i\in N^{-1}(\mathfrak{n})} \nu_i\ee^i\ \in\ (\R^\mathfrak{I})^\ast\\
	[F]_{\mathfrak{n}'}^\mathfrak{n} &:= \sum_{\stack{i\in N^{-1}(\mathfrak{n})}{i'\in N^{-1}(\mathfrak{n}')}} F^{i'}_i \ee_{i'}\otimes\ee^i\ \in\  \R^{\mathfrak{I}'}\otimes\left(\R^\mathfrak{I}\right)^\ast
\end{align*}
A single-index version of this notation is defined as $[x]_n := [x]_{\{n\}}$, $[\nu]^{n} := [\nu]^{\{n\}}$, $[F]_{n'}^n := [F]_{\{n'\}}^{\{n\}}$.

\subsection{Non-negative matrices}\label{app:non-negative}

In this subsection, linear maps can be safely referred to as ``matrices'' with respect to the standard bases and some arbitrary ordering because the ordering is immaterial for present purposes. A matrix $F: \R^\mathfrak{I}\to\R^\mathfrak{I}$ is called non-negative, $F\geq 0$, resp. positive, $F>0$, if all entries $\ee^i F \ee_j$, $i,j\in\mathfrak{I}$ are non-negative resp. positive. $F$ is called irreducible if it doesn't preserve any non-trivial subspace of $\R^\mathfrak{I}$. If $F\geq 0$, this property is equivalent to
$$\forall i,i'\in\mathfrak{I}:\ \exists i_0,\ldots,i_L\in\mathfrak{I}:\ i_0=i,\ i_L=i',\ \forall 1\leq l \leq L:\ \ee^{i_{l}}F\ee_{i_{l-1}} > 0$$

The spectral radius of a matrix $F$ is $\max\set{|\lambda|}{\lambda \text{ is an Eigenvalue of } F}$.
For convenience, part of the classical Perron--Frobeniusn-Frobenius theorem for irreducible matrices is reproduced here from \cite[Ch.2, Thm.1.3, Thm.1.4]{Berman1994}:

\begin{lem}[Perron--Frobenius]\label{lem:Perron-Frobenius}
	Let $F$ be an irreducible non-negative matrix. Then the spectral radius of $F$ is a simple Eigenvalue whose one-dimensional Eigenspace is spanned by a vector with strictly positive entries. Furthermore, no other Eigenvector of $F$ has strictly positive entries.
\end{lem}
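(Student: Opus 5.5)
The plan is the classical Collatz--Wielandt argument, carried out in five steps. Write $m=|\mathfrak{I}|$ and let $F\geq 0$ be irreducible.

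\emph{Step 1: a positive power.} Show that $(\One+F)^{m-1}>0$. The $(i',i)$ entry of $(\One+F)^{m-1}$ dominates the sum, over all paths of length at most $m-1$ from $i$ to $i'$, of the products of $F$-entries along the path. By the path characterisation of irreducibility recalled above, every pair $i,i'$ is joined by such a path (a shortest path has length at most $m-1$, and $i=i'$ is covered by the identity term). Consequently, whenever $x\geq 0$, $x\neq 0$, we have $(\One+F)^{m-1}x>0$, and this vector is an eigenvector with the same eigenvalue whenever $x$ is.

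\emph{Step 2: construct the Perron root and a positive eigenvector.} For $x\geq 0$, $x\neq 0$, define
$$\varrho(x):=\min\set{\tfrac{\ee^i\cdot Fx}{x^i}}{x^i>0}.$$
This is the largest $t$ with $Fx\geq tx$. It is scale invariant and bounded by the maximal column sum of $F$. It is also nondecreasing under $x\mapsto(\One+F)^{m-1}x$, because this map commutes with $F$ and preserves $\geq$.

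Hence $r:=\sup_x\varrho(x)$ equals the supremum over the compact set $(\One+F)^{m-1}\Delta$, where $\Delta$ is the unit simplex. That set consists of strictly positive vectors, so $\varrho$ is continuous on it and attains its maximum at some $z>0$. If $Fz-rz\geq 0$ were nonzero, applying $(\One+F)^{m-1}$ would produce a $w>0$ with $Fw-rw>0$, giving $\varrho(w)>r$, a contradiction. Therefore $Fz=rz$ with $z>0$.

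\emph{Step 3: $r$ is the spectral radius.} Let $Fv=\lambda v$ with $v\neq 0$ complex. Taking absolute values entrywise gives $|\lambda||v|\leq F|v|$. Hence $|\lambda|\leq\varrho(|v|)\leq r$, and $r$ is the spectral radius.

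\emph{Step 4: simplicity.} This is the step I expect to need the most care.
\begin{itemize}
\item \emph{Geometric simplicity.} Suppose $w$ is a real eigenvector for $r$ not proportional to $z$. Choose $t$ so that $z-tw\geq 0$, $z-tw\neq 0$, and $z-tw$ has a zero entry. This is an eigenvector for $r$, and Step 1 applied to it gives $(1+r)^{m-1}(z-tw)>0$, contradicting the zero entry. Complex eigenvectors reduce to this case by taking real and imaginary parts.
\item \emph{Algebraic simplicity.} Apply Steps 1--2 to $F^\transpose$, which is also irreducible, to obtain a positive left eigenvector $u$ with $u\cdot F=r\,u$; the eigenvalue is the same $r$ since $F$ and $F^\transpose$ share their spectrum. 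If $r$ had a Jordan block of size at least $2$, there would be a $w$ with $(F-r\One)w=z$. Pairing with $u$ gives $0=u\cdot z>0$, which is absurd.
\end{itemize}

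\emph{Step 5: uniqueness of positive eigenvectors.} Suppose $Fw=\mu w$ with $w>0$. Then
$$r\,u\cdot w=u\cdot Fw=\mu\,u\cdot w,$$
and $u\cdot w>0$, so $\mu=r$. By Step 4, $w$ is then a multiple of $z$. So no eigenvector with strictly positive entries lies outside the span of $z$, which completes the lemma.
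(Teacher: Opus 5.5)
Your proof is correct. It cannot follow the paper's proof, because the paper gives none: it quotes the lemma from Berman--Plemmons (Ch.~2, Thms.~1.3--1.4) and uses it as a black box.

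What you supply instead is a complete, self-contained Wielandt-style argument:
\begin{itemize}
\item positivity of $(\One+F)^{m-1}$, from the path characterisation of irreducibility;
\item the Collatz--Wielandt function $\varrho$, maximised over the compact set $(\One+F)^{m-1}\Delta$ of strictly positive vectors;
\item the bound $|\lambda|\le\varrho(|v|)$, which identifies $r$ as the spectral radius;
\item geometric simplicity by pushing $z-tw$ to a zero entry;
\item algebraic simplicity and uniqueness via a positive left eigenvector $u$.
\end{itemize}

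The citation buys brevity. It also supplies, from the same source, the strict monotonicity result (Ch.~2, Cor.~1.5) that the paper invokes later. Your route buys independence from the reference, and a slightly stronger conclusion than the lemma states. Since $u>0$, your Step~5 works verbatim for any $w\ge 0$, $w\neq 0$, so every nonnegative eigenvector is a multiple of $z$. This nonnegative version is the form the paper actually relies on afterwards. Examples are concluding that a nonnegative block $[x]_n$ is a multiple of $\rho_n$, and that a block eigenvector of $F(x)$ has no zero entries. Your left eigenvector $u$ also plays exactly the role of the paper's $\varpi^n$.

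You were also right to work from the path characterisation rather than the paper's literal definition (``preserves no non-trivial subspace''). Taken at face value, that definition would exclude every nonnegative matrix of size at least $2$. The intended notion is that no non-trivial coordinate subspace is invariant.
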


The \emph{dominant Eigenvalue} of a matrix $F$ is defined as the Eigenvalue with maximum real part. A simple consequence of Lem.\ref{lem:Perron-Frobenius} is

\begin{lem}\label{lem:essentially-non-negative}
	Let $F$ be an irreducible matrix whose off-diagonal entries are non-negative, i.e. $\forall i\neq j: \ee^iF\ee_j \geq 0$. Then the dominant Eigenvalue of $F$ is real and simple, and its one-dimensional Eigenspace is spanned by a vector with strictly positive entries. Furthermore, no other Eigenvector of $F$ has strictly positive entries.
\end{lem}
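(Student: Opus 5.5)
The statement to prove is Lemma \ref{lem:essentially-non-negative}; I would reduce it to Lemma \ref{lem:Perron-Frobenius} by a diagonal shift.

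\textbf{Shift to a non-negative matrix.} Choose a real constant $\alpha \geq \max_i |\ee^i F \ee_i|$ and set $G := F + \alpha\One$. The off-diagonal entries of $G$ coincide with those of $F$, so they are non-negative. The diagonal entries of $G$ are non-negative by the choice of $\alpha$. Hence $G \geq 0$.

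\textbf{Irreducibility of $G$.} I would use the path characterization stated just before Lemma \ref{lem:Perron-Frobenius}. A path condition with $L \geq 1$ only involves off-diagonal entries, except for trivial self-loops, which can be deleted from any path. Since $F$ and $G$ share their off-diagonal entries, irreducibility of $F$ carries over to $G$. For this step I read irreducibility of $F$ through its off-diagonal sign pattern. The subspace-invariance definition given in the paper is likewise unaffected by adding a multiple of $\One$.

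\textbf{Eigenvalue correspondence.} The eigenvalues of $G$ are exactly $\lambda + \alpha$ for the eigenvalues $\lambda$ of $F$. The eigenvectors and algebraic multiplicities are the same for $F$ and $G$. Lemma \ref{lem:Perron-Frobenius} then gives the following: the spectral radius $\rho$ of $G$ is a simple eigenvalue, its eigenspace is spanned by a strictly positive vector, and no other eigenvector of $G$ is strictly positive.

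\textbf{Key step: $\rho$ is the dominant eigenvalue of $G$.} Let $\mu$ be any eigenvalue of $G$. Then $\Re\mu \leq |\mu| \leq \rho$. If $\Re\mu = \rho$, then $|\mu| = \rho$ forces $\mu = \rho$. So $\rho$ is the unique eigenvalue of $G$ with maximal real part. Translating back by $-\alpha$ preserves the ordering of real parts, so $\rho - \alpha$ is the dominant eigenvalue of $F$. It is real and simple, and its eigenspace is the Perron eigenspace of $G$, which is spanned by a strictly positive vector. Finally, any strictly positive eigenvector of $F$ is a strictly positive eigenvector of $G$, so it must lie in that eigenspace.

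The only delicate point is the irreducibility transfer, since the diagonal entries of $F$ may be negative. The path characterization handles this because it ignores the diagonal.
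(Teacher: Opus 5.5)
Your proposal is correct and takes the same route as the paper: shift $F$ by a multiple of $\One$ to get a non-negative irreducible matrix, apply Lemma \ref{lem:Perron-Frobenius}, and translate the spectrum back. Your explicit argument that the Perron root is the unique eigenvalue of maximal real part ($\Re\mu\le|\mu|\le\rho$, with equality forcing $\mu=\rho$) fills in a step the paper only asserts with ``and therefore the dominant Eigenvalue''.
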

\begin{proof}
	For large enough $s>0$, the matrix $s\One + F$ is non-negative and irreducible. According to Lem.\ref{lem:Perron-Frobenius} the spectral radius $r$ of $s\One + F$ is a simple, real Eigenvalue and therefore the dominant Eigenvalue, and the Eigenvectors of $s\One + F$ satisfy the positivity claims of the lemma. But these are exactly the Eigenvectors of $F$ with associated Eigenvalues translated by $-s$. In particular, $r-s$ is the dominant Eigenvalue of $F$.
\end{proof}

\subsection{Equilibria}\label{app:equilibria}

Let $\mathfrak{I}$ be a finite index set, $\mathcal{U}\subset\R^\mathfrak{I}$ an open subset and $v:\mathcal{U}\to\R^\mathfrak{I}$ a vector field. An equilibrium of $v$ is a point $u\in\mathcal{U}$ such that $v(u)=0$. The tangent space of such an equilibrium, i.e. $\R^\mathfrak{I}$, can be decomposed into the direct sum of three subspaces each of which is invariant under the Jacobian $\dd v(u)$. They are called stable resp. unstable resp. center subspace and are spanned by all generalized Eigenspaces of $\dd v(u)$ with Eigenvalues having negative resp. positive resp. zero real part. The center manifold theorem (\cite[Thm.3.2.1]{Wiggins2003}) asserts that in the vicinity of the equilibrium $u$ there are three corresponding manifolds containing $u$, namely stable, unstable and center manifold, which are tangent to the respective subspaces and invariant under the flow of $v$. Moreover, the trajectories in the stable resp. unstable manifold have the same asymptotic behavior close to $u$ as their linearizations in the stable resp. unstable subspaces, whereas the asymptotics in the center manifold are not determined by the respective linearization. Therefore, stability of the equilibrium can be shown in two steps
\begin{enumerate}
	\item Prove that the unstable subspace is zero.
	\item If the center subspace is non-zero, restrict the flow of $v$ to the center manifold and show that it locally converges to $u$
\end{enumerate}
More generally, a neutrally-stable attractor denotes a subset $\mathcal{U}_0\subset\mathcal{U}$ such that the flow of $v$ locally converges to $\mathcal{U}_0$ and $v\big{|}_{\mathcal{U}_0} = 0$. In light of the center manifold theorem, a sufficient criterion for $\mathcal{U}_0$ being a neutrally-stable attractor is that any $u\in\mathcal{U}_0$ is an equilibrium at which the unstable subspace is zero, the only Eigenvalue of $\dd v(u)$ with zero real part is zero and its generalized Eigenspace coincides with the tangent space $\mathrm{T}_u\mathcal{U}_0$.

\subsection{Generic parameter values}\label{app:genericity}

This term refers to parameter values in some dense open subset of parameter space where the relevant topology is taken as Zariski-topolgy in this article, i.e. closed sets are the zero sets of non-trivial polynomials. More precisely, observe that the parameter space $\mathcal{Q}$ defined in Section \ref{sec:model_class} is itself an open subset (Assumption \ref{ass:block_irreducibility}) of a closed subset (Table \ref{tbl:mechanisms}) in the ambient space of unrestricted parameter values. Thus a subset of $\mathcal{Q}$ is dense and open iff it is the complement of the common zeros of a set of polynomials which don't vanish identically on $\mathcal{Q}$. For example, in Corollary \ref{cor:trait-drift} the implicit claim is that two principal blocks of the matrix $C_1^{-1}K$ generically don't have the same dominant Eigenvalue. A suitable dense open subset of $\mathcal{Q}$ is the complement of the zero set of the discriminant of the characteristic polynomial of $C_1^{-1}K$.

\section{Macroscopic vector field}\label{app:macroscopic_vf}

In the model at hand, all events $k\in\mathfrak{K}$ are of kinetic order one, meaning that they represent spontaneous actions performed by single sessiles or propagules, as opposed to requiring several of them in the same place in order to happen. This implies that increasing the system size unit $S_0$ will not change the propensity functions $r_k$ and we can compute the mean dynamics as
$$\begin{pmatrix} x \\ y\end{pmatrix}^{\boldsymbol{\cdot}} = \frac{1}{S_0} \left\langle \sum_k r_k(X,Y) V_k(X,Y) \right\rangle = \sum_k r_k(x,y) \langle V_k(X,Y)\rangle$$
We are going to express the equations for $\dot{x}, \dot{y}$ in terms of the model parameters according to Table \ref{tbl:mechanisms}. To do this in a parsimonious way, we collect the parameters into matrices as follows:
\begin{align*}
	C_m =& \sum_i c_{m,i}\ee_i\otimes\ee^i,\ m=1,\ldots,5\\
	D =& \sum_{i,i'} d^{i'}_i \ee_{i'}\otimes \ee^i\\
	\tilde{D} =& \sum_{i,i'} \tilde{d}^{i'}_i \ee_{i'}\otimes \ee^i\\
	P_0 =& \sum_i p^i_0\ee_i\otimes\ee^i\\
	P =& \sum_{i,i'} p^i_{i'} \ee_i\otimes \ee^{i'}\\
	\bar{S} =& \sum_i \bar{S}_{B(i)} \ee_i\otimes\ee^i\\
	E =& \sum_{B(i)=B(i')} \ee_i\otimes\ee^{i'} 
\end{align*}
We now express $\dot{x}$ in terms of those matrices:
\begin{align*}
	\dot{x} =& -\sum_i c_{1,i} x^i \ee_i + \sum_i c_{5,i} y^i \left(\sum_{B(i')=B(i)} p^i_0 p^i_{i'} \frac{x^{i'}}{\bar{S}_{B(i')}} (\ee_i - \ee_{i'}) + p^i_0 \left(1-\sum_{B(i')=B(i)}\frac{x^{i'}}{\bar{S}_{B(i')}}\right)\ee_i\right)\\
	=& -\sum_i c_{1,i} x^i \ee_i - \sum_i\sum_{B(i')=B(i)} c_{5,i} y^i p^i_0 p^i_{i'} \frac{x^{i'}}{\bar{S}_{B(i')}}\ee_{i'} + \sum_i c_{5,i} y^i p^i_0  \left(1 -\sum_{B(i')=B(i)}(1-p^i_{i'}) \frac{x^{i'}}{\bar{S}_{B(i')}} \right)\ee_i \\
	=& -\sum_i c_{1,i} x^i \ee_i - \sum_i\sum_{B(i')=B(i)} c_{5,i} y^i p^i_0 \frac{1}{\bar{S}_{B(i)}} p^i_{i'} x^{i'}\ee_{i'} + \sum_i c_{5,i} y^i p^i_0 \frac{1}{\bar{S}_{B(i)}} \left(\bar{S}_{B(i)} -\sum_{B(i')=B(i)}(1-p^i_{i'}) x^{i'} \right)\ee_i \\
	=& -\sum_i c_{1,i} x^i \ee_i - \sum_{i'}\sum_{B(i)=B(i')} c_{5,i'} y^{i'} p^{i'}_0 \frac{1}{\bar{S}_{B(i')}} p^{i'}_i x^{i}\ee_{i} + \sum_i c_{5,i} y^i p^i_0 \frac{1}{\bar{S}_{B(i)}} \left(\bar{S}_{B(i)} -\sum_{B(i')=B(i)}(1-p^i_{i'}) x^{i'} \right)\ee_i \\
	=& - \sum_{i}\left(c_{1,i} + \sum_{B(i')=B(i)} p^{i'}_i \frac{1}{\bar{S}_{B(i')}} p^{i'}_0 c_{5,i'} y^{i'}\right)    x^{i}\ee_{i} + \sum_i \left(\bar{S}_{B(i)} -\sum_{B(i')=B(i)}(1-p^i_{i'}) x^{i'} \right) \frac{1}{\bar{S}_{B(i)}}p^i_0 c_{5,i} y^i \ee_i \\
	=& - \left(C_1 + \diag(P^\mathsf{T} \bar{S}^{-1} P_0 C_5 y)\right) x + \left(\bar{S} - \diag((E - P)x)\right) \bar{S}^{-1} P_0 C_5 y
\end{align*}

Similarly we may also express $\dot{y}$ in terms of the parameter matrices:
\begin{align*}
	\dot{y} =& -\sum_i c_{2,i} y^i \ee_i + \sum_i c_{3,i} x^i \sum_{i'} d^{i'}_i \ee_{i'} + \sum_i c_{4,i} y^i \left(\sum_{i'} \tilde{d}^{i'}_i (\ee_{i'} - \ee_i) 
	- \left(1 - \sum_{i'} \tilde{d}^{i'}_i\right)\ee_i\right) - \sum_i c_{5,i} y^i \ee_i\\
	=& -\sum_i c_{2,i} y^i \ee_i + \sum_{i,i'} d^{i'}_i c_{3,i} x^i \ee_{i'} + \sum_i c_{4,i} y^i \left(\sum_{i'} \tilde{d}^{i'}_i \ee_{i'} - \ee_i\right) - \sum_i c_{5,i} y^i \ee_i\\
	=& -\sum_i c_{2,i} y^i \ee_i + \sum_{i,i'} d^{i'}_i c_{3,i} x^i \ee_{i'} + \sum_{i,i'} \tilde{d}^{i'}_i c_{4,i} y^i \ee_{i'} - \sum_i c_{4,i} y^i\ee_i - \sum_i c_{5,i} y^i \ee_i\\
	=& -\sum_i c_{2,i} y^i \ee_i + \sum_{i,i'} d^{i}_{i'} c_{3,i'} x^{i'} \ee_{i} + \sum_{i,i'} \tilde{d}^{i}_{i'} c_{4,i'} y^{i'} \ee_{i} - \sum_i c_{4,i} y^i\ee_i - \sum_i c_{5,i} y^i \ee_i\\
	=& -C_2 y + D C_3 x + \tilde{D} C_4 y - C_4 y - C_5 y\\
	=& D C_3 x -\left(C_2 + (\One - \tilde{D}) C_4 + C_5\right) y\\
\end{align*}
\qed

\section{Equilibrium condition}\label{app:equilibrium_condition}

We show that the matrix $K = S^{-1} P_0 C_5 (C_2 + (\One-\tilde{D})C_4 + C_5)^{-1} D C_3$ is non-negative. Recall that by assumption all diagonal entries of $C_2$ are positive and the column sums of $\tilde{D}$ are bounded between 0 and 1. Thus $C_2 + (\One-\tilde{D})C_4 + C_5$ is strictly diagonally-dominant and hence a non-singular M-matrix (according to \cite[Ch.6,Thm.2.3.M35]{Berman1994}) with non-negative inverse (\cite[Ch.6,Thm.2.3.N38]{Berman1994}). As $K$ is obtained by multiplying this inverse with other non-negative matrices, $K$ is also non-negative.
\qed

\section{Quasi-static approximation}\label{app:quasi-static}

We first note that $\tilde{K}$ is positive for the exact same reasoning as in Appendix \ref{app:equilibrium_condition}.
In order to derive the quasi-static limit, we substitute the rescaled propensity coefficients $\tfrac1\kappa c_{m,i}$, $m=2,3,4$, into the respective parameter matrices yielding the substitutions $C_m \leftarrow \tfrac1\kappa C_m$. The macroscopic dynamics then becomes
\begin{equation}
	\begin{pmatrix}	x \\ y \end{pmatrix}^{\boldsymbol{\cdot}} 
	= \begin{pmatrix}
		-\left(C_1 + \diag(P^\mathsf{T}S^{-1}P_0 C_5 y)\right) & \left(S - \diag((E - P)x)\right)S^{-1}P_0C_5\\
		D\tfrac{C_3}{\kappa} & - (\tfrac{C_2}{\kappa} + (\One-\tilde{D})\tfrac{C_4}{\kappa} + C_5)
	\end{pmatrix}\begin{pmatrix} x \\ y \end{pmatrix}
\end{equation}
This amounts to the coupled ODEs
\begin{align}
	\dot{x} =& -\left(C_1 + \diag(P^\mathsf{T}S^{-1}P_0 C_5 y)\right) x + \left(S - \diag((E - P)x)\right)S^{-1}P_0C_5 y \label{eq:proof_quasi_static_deformed_sessiles}\\
	\kappa \dot{y} =& DC_3 x - (C_2 + (\One-\tilde{D})C_4 + \kappa C_5) y \label{eq:proof_quasi_static_deformed_propagules}
\end{align}
We transform (\ref{eq:proof_quasi_static_deformed_propagules}) to the fast time scale $\tau = t/\kappa$ to get
\begin{equation*}
	\tfrac{\dd y}{\dd \tau} = DC_3 x - (C_2 + (\One-\tilde{D})C_4 + \kappa C_5) y 
\end{equation*}
and, taking the limit $\kappa$, we obtain
\begin{equation}\label{eq:proof_quasi_static_deformed_propagules_adjoined}
	\tfrac{\dd y}{\dd \tau}(\tau; x) = DC_3 x - (C_2 + (\One-\tilde{D})C_4) y 
\end{equation}
Through the same reasoning as in Appendix \ref{app:equilibrium_condition}, we may observe that $(C_2 + (\One-\tilde{D})C_4)$ is a strictly diagonally-dominant M-matrix and thus invertible with non-negative inverse. Hence (\ref{eq:proof_quasi_static_deformed_propagules_adjoined}) has a unique non-negative equilibrium
\begin{equation}\label{eq:proof_quasi_static_deformed_propagules_adjoined_so}
	y = (C_2 + (\One-\tilde{D})C_4)^{-1} DC_3 x = (P_0 C_5)^{-1} S \tilde{K} x
\end{equation}
whose Jacobian
$$\tfrac{\partial}{\partial y} \tfrac{\dd y}{\dd \tau} = - (C_2 + (\One-\tilde{D})C_4)$$
is the negative of a non-singular M-matrix and thus all its Eigenvalues have negative real parts (\cite[Ch.6,Def.1.2]{Berman1994}). As (\ref{eq:proof_quasi_static_deformed_propagules_adjoined}) is an affine-linear ODE, we can conclude that  (\ref{eq:proof_quasi_static_deformed_propagules_adjoined_so}) is its unique globally-stable equilibrium. We can then bring to bear Tikhonov's theorem (\cite{Klonowski1983}), asserting that under these conditions the solutions of (\ref{eq:proof_quasi_static_deformed_sessiles},\ref{eq:proof_quasi_static_deformed_propagules}) converge to the quasi-static approximation given by  (\ref{eq:proof_quasi_static_deformed_sessiles}) with $y$ determined by (\ref{eq:proof_quasi_static_deformed_propagules_adjoined_so}). Substituting the latter into (\ref{eq:proof_quasi_static_deformed_sessiles}) concludes the proof.
\qed

\section{Equilibria of the trait-drift model}\label{app:trait-drift}

\subsection{Spectrum of $C_1^{-1}K$}\label{app:trait-drift_spectrum}
We already showed that $K$ is non-negative (see Appendix \ref{app:equilibrium_condition}). Looking at the parameter restrictions in Table \ref{tbl:mechanisms}, we see that all parameter matrices except for $P$ are $\mathfrak{N}$-block-diagonal, which then also holds for $K$. Furthermore, due to Assumption \ref{ass:block_irreducibility}, each principal $n$-block of $K$ is irreducible.
Hence $C_1^{-1}K$ is $\mathfrak{N}$-block-diagonal and every principal $n$-block is non-negative and irreducible. Then, according to the Perron--Frobenius theorem (see Lem.\ref{lem:Perron-Frobenius} in Appendix \ref{app:non-negative}), the spectral radius $r_n$ of any such block is a simple positive Eigenvalue (the Eigenvalue with maximum real part) and an associated Eigenvector can be chosen with strictly positive entries and sum of entries equal to $1$. Denoting by $\rho_n$ the standard injection (see Appendix \ref{app:block_notation}) of this Eigenvector into $\mathcal{V}$, we see that $r_n$ is still the spectral radius of $[C_1^{-1}K]^n_n$ and the associated Eigenspace in $\mathcal{V}$ is spanned by $\rho_n$.

\subsection{Proof of Proposition \ref{prop:trait-drift}}\label{app:trait-drift_proof}
Of course, the equilibrium condition (\ref{eq:trait-drift_EV_equation}) is always solved by the zero state. Apart from that, suppose $x$ is a non-zero solution in $\bar{\mathcal{X}}$. As the right-hand side of (\ref{eq:trait-drift_EV_equation}) is collinear to $x$, the solution $x$ must lie in an Eigenspace of $C_1^{-1}K$, namely the one with Eigenvalue $r = (1-\one^\transpose x)^{-1}$ or put differently
\begin{equation}\label{eq:trait-drift_proof_normalization}
	\one^\transpose x = 1 - \tfrac{1}{r}
\end{equation}
The condition $x\in\bar{\mathcal{X}}$ then implies that $x$ is non-negative and thus $r\geq 1$, and as we assumed $x$ to be non-zero, we even get $r>1$, i.e. $r\in\mathcal{R}$. Moreover, the condition that $x$ be non-negative implies via the Perron--Frobenius theorem (Lem.\ref{lem:Perron-Frobenius}) that, for any $n\in\mathfrak{N}$, $[x]_n$ is either zero or a positive multiple of $\rho_n$ for some $n$ with $r_n = r$. Thus $x$ lies in the convex cone spanned by $\set{\rho_n}{r_n=r}$ and, taking into account (\ref{eq:trait-drift_proof_normalization}) as well as the normalization $\one^\transpose \rho_n = 1$, we get $x\in\bar{\mathcal{X}}_{0,r}$, for some $r\in\mathcal{R}$.

What remains to be shown are the assertions about stability of equilibria. For this we will need to investigate the Eigenvalues of the Jacobian $\dd\bar{v}$ i.e. the solutions $s\in\C$, $\sigma\in \C^\mathfrak{I}\oplus\C^\mathfrak{I}$ of 
\begin{equation}\label{eq:trait-drift_proof_jacobianEV}
	\dd \bar{v}(x,y)\sigma = s\sigma
\end{equation}
where $x$ is a solution of the equilibrium condition  (\ref{eq:trait-drift_EV_equation}) and $y = (P_0 C_5)^{-1}Kx$. Writing $\sigma = (\sigma_x,\sigma_y)^\transpose$ we may express the Jacobian in sessile/propagule-block form as
\begin{equation}\label{eq:trait-drift_proof_jacobian_xy}
	\dd\bar{v}(x,y)\sigma = \begin{pmatrix}
		-C_1 & (1 - \one^\transpose x)P_0 C_5\\
		DC_3 & - (C_2 + C_4 + C_5)
	\end{pmatrix}\begin{pmatrix} \sigma_x \\ \sigma_y \end{pmatrix} + \begin{pmatrix}
		- (\one^\transpose \sigma_x) P_0 C_5 y \\ 0
	\end{pmatrix}
\end{equation}
We investigate stability of the zero and non-zero equilibria separately.

\textbf{Zero-equilibrium:} Here, we begin by proving a stability lemma for the general case. To that end, we write the equilibrium condition from Prop.\ref{prop:stationary} as
\begin{equation}\label{eq:trait-drift_proof_equilibrium_x}
	F(x) x = 0 \quad \text{ with }\quad F(x):= -\left(C_1 + \diag(P^\transpose Kx)\right) + (\bar{S} - \diag((E - P)x))K
\end{equation}
We observe that $F(x)$ is $\mathfrak{N}$-block-diagonal and denote by $z_0^n$ the Eigenvalue with largest real part of its principal $n$-block. As before, it can be shown that any such principal block is irreducible and has non-negative off-diagonal entries. Thus $z_0^n\in\R$ according to Lem.\ref{lem:essentially-non-negative}. 

\begin{lem}\label{lem:trait-drift_proof_stability_of_zero}
	The zero equilibrium in Prop.\ref{prop:stationary} is stable iff $\max_n z_0^n \leq 0$
\end{lem}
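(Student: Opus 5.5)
We will reduce stability of the zero state to the sign of the dominant Eigenvalue of the Jacobian $J:=\dd\bar v(0,0)$, and relate that Eigenvalue to $F(0)$ by a Schur-complement argument. From (\ref{eq:macroscopic_vectorfield}) we read off
$$J=\begin{pmatrix} -C_1 & P_0C_5\\ DC_3 & -M\end{pmatrix},\qquad M:=C_2+(\One-\tilde D)C_4+C_5 .$$
All off-diagonal entries of $J$ are non-negative, because the off-diagonal entries of $-M$ are those of $\tilde DC_4$. Moreover, $J$ is $\mathfrak{N}$-block-diagonal, with principal $n$-blocks that are irreducible by Assumption \ref{ass:block_irreducibility} together with $c_{3,i},c_{5,i}>0$; where some $p^i_0$ vanish, we would argue for generic parameters or perturb. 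Lem.\ref{lem:essentially-non-negative} then gives, for each $n$, a real simple dominant Eigenvalue $\lambda_n$ of the $n$-block with a strictly positive Eigenvector.

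For real $s$ with $s\One+M$ a non-singular M-matrix, in particular for all $s\geq 0$, we define the Metzler matrix
$$G(s):=-C_1-s\One+P_0C_5(s\One+M)^{-1}DC_3 .$$
Since $\bar SK=P_0C_5M^{-1}DC_3$, we have $G(0)=F(0)$. The non-negative resolvent $(s\One+M)^{-1}$ is entrywise non-increasing in $s$, so every entry of $G(s)$ is non-increasing and the diagonal is strictly decreasing. Let $g_n(s)$ be the dominant Eigenvalue of the principal $n$-block of $G(s)$, which is real by Lem.\ref{lem:essentially-non-negative}. By monotonicity of the Perron root, $g_n$ is continuous and strictly decreasing, with $g_n(s)\to-\infty$ as $s\to\infty$ and $g_n(0)=z^n_0$.

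The key equivalence is that $s\geq 0$ is an Eigenvalue of $J$ with positive Eigenvector iff $g_n(s)=0$ with positive Eigenvector. Indeed, eliminating $\sigma_y=(s\One+M)^{-1}DC_3\sigma_x$ turns $J\sigma=s\sigma$ into $G(s)\sigma_x=0$, and positivity is preserved in both directions. If $z^n_0>0$, the intermediate value theorem yields $s>0$ with $g_n(s)=0$, hence a positive Eigenvalue of $J$ and instability. Conversely, if $\lambda_n\geq 0$, its positive Eigenvector gives $g_n(\lambda_n)=0$, so $z^n_0=g_n(0)\geq 0$, with equality iff $\lambda_n=0$. Hence $\max_n z_0^n<0$ implies that $J$ is Hurwitz, so the zero state is linearly stable, and $\max_n z^n_0>0$ implies it is unstable.

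The main obstacle is the borderline case $\max_n z^n_0=0$, where $J$ has a zero Eigenvalue. It is simple within each block attaining the maximum, while all other Eigenvalues have negative real part. Following Appendix \ref{app:equilibria}, we restrict to the center manifold tangent to $\mathrm{span}\{\sigma^{(n)}\}$ and write the reduced dynamics with the left Perron vectors $\ell^n$, obtaining $\dot a_n=\ell^n\cdot Q(\sum_m a_m\sigma^{(m)})+O(|a|^3)$. Here $Q$ is the quadratic part of $\bar v$, namely $-\diag(P^\transpose\bar S^{-1}P_0C_5y)x-\diag((E-P)x)\bar S^{-1}P_0C_5y$. On non-negative vectors $Q$ is entrywise non-positive, and since $E-P\geq0$ with $(E-P)^i_i=1-p^i_i>0$ for generic parameters, it is strictly negative on the $n$-block of positive $\sigma^{(n)}$. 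This gives $\dot a_n\leq -c\,a_n(\sum_m a_m)$ with $c>0$ on the invariant non-negative orthant (the relevant domain, by the Corollary on invariance of $\bar{\mathcal X}\times\bar{\mathcal Y}$), so the reduced flow converges to $0$. This proves stability when $\max_n z^n_0=0$ and completes the equivalence.
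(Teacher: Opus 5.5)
Your proposal is correct and follows essentially the paper's own route: you eliminate $\sigma_y$ to reduce the Jacobian Eigenvalue problem to $(-(C_1+s)+K_s)\sigma_x=0$, use monotonicity of the M-matrix resolvent and the intermediate value theorem in $s$ when $\max_n z_0^n\neq 0$, and reduce to the center manifold along the Perron directions when $\max_n z_0^n=0$, where you treat several critical blocks jointly and the restriction to the non-negative orthant more carefully than the paper does. One small refinement: you do not need genericity for strict negativity of $Q$ on the $n$-block, because the $i'=i$ contributions from $P^\transpose$ and from $E-P$ add up to $\bar{S}_{B(i)}^{-1}p^i_0c_{5,i}x^iy^i>0$ whatever $p^i_i$ is, and the resulting bound $\dot a_n\le -c\,a_n^2+O(|a|^3)$, summed over $n$, already gives convergence on the orthant without the cross-term bound $-c\,a_n\sum_m a_m$, which the diagonal entry of $E-P$ alone does not justify.
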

\begin{proof}
	Let $s$ be the Eigenvalue with maximum real part solving
	\begin{equation}\label{eq:trait-drift_proof_EV_at_0}
		s\sigma = \dd \bar{v}(0,0)\sigma = \begin{pmatrix}
			-C_1 & P_0 C_5\\
			DC_3 & - (C_2 + (\One-\tilde{D})C_4 + C_5)
		\end{pmatrix} \sigma
	\end{equation}
	for some Eigenvector $\sigma$. Observe again that the matrix on the right-hand side is irreducible with non-negative off-diagonal entries, and thus Lem.\ref{lem:essentially-non-negative} implies $s\in\R$ for any putative solution $s,\sigma$. We can rearrange (\ref{eq:trait-drift_proof_EV_at_0}) into the two equations
	\begin{align}
		s\sigma_x =& (-C_1\sigma_x + K_s)\sigma_x \label{eq:quasi-species_zero_EVcondition}\\
		\sigma_y =& (P_0 C_5)^{-1} K_s \sigma_x \label{eq:quasi-species_zero_EVconditiony}
	\end{align}
	with $K_s:= P_0 C_5(s + C_2 + (\One-\tilde{D})C_4 + C_5)^{-1} DC_3$. Note that $K_s$ can in fact be defined that way whenever $s + C_2 + (\One-\tilde{D})C_4 + C_5$ is non-singular. This is guaranteed for any  $s>0$, because we in fact have
	\begin{equation}\label{eq:trait-drift_proof_Ks_inequality}
		s > 0\ \Rightarrow\ 0 < K_s < K
	\end{equation}
	This follows because for any non-singular M-matrix $A$, the matrix $A+s$ is also a non-singular M-matrix and
	\begin{equation}\label{eq:M-matrix_monotonicity}
		0 < (A+s)^{-1} < A^{-1}
	\end{equation}
	where the first inequality is again \cite[Ch.6,Thm.2.3.N38]{Berman1994} and the second can be seen through
	\begin{align*}
		(A+s)^{-1} =& (A+s)^{-1} - A^{-1} + A^{-1}\\
		=& (A+s)^{-1}(A - (A+s)) A^{-1} + A^{-1}\\
		=& \underbrace{-s (A+s)^{-1} A^{-1}}_{< 0} + A^{-1}
	\end{align*}
	Applying (\ref{eq:M-matrix_monotonicity}) to $A = C_2 + (\One-\tilde{D})C_4 + C_5$, we get the claim
	\begin{align*}
		0 < K_s &= P_0 C_5(C_2 + s + (\One - \tilde{D})C_4 + C_5)^{-1} DC_3\\
		&< P_0 C_5(C_2 + (\One - \tilde{D})C_4 + C_5)^{-1} DC_3\\
		&= K
	\end{align*}
	Returning to (\ref{eq:quasi-species_zero_EVcondition}, \ref{eq:quasi-species_zero_EVconditiony}), we observe that only (\ref{eq:quasi-species_zero_EVcondition}) is a non-trivial constraint on $\sigma_x$ and can be rearranged to
	\begin{equation}\label{eq:trait-drift_proof_EVconditionx}
		\left(-(C_1+s) + K_s\right)\sigma_x = 0
	\end{equation}
	i.e. $\sigma_x$ must be an Eigenvector of $-(C_1+s) + K_s$ with Eigenvalue 0. Observe again that this matrix is $\mathfrak{N}$-block-diagonal and that each principal $n$-block has a real dominant Eigenvalue due to Lem.\ref{lem:essentially-non-negative}.
	
	\begin{itemize}[leftmargin=*, itemsep=3pt]
		\item Case $\max_n z_0^n > 0$: Take $n\in\mathfrak{N}$ such that the principal $n$-block of $-C_1 + K$ has dominant Eigenvalue $z_0^n > 0$. Thus, the dominant Eigenvalue of the principal $n$-block of $-(C_1+s) + K_s$ is positive for $s=0$ and tends to $-\infty$ for $s\to\infty$, as all Eigenvalues of this matrix do (due to (\ref{eq:trait-drift_proof_Ks_inequality})). As the dominant Eigenvalue depends on $s$ continuously and is always real, there must be some $s>0$ such that the dominant Eigenvalue equals 0, which implies a non-trivial solution of (\ref{eq:trait-drift_proof_EVconditionx}) and hence shows that the zero equilibrium in Prop.\ref{prop:stationary} is unstable
		
		\item Case $\max_n z_0^n \leq 0$:  We need to prove that the zero equilibrium in Prop.\ref{prop:stationary} is stable. We first look for (linearly) unstable directions, i.e. we consider (\ref{eq:trait-drift_proof_EVconditionx}) with $\Re(s)>0$ and try to find a solution $\sigma_x$. Recall from above that we actually have $s\in\R$ for any such solution, and thus $s>0$. Recalling (\ref{eq:trait-drift_proof_Ks_inequality}) we may apply \cite[Ch.2,Cor.1.5]{Berman1994} to show that the Perron--Frobenius Eigenvalue of the principal $n$-block of $K_s$ is strictly smaller than the respective Eigenvalue of $K$, implying that the principal $n$-block of $-(C_1+s) + K_s$ has dominant Eigenvalue smaller than $z_0^n \leq 0$. Thus any Eigenvalue of any principal $n$-block of $-(C_1+s) + K_s$ has real part strictly negative and hence (\ref{eq:trait-drift_proof_EVconditionx}) cannot have a solution $\sigma_x$, meaning that there are no linearly unstable directions at the zero equilibrium in Prop.\ref{prop:stationary}.
		
		However this doesn't quite imply that the zero equilibrium is stable because there may still be directions in which the dynamics is unstable but not linearly unstable, meaning directions tangent to the center manifold at the zero equilibrium (see Appendix \ref{app:equilibria}). Hence we need to investigate $s=0$ and look for a solution $\sigma_x$ of  (\ref{eq:trait-drift_proof_EVconditionx}) which now becomes
		\begin{equation}\label{eq:trait-drift_proof_EVconditionx0}
			(-C_1 + K) \sigma_x = 0
		\end{equation}
		
		If $z_0^n<0$, for some $n\in\mathfrak{N}$, then all Eigenvalues of the principal $n$-block of $-C_1 + K$ have strictly negative real parts and there cannot be a solution of (\ref{eq:trait-drift_proof_EVconditionx0}) restricted to that block, meaning that the center subspace at the zero equilibrium has zero intersection with $[\mathcal{V}]_n\oplus[\mathcal{V}]_n$.
		
		The case $z_0^n = 0$ is different. Here, the center subspace at the zero equilibrium has non-zero intersection with  $[\mathcal{V}]_n\oplus[\mathcal{V}]_n$ and therefore stability analysis requires to investigate the restriction of $\bar{v}$ to this intersection. To that end, suppose we have $n\in\mathfrak{N}$ with $z_0^n=0$. We want to find a solution $\sigma$ to (\ref{eq:trait-drift_proof_EV_at_0}) with $s=0$, i.e.  $\sigma\in\ker\dd\bar{v}(0,0)$. We already observed that $\dd\bar{v}(0,0)$ is $\mathfrak{N}$-block-diagonal with irreducible principal $n$-blocks (each of those principal blocks being comprised of an  $n$-sessile/$n$-sessile block, an $n$-propagule/$n$-propagule block and the remaining two mixed blocks) and non-negative off-diagonal entries. Lem.\ref{lem:essentially-non-negative} then implies that any principal $n$-block of $\dd\bar{v}(0,0)$ has a one-dimensional kernel spanned by a Perron--Frobenius Eigenvector, which can be chosen wlog. to have only positive entries. We denote by $\sigma_n\in\mathcal{V}\oplus\mathcal{V}$ the image of this vector under the standard basis injection of the $n$-block into $\mathcal{V}\oplus\mathcal{V}$, i.e. all non-$n$-block entries of $\sigma_n$ are zero. Likewise the left-kernel of the $n$-block of $\dd\bar{v}(0,0)$ is one-dimensional and spanned by a covector with only positive entries and we denote by $\grave{\sigma}_n$ the image of this covector under the standard co-basis injection into $\mathcal{V}^\ast\oplus\mathcal{V}^\ast$. We assume wlog. that $\grave{\sigma}_n$ is normalized such that $\grave{\sigma}_n\cdot\sigma_n = 1$. We now proceed similar to \cite{Leen1993} by observing that there is a smooth parametrization  of the one-dimensional center manifold generated by $\sigma_n$ satisfying $$\tau\mapsto\varphi(\tau) \text{ with } \varphi(0) = (0,0)^\transpose \text{ and } \tfrac{\dd \varphi}{\dd \tau}(0) = \sigma_n$$
		As this center manifold is preserved by the flow of $\bar{v}$, there exists $\alpha:\R\to\R$ with
		\begin{equation}\label{eq:trait-drift_proof_central0}
			\bar{v}(\varphi(\tau)) = \alpha(\tau) \tfrac{\dd\varphi}{\dd \tau}
		\end{equation}
		and, evaluating at $\tau=0$, we see that necessarily
		$$\alpha(0) = 0$$
		because $\bar{v}(0,0) = 0$.
		Note that the sign of the first non-vanishing derivative of $\alpha$ at $0$ will tell us whether the zero equilibrium is stable in direction of $\sigma_n$. Taking the derivative of (\ref{eq:trait-drift_proof_central0}) by $\tau$, we get
		\begin{equation}\label{eq:trait-drift_proof_central1}
			\dd\bar{v}(\varphi(\tau))\tfrac{\dd \varphi}{\dd \tau} 
			= \tfrac{\dd}{\dd\tau} \bar{v}(\varphi(\tau)) 
			= \tfrac{\dd \alpha}{\dd \tau} \tfrac{\dd\varphi}{\dd \tau} + \alpha\tfrac{\dd^2\varphi}{\dd \tau^2}
		\end{equation}
		and, evaluating this at $\tau=0$, we deduce
		$$\frac{\dd \alpha}{\dd \tau} = 0$$
		because $\tfrac{\dd\varphi}{\dd \tau}(0) = \sigma_n$ is in the kernel of $\dd\bar{v}(0,0)$. 
		Taking one more derivative of (\ref{eq:trait-drift_proof_central1}), we get
		\begin{equation*}
			\dd^{(2)}\bar{v}(\varphi(\tau))\cdot \left(\tfrac{\dd \varphi}{\dd \tau} \right)^{\otimes 2} + \dd\bar{v}(\varphi(\tau))\tfrac{\dd^2 \varphi}{\dd \tau^2}
			= \tfrac{\dd^2 \alpha}{\dd \tau^2} \tfrac{\dd\varphi}{\dd \tau} + 2\tfrac{\dd \alpha}{\dd \tau} \tfrac{\dd^2\varphi}{\dd \tau^2} + \alpha\tfrac{\dd^3\varphi}{\dd \tau^3}
		\end{equation*}
		and, evaluating this at $\tau=0$, we obtain
		\begin{equation*}
			\dd^{(2)}\bar{v}(0,0)\cdot \sigma_n^{\otimes 2} + \dd\bar{v}(0,0)\tfrac{\dd^2 \varphi}{\dd \tau^2}(0)
			= \tfrac{\dd^2 \alpha}{\dd \tau^2}(0) \sigma_n
		\end{equation*}
		To get the second derivative of $\alpha$, we multiply this equation with the left Eigenvector $\grave{\sigma}_n$ and obtain
		\begin{align*}
			\frac{\dd^2 \alpha}{\dd \tau^2}(0) &= \grave{\sigma}_n \cdot \dd^{(2)}\bar{v}(0,0)\cdot \sigma_n^{\otimes 2}\\
			&= -\grave{\sigma}_{n,x}\left(\diag(P^\transpose \bar{S}^{-1}P_0 C_5 \sigma_{n,y})\sigma_{n,x} + \diag((E - P)\sigma_{n,x})\bar{S}^{-1}P_0C_5 \sigma_{n,y}\right)\\
			&< 0
		\end{align*}
	\end{itemize}
	Therefore the zero equilibrium is stable in all directions, which concludes the proof of the lemma.
\end{proof}

We now show that the condition $\mathcal{R} \neq \varnothing$ translates precisely to $\max_n z_0^n > 0$. This is true because $z_0^n$ is the dominant Eigenvalue of the principal $n$-block of $-C_1 + K$ while, on the other hand, the dominant Eigenvalue of the principal $n$-block of $-C_1$ is negative. We interpolate between the two by considering the one-parameter family $r\mapsto -C_1 + \tfrac{1}{r}K$. The dominant Eigenvalue of its $n$-block is always real due to Lem.\ref{lem:essentially-non-negative} and depends continuously on the parameter $r$. The statement $z_0^n>0$ is therefore equivalent to the existence of a real number $r_n>1$ such that the principal $n$-block of $-C_1 + \tfrac{1}{r_n}K$ has dominant Eigenvalue 0. But such an $r_n$ is evidently the dominant Eigenvalue of the principal $n$-block of $C_1^{-1}K$ which shows that $\max_n z_0^n > 0$ is equivalent to $\mathcal{R}\neq \varnothing$. Thus we may apply Lem.\ref{lem:trait-drift_proof_stability_of_zero}, concluding the proof of the Theorem's statement about stability of the zero equilibrium.

\textbf{Non-zero equilibrium:} We now examine the case $\mathcal{R}\neq \varnothing$ and $x\in \bar{\mathcal{X}}_{0,r}$ for some $r \in \mathcal{R}\setminus\{\max(\mathcal{R})\}$. In this case, we may rewrite the Jacobian Eigenvalue equation (\ref{eq:trait-drift_proof_jacobianEV}) as
\begin{align}
(s+C_1 - (1-\one^\transpose x)K_s)\sigma_x &= - (\one^\transpose\sigma_x)K_0 x \label{eq:trait-drift_jacobianEVx}\\
\sigma_y =& (P_0 C_5)^{-1} K_s \sigma_x \label{eq:trait-drift_jacobianEVy}
\end{align}
with $K_s= P_0 C_5(s + C_2 + C_4 + C_5)^{-1} DC_3$ being defined if $s + C_2 + C_4 + C_5$ is non-singular, so in particular if $\Re(s)\geq 0$.
We will prove instability of the equilibrium by constructing a solution $s>0$ and $\sigma\neq 0$ to (\ref{eq:trait-drift_jacobianEVx},\ref{eq:trait-drift_jacobianEVy}). To that end, recall that $x\in\bar{\mathcal{X}}_{0,r}$ implies $C_1^{-1}K_0x = rx$. We may then turn (\ref{eq:trait-drift_jacobianEVx}) into
\begin{equation}\label{eq:trait-drift_proof_jacobianEVxa}
	(r\One - (C_1+s)^{-1}K_s)\sigma_x = - r^2 (\one^\transpose\cdot\sigma_x)(C_1+s)^{-1}C_1 x
\end{equation}
for any $s> 0$. Observe as previously done that the spectral radii of all principal blocks of $(C_1+s)^{-1}K_s$ are continuous  functions of $s$ which tend to 0 when $s\to\infty$. In particular, take a quasi-species  $n_1\in\mathfrak{N}$ with $r_{n_1} > r$, which exists by the  current assumption $r<\max\mathcal{R}$, meaning that the spectral radius of $[C_1^{-1}K_0]^{n_1}_{n_1}$ is larger than $r$. Then by continuity there exists $s_1>0$ such that the spectral radius of $[(C_1+s_1)^{-1}K_{s_1}]^{n_1}_{n_1}$ equals $r$. We consider the  corresponding Perron--Frobenius Eigenvector $\sigma_{x,n_1}\in\mathcal{V}$ which has strictly positive entries in its $n$-block and zero entries in all other blocks, thus in particular
\begin{equation}\label{eq:trait-drift_proof_intermed_n1}
	\one^\transpose \cdot \sigma_{x,n_1} > 0
\end{equation}
and satisfies
$$[r\One - (C_1+s_1)^{-1}K_{s_1}]^{n_1}_{n_1}\sigma_{x,n_1} = 0$$
On the other hand, take $n_0\in\mathfrak{N}$ such that $r_{n_0} = r$ and 
\begin{equation}\label{eq:trait-drift_proof_intermed_n0}
	N(i)=n_0 \Rightarrow \ee^i\cdot x > 0
\end{equation}
where such an $i\in\mathfrak{I}$ exists because $x\in\bar{\mathcal{X}}_r$. As the spectral radius of $[ C_1^{-1}K_0]^{n_0}_{n_0}$ is equal to $r$, we may deduce from (\ref{eq:trait-drift_proof_Ks_inequality}) and \cite[Ch.2,Cor.1.5]{Berman1994} that the spectral radius of $[(C_1+s_1)^{-1}K_{s_1}]^{n_0}_{n_0}$ is strictly smaller than $r$, meaning that the principal $n_0$-block of $[r\One - (C_1+s_1)^{-1}K_{s_1}]^{n_0}_{n_0}$ is a non-singular irreducible M-matrix and thus has strictly positive inverse. We may then form the Penrose-Moore pseudoinverse $\left([r\One - (C_1+s_1)^{-1}K_{s_1}]^{\mathfrak{N}_0}_{\mathfrak{N}_0}\right)^+$  by simply inverting the principal $n_0$-block and leaving all other entries as 0 and define
$$\sigma_{x,n_0} := - r^2([r\One - (C_1+s_1)^{-1}K_{s_1}]^{n_0}_{n_0})^+  (C_1+s_1)^{-1}C_1 x$$
Due to (\ref{eq:trait-drift_proof_intermed_n0}) we have
$\one^\transpose\cdot \sigma_{x,n_0} < 0$, and, combining this with (\ref{eq:trait-drift_proof_intermed_n1}), we can find $u > 0$ such that
\begin{equation}\label{eq:trait-drift_proof_linear_combination}
	\one^\transpose\cdot \left(\sigma_{x,n_0} + u\sigma_{x,n_1}\right) = 1
\end{equation}
We now define $\sigma_x := \sigma_{x,n_0} + u\sigma_{x,n_1}$ and may promptly verify that it solves (\ref{eq:trait-drift_proof_jacobianEVxa}) with $s=s_1>0$, proving that any $(x,y)^\transpose\in\bar{\mathcal{X}}_{0,r}$, $r<\max\mathcal{R}$, is indeed an unstable equilibrium.
\qed
	
\subsection{Proof of Proposition \ref{prop:trait-drift_special}}\label{app:trait-drift_special_proof}

As done previously in Appendix \ref{app:trait-drift_proof}, we consider the Jacobian Eigenvalue equation (\ref{eq:trait-drift_proof_jacobianEV},\ref{eq:trait-drift_proof_jacobian_xy}) and investigate solutions $s,\sigma$ with $\Re(s)\geq 0$. Using $(1 - \one^\transpose x) = \tfrac1r$, $y=(P_0 C_5)^{-1}Kx$ and the equilibrium condition $C_1^{-1}Kx=rx$ we may restate (\ref{eq:trait-drift_proof_jacobianEV},\ref{eq:trait-drift_proof_jacobian_xy}) as
\begin{equation}\label{eq:trait-drift_proof_jacobian_EV_special}
	0 = \dd\bar{v}(x,y)\sigma = \underbrace{\begin{pmatrix}
		-(C_1 + s) & \tfrac{1}{r}P_0 C_5\\
		DC_3 & - (C_2 + C_4 + C_5 + s)
	\end{pmatrix}}_{(\ast)}\begin{pmatrix} \sigma_x \\ \sigma_y \end{pmatrix} + \begin{pmatrix}
		- (\one^\transpose \sigma_x) rC_1 x \\ 0
	\end{pmatrix}
\end{equation}
where $r=\max(\mathcal{R})$ due to the assumption $x\in\bar{\mathcal{X}}_{0,\max(\mathcal{R})}$. Moreover this assumption implies that, for any $n'\in\mathfrak{N}$ with $r_{n'} < \max(\mathcal{R})$, the trailing vector summand in (\ref{eq:trait-drift_proof_jacobian_EV_special}) has zero $n'$-block (in sessile and propagule variables) and thus the $n'$-block of (\ref{eq:trait-drift_proof_jacobian_EV_special}) is just the principal $n'$-block of the $\mathfrak{N}$-block-diagonal matrix $(\ast)$ times the $n'$-block of $\sigma$, i.e. the $n'$-block of $\sigma$ has to be in the kernel of the principal $n'$-block of $(\ast)$. On the other hand, the dominant Eigenvalue of $(\ast)$, i.e. the Eigenvalue with maximum real part, is $-s$ and the associated Eigenspace is $\mathrm{span}(\set{(\rho_n, (P_0 C_5)^{-1}K\rho_n)}{n\in\mathfrak{N},\ r_n = r})$ for $s=0$ and hence for any $s$. It follows that, for $\Re(s)\geq 0$, the kernel of $(\ast)$ can only be non-zero for $s=0$, and the same holds for the principal $n'$-block of $(\ast)$. But for $s=0$, the dominant Eigenspace of the principal $n'$-block of $(\ast)$ is the $n'$-block of $\mathrm{span}(\set{(\rho_n, (P_0 C_5)^{-1}K\rho_n)}{n\in\mathfrak{N},\ r_n = r})$ which is zero. We may conclude that for any putative solution of $s,\sigma$ of (\ref{eq:trait-drift_proof_jacobian_EV_special}) with $\Re(s)\geq 0$, the   $n'$-block of $\sigma$ needs to be zero for any $n'\in\mathfrak{N}$ with $r_{n'} < \max(\mathcal{R})$. 

For the remainder of this proof we may thus assume wlog. that $r_n=r$ for any $n\in\mathfrak{N}$, i.e. that there are no subdominant quasi-species.
The assumption $\Re(s)\geq 0$ means that we can form $K_s := (P_0 C_5)^{-1}(C_2 + C_4 + C_5 + s)^{-1}DC_3$ and rewrite (\ref{eq:trait-drift_proof_jacobian_EV_special}) as
\begin{align}
	\sigma_y &= P_0 C_5 K_s \sigma_x \nonumber\\
	0 &= \left(C_1 + s - \tfrac{1}{r}K_s + r C_1 x \otimes \one^\transpose\right)\sigma_x \label{eq:trait-drift_proof_special_intermed0}
\end{align}
Note that these equations have a solution iff
\begin{equation}\label{eq:trait-drift_proof_special_intermed1}
	0 = \det\left(C_1 + s - \tfrac{1}{r}K_s + r C_1 x \otimes \one^\transpose\right)
\end{equation}

We now define the set of complex numbers $\C^+:=\set{s\in\mathbb{C}}{\Re(s)\geq 0 \text{ and } s\neq 0}$ and  distinguish two cases:
\begin{itemize}[leftmargin=*, itemsep=3pt]
	\item Case $s\in\C^+$: 
	Observe that $C_1 + s - \tfrac{1}{r}K_s$ is non-singular, for otherwise there would be some non-zero $\tilde{\sigma}_x$ in its kernel and by setting $\tilde{\sigma}_y:= P_0 C_5 K_s \tilde{\sigma}_x$ we could construct a non-zero vector $(\tilde{\sigma}_x, \tilde{\sigma}_y)^\transpose$ in the kernel of $(\ast)$. But such a vector can't exist because, as observed previously, $(\ast)$ has dominant Eigenvalue $-s$ with $s\in\C^+$. Thus we may rewrite (\ref{eq:trait-drift_proof_special_intermed1}) using the matrix determinant lemma for rank-1 modifications of a non-singular matrix as
	\begin{equation}\label{eq:trait-drift_proof_special_intermed2}
		0 = \left(1 + r \one^\transpose \cdot \left(C_1 + s - \tfrac{1}{r}K_s\right)^{-1} \cdot C_1 x\right)\det\left(C_1 + s - \tfrac{1}{r}K_s\right)
	\end{equation}
	We just observed that the trailing determinant is non-zero and thus (\ref{eq:trait-drift_proof_special_intermed2}) is equivalent to
	\begin{equation}\label{eq:trait-drift_proof_special_intermed3}
		0 = 1 + r \one^\transpose \cdot \left(C_1 + s - \tfrac{1}{r}K_s\right)^{-1} \cdot C_1 x
	\end{equation}
	Recall now that the Proposition assumes there exist $\tilde{c}_1, \tilde{c}_2 > 0$ such that $C_1 = \tilde{c}_1 \One$ and $\tilde{C}_2 := (C_2 + C_4 + C_5) = \tilde{c}_2 \One$. As a consequence we have 
	$$K_s = \tfrac{1}{\tilde{c}_2 + s} (P_0 C_5)^{-1} D C_3 = \tfrac{\tilde{c}_2}{\tilde{c}_2 + s} \tfrac{1}{\tilde{c}_2}(P_0 C_5)^{-1} D C_3 = \tfrac{\tilde{c}_2}{\tilde{c}_2 + s} K$$
	and hence (\ref{eq:trait-drift_proof_special_intermed3}) can be rewritten as
	\begin{align}
		0 &= 1 + r \tilde{c}_1\one^\transpose \cdot \left(\tilde{c}_1 + s - \tfrac{\tilde{c}_2}{\tilde{c}_2 + s}\tfrac{1}{r}K\right)^{-1} \cdot x \nonumber\\
		&= 1 + r \tilde{c}_1\one^\transpose \cdot \left(\tilde{c}_1 + s - \tfrac{\tilde{c}_1\tilde{c}_2}{\tilde{c}_2 + s} + \tfrac{\tilde{c}_1\tilde{c}_2}{\tilde{c}_2 + s} - \tfrac{\tilde{c}_2}{\tilde{c}_2 + s}\tfrac{1}{r}K\right)^{-1} \cdot x
		\nonumber\\
		&= 1 + r \tilde{c}_1\one^\transpose \cdot \left(\tilde{c}_1 + s - \tfrac{\tilde{c}_1\tilde{c}_2}{\tilde{c}_2 + s} + \tfrac{\tilde{c}_2}{\tilde{c}_2 + s} \left(\tilde{c}_1 - \tfrac{1}{r}K\right)\right)^{-1} \cdot x
		\nonumber\\
		&= 1 + r \tilde{c}_1\one^\transpose \cdot \left(\tilde{c}_1 + s - \tfrac{\tilde{c}_1\tilde{c}_2}{\tilde{c}_2 + s} + \tfrac{\tilde{c}_2}{\tilde{c}_2 + s} \left(C_1 - \tfrac{1}{r}K\right)\right)^{-1} \cdot x \label{eq:trait-drift_proof_special_intermed4}
	\end{align}
	We take the subexpression $f(s):= \tilde{c}_1 + s - \tfrac{\tilde{c}_1\tilde{c}_2}{\tilde{c}_2 + s}$ and compute its real part
	\begin{align*}
		\Re(f(s)) &= \Re\left(\frac{(\tilde{c}_1 + s)(\tilde{c}_2 + s) -\tilde{c}_1 \tilde{c}_2}{\tilde{c}_2 + s}\right)\\
		&= \Re\left(\frac{\tilde{c}_1 s + s(\tilde{c}_2 + s)}{\tilde{c}_2 + s}\right)\\
		&=  \tilde{c}_1 \Re\left(\frac{ s}{\tilde{c}_2 + s}\right) + \Re(s)\\
		&= \frac{\tilde{c}_1}{|\tilde{c}_2 + s|^2} \Re(\tilde{c}_2 + s^\ast) + \Re(s)\\
		&=  \frac{\tilde{c}_1\tilde{c}_2}{|\tilde{c}_2 + s|^2} + \Re(s)\left(1 + \frac{\tilde{c}_1}{|\tilde{c}_2 + s|^2}\right)\\
		&\geq \frac{\tilde{c}_1\tilde{c}_2}{|\tilde{c}_2 + s|^2}\\
		&> 0
	\end{align*}
	where we used $s\in\C^+$ in the second-to-last estimate.
	In particular, we have $f(s)\neq 0$ and may rewrite (\ref{eq:trait-drift_proof_special_intermed4}) further as
	\begin{align}
		0 &= 1 + r \tilde{c}_1\one^\transpose \cdot f(s)^{-1}\left(1 + f(s)^{-1} \tfrac{\tilde{c}_2}{\tilde{c}_2 + s} \left(C_1 - \tfrac{1}{r}K\right)\right)^{-1} \cdot x\nonumber\\
		&= 1 + r \tilde{c}_1 f(s)^{-1} \one^\transpose \cdot \left(1 + f(s)^{-1} \tfrac{\tilde{c}_2}{\tilde{c}_2 + s} \left(C_1 - \tfrac{1}{r}K\right)\right)^{-1} \cdot x\nonumber\\
		&= 1 + r \tilde{c}_1 f(s)^{-1} \one^\transpose \cdot \left(1 - \left(1 + f(s)^{-1} \tfrac{\tilde{c}_2}{\tilde{c}_2 + s} \left(C_1 - \tfrac{1}{r}K\right)\right)^{-1} f(s)^{-1} \tfrac{\tilde{c}_2}{\tilde{c}_2 + s} \left(C_1 - \tfrac{1}{r}K\right)\right) \cdot x\nonumber\\
		&= 1 + r \tilde{c}_1 f(s)^{-1} \one^\transpose \cdot x\nonumber\\
		&= 1 + (r-1) \tilde{c}_1 f(s)^{-1} \label{eq:trait-drift_proof_special_intermed5}
	\end{align}
	where we used $(C_1 - \tfrac{1}{r}K)x = 0$ and $\one^\transpose \cdot x = 1-\tfrac{1}{r}$. Taking the real part of (\ref{eq:trait-drift_proof_special_intermed5}) we arrive at the contradiction
	\begin{align*}
		0 &= \Re(1 + (r-1) \tilde{c}_1 f(s)^{-1})\\ 
		&= 1 + (r-1) \tilde{c}_1 \Re(f(s)^{-1})\\
		&= 1 + (r-1) \tilde{c}_1 \frac{\Re(f(s))}{|f(s)|^2}\\
		&> 1
	\end{align*}
	proving that a solution $s,\sigma$ of (\ref{eq:trait-drift_proof_jacobian_EV_special}) with $s\in\C^+$ cannot exist.
	\item Case $s=0$: Condition (\ref{eq:trait-drift_proof_special_intermed0}) can now be rewritten as
	\begin{equation}\label{eq:trait-drift_proof_special_intermed6}
		\left(1 - \tfrac{1}{r}C_1^{-1}K\right)\sigma_x = -(\one^\transpose\sigma_x)r x
	\end{equation}
	Recall that $x$ is a dominant Eigenvector of $1 - \tfrac{1}{r}C_1^{-1}K$ with Eigenvalue $0$. Denote by $\grave{x}\in\mathcal{V}^\ast$ the corresponding left-Eigenvector normalized as $\grave{x}\cdot x = 1$. We multiply (\ref{eq:trait-drift_proof_special_intermed6}) with $\grave{x}$ to obtain
	\begin{equation*}
		0 = -(\one^\transpose\sigma_x)r \grave{x}\cdot x = -(\one^\transpose\sigma_x)r
	\end{equation*}
	Thus $\sigma_x$ has to satisfy
	\begin{align*}
		\one^\transpose\sigma_x &= 0\\
		\left(1 - \tfrac{1}{r}C_1^{-1}K\right)\sigma_x &= 0
	\end{align*}
	meaning that it is tangent to $\bar{\mathcal{X}}_{0,r}$. In other words, the simplex 
	$$\bar{\mathcal{M}}_{0,r} := \set{(x, (P_0 C_5)^{-1}K x)\in \bar{\mathcal{X}}\times\bar{\mathcal{Y}}}{x \in \bar{\mathcal{X}}_{0,r}}$$ 
	is a center manifold (with corners) for the trait-drift dynamics. 
	The fact that every point of $\bar{\mathcal{M}}_{0,r}$ is an equilibrium means that $\bar{v}\big{|}_{\bar{\mathcal{M}}_{0,r}} = 0$ and thus $\bar{\mathcal{M}}_{0,r}$ is a neutrally-stable attractor (see \ref{app:equilibria}).
	
	
\end{itemize}
\qed

\subsection{Example \ref{exmp:trait-drift_counterexample}}\label{app:trait-drift_counterexample}

In order to arrive at a counterexample we begin by setting the mortality and colonization attempt matrices so that the two trait-types are governed by very different time scales
\begin{equation*}
	C_1 := \begin{pmatrix} 0.2 & 0 \\ 0 & 20	\end{pmatrix}\ ,\quad C_2 := \begin{pmatrix} 0.9 & 0 \\ 0 & 27	\end{pmatrix}\ ,\quad C_5 := \begin{pmatrix} 0.1 & 0 \\ 0 & 3	\end{pmatrix}
\end{equation*}
We also choose $P_0 := \One$, meaning that colonization attempts on unoccupied microsites are always successful. Further, recall that an equilibrium is determined by (\ref{eq:trait-drift_EV_equation}), i.e.
\begin{align}
	C_1^{-1} K x &= r x \label{eq:counterexample_proof_0}\\
	\one^\transpose x &= 1 - \tfrac{1}{r}\label{eq:counterexample_proof_1}
\end{align}
Rather than guessing the remaining parameter matrices from scratch, we take a more interpretable approach by prescribing the equilibrium data $r,x$. Thus we choose $r:=4$ and a very uneven Perron-Frobenius Eigenvector
$$x := \begin{pmatrix} 0.7425 \\ 0.0075 \end{pmatrix}$$
scaled according to (\ref{eq:counterexample_proof_1}).
With this choice, the matrix $X:= \diag(x)$ is approximately inverse to $C_1$ which implies that the ``forcing vector'' $C_1 x$ in (\ref{eq:trait-drift_proof_special_intermed0}) is balanced, ensuring that both trait-types are affected. What remains to be done is to couple the trait-types through a suitable trait-drift mechanism in $C_3$ and $D$. Note that (\ref{eq:counterexample_proof_0}) can be rewritten as the row-stochasticity condition
\begin{equation}
	Q \one = \one\quad \text{for}\quad Q = \tfrac{1}{r} X^{-1} C_1^{-1} K X
\end{equation}
Heuristically, we choose for $Q$ a very simple structure that still ensures strong one-way coupling between the trait-types, namely
\begin{equation*}
	Q := \begin{pmatrix}
		\alpha & 1-\alpha\\
		\alpha & 1-\alpha\\
	\end{pmatrix} \quad \text{with} \quad \alpha = 0.01
\end{equation*}
As a consequence, $K = r C_1 X Q X^{-1}$ becomes extremely asymmetric --- a promising scenario for instability. In order to find suitable $C_3, D$, we recall
\begin{equation}\label{eq:counterexample_proof_2}
	K=P_0 C_5 (C_2 + C_5)^{-1} D C_3
\end{equation}
We choose 
$$D:= K \diag(\one^\transpose K)^{-1} \approx \begin{pmatrix}
	0.497 & 0.497\\
	0.503 & 0.503
\end{pmatrix} $$
ensuring that $D$ is column-stochastic, and
$$C_3 := 10 \cdot \diag(\one^\transpose K)
\approx \begin{pmatrix}
	0.16081 & 0.0\\
	0.0 & 1576.1
\end{pmatrix}$$
which then satisfies (\ref{eq:counterexample_proof_2}).
With all of these parameter choices in place, we can verify whether the equilibrium is unstable by looking for a solution of (\ref{eq:trait-drift_proof_special_intermed3}) with positive real part. Indeed such as solution can be found at $s \approx 0.2116 \pm 1.6199i$.
\qed

\section{Generic Equilibria}\label{app:general_case}

\subsection{Spectrum of $F(0)$}\label{app:general_spectrum}
We first note that $F(0)$ is $\mathfrak{N}$-block-diagonal with irreducible principal $\mathfrak{N}$-blocks (due to Assumption \ref{ass:block_irreducibility}) and non-negative off-diagonal entries. Thus we may apply Lem.\ref{lem:essentially-non-negative} to show that the dominant Eigenvalue $z_0^n$ of the principal $n$-block of $F(0)$ is real and simple with left- and right-Eigenvectors that can be chosen wlog. with only positive entries. We denote by $\varpi^n\in\mathcal{V}^\ast, \varpi_n\in\mathcal{V}$ the images of those Eigenvectors under the respective canonical basis injections. Wlog. we take them to be normalized such that $\varpi^n\cdot\varpi_{n'} = \delta^n_{n'}$.

\subsection{Proof of Theorem \ref{thm:general_case}}\label{app:general_proof}
We first show that any non-zero non-negative equilibrium $x$ must satisfy  $[x]_n = 0$, for any $n$ with $z_0^n\leq 0$. This can be seen by estimating
\begin{align}
	0 &= \varpi^n F(x)x \nonumber\\
	&= \varpi^n F(x) [x]_n \nonumber\\
	&= \varpi^n F(0)[x]_n + \varpi^n (F(x) - F(0))[x]_n \nonumber\\
	&= z_0^n \varpi^n [x]_n + \varpi^n (F(x) - F(0))[x]_n \nonumber\\
	&\leq  \varpi^n (F(x) - F(0))[x]_n \nonumber\\
	&= -\varpi^n\cdot \left(\diag(P^\transpose Kx) + \diag((E - P)x)K\right) \cdot [x]_n \nonumber\\
	&\leq -\varpi^n\cdot \left(\diag(P^\transpose K[x]_n) + \diag((E - P)[x]_n)K\right) \cdot [x]_n \nonumber\\
	&\leq -\varpi^n\cdot \left(\diag(\diag(P) K[x]_n) + \diag(\diag(E - P)[x]_n)K\right) \cdot [x]_n \label{eq:proof_general_intermed1}
\end{align}
where we wrote $\diag(M)$ for the diagonal matrix with the same diagonal entries as $M$. Now observe that any principal $n$-block of $K = \bar{S}^{-1} P_0 C_5 (C_2 + (\One-\tilde{D})C_4 + C_5)^{-1} D C_3$ is irreducible (Assumption \ref{ass:block_irreducibility}) and thus cannot have any zero rows. Therefore, there exists $\varepsilon > 0$ such that
$K[x]_n \geq \varepsilon [\one]_n$ and, by possibly making $\varepsilon$ smaller, we can also guarantee $[x]_n \geq \varepsilon (\max_{N(i)=n} x^i) [\one]_n$. We can now continue the estimate  (\ref{eq:proof_general_intermed1}) as
\begin{align*}
	0 &\leq -\varepsilon^2(\max_{N(i)=n} x^i)  \varpi^n\cdot \left(\diag(\diag(P) [\one]_n) + \diag(\diag(E - P)[\one]_n)\right) \cdot [\one]_n\\
	&= -\varepsilon^2(\max_{N(i)=n} x^i)  \varpi^n\cdot \diag(\diag(E)[\one]_n) \cdot [\one]_n\\
	&= - \varepsilon^2(\max_{N(i)=n} x^i) \varpi^n \cdot [\one]_n
\end{align*}
This can only be satisfied if $[x]_n=0$, which is what we wanted to show.

Thus all quasi-species $n$ with $z_0^n\leq 0$ don't participate in the equilibrium $x$ which proves the first assertion of  Thm.\ref{thm:general_case}, namely that the zero state is the only non-negative equilibrium for $\max_n z_0^n \leq 0$. The assertions about stability of the zero equilibrium are precisely what was shown previously in Lemma \ref{lem:trait-drift_proof_stability_of_zero}.

What remains to be proved are the assertions about non-zero non-negative equilibria. To that end we henceforth assume wlog. that $z_0^n > 0$ for all $n\in\mathfrak{N}$. We begin by deforming the matrix-valued function $x\mapsto F(x)$ with a deformation parameter $s\in (-\infty,1]$ as
\begin{equation}\label{eq:proof_general_deformedF}
	F(x;s) := F(x) + (s-1)\sum_{n=1}^{N}z_0^n [\One]^n_n
\end{equation}
and consider the deformed sessiles' stationarity equation
\begin{equation}\label{eq:proof_general_deformed}
	F(x;s)x = 0
\end{equation}
Note that, for $s=1$, equation (\ref{eq:proof_general_deformed}) is identical to the original sessiles' stationarity equation (\ref{eq:sessiles_stationarity}) while, for $s\leq 0$, the only non-negative solution of (\ref{eq:proof_general_deformed}) is the zero solution because all principal $\mathfrak{N}$-blocks of $F(0;0)$ have dominant Eigenvalues less or equal 0 by construction, and we just showed that this excludes any non-zero non-negative equilibrium.
We also showed that, as $s$ is varied across 0, the zero equilibrium loses stability. What happens here is in fact a transcritical bifurcation where non-zero equilibria cross the zero equilibrium. 

To reveal this bifurcation, we are going to construct a suitable coordinate transform. Start by defining the projectors $\Pi_{0,n} = \varpi_n\otimes\varpi^n$ and $\Pi_{-,n}:=[\One]^n_n - \Pi_{0,n}$ and assemble them into
$$\Pi_0 := \sum_{n\in\mathfrak{N}} \Pi_{0,n}\quad \text{and}\quad \Pi_- := \sum_{n\in\mathfrak{N}} \Pi_{-,n}$$
Observe that $\Pi_0$ and $\Pi_-$ are projectors whose images together span $\mathcal{V}$. Therefore the following is an invertible change of coordinates:
\begin{equation}
	\mathcal{V}\times (0,1] \to \mathcal{V}\times (0,1] ,\ 
	(u,s)\mapsto (x,s) = \left((s\Pi_0 + s^2\Pi_-)u,\ s\right) \equiv \left( s u_0 + s^2 u_-,\ s \right) \label{eq:proof_general_transformation}
\end{equation}
where we used the shorthand $u_0 = \Pi_0 u$ and $u_- = \Pi_- u$. We now Taylor-expand (\ref{eq:proof_general_deformed}) around $x=0$ and apply the transformation (\ref{eq:proof_general_transformation}):
\begin{align}
	0 &= F(0;s)x + (F(x)-F(0))x \nonumber\\
	&= s F(0;s) u_0 + s^2 F(0;s)u_- + s^2 (F(u_0+s u_-)-F(0))\cdot (u_0+s u_-)\label{eq:proof_general_intermed4}
\end{align}
Defining $\mathcal{Z}:=\mathrm{span}\{\varpi_n\}_{n\in\mathfrak{N}} = \mathrm{image}(\Pi_0)$ and the Hadamard-product $\odot_\varpi$ with respect to the basis $\{\varpi_n\}$ as the bilinear map $$\mathcal{Z}\times\mathcal{Z}\to\mathcal{Z},\quad \varpi_n \odot_\varpi \varpi_{n'} := \delta_{n n'} \varpi_n$$
we may express
$F(0;s)u_0 = s \zeta_0\odot_\varpi u_0$, with $\zeta_0:=\sum_n z_0^n \varpi_n$. We substitute this into (\ref{eq:proof_general_intermed4}) and  divide by $s^2$ to obtain
\begin{equation}\label{eq:proof_general_transformed_equation}
	0 = \zeta_0\odot_\varpi u_0 + F(0;s)u_- + (F(u_0+s u_-)-F(0))\cdot (u_0+s u_-) =: G(u;s)
\end{equation}
We note that the expression $G(u;s)$ extends continuously to $s=0$.
We are going to apply the implicit function theorem to construct a curve $s\mapsto u(s)$ solving
\begin{equation}\label{eq:proof_general_IFT_equation}
	G(u(s);s) = 0
\end{equation}
To that end, we first need to find a solution $u(0)$ of 
\begin{align}
	0 &= G(u(0);0) \nonumber\\
	&= \zeta_0\odot_\varpi u(0)_0 + F(0;0) u(0)_- + (F(u(0)_0)-F(0))u(0)_0  \label{eq:proof_general_transformed_equation_at0}
\end{align}
We multiply this by $\Pi_0$ to get an equation for $\zeta:= u(0)_0 \in \mathcal{Z}$, namely
\begin{equation}\label{eq:proof_general_intermed2}
	0 = \zeta_0\odot_\varpi \zeta + \Pi_0 (F(\zeta) - F(0)) \zeta
\end{equation}
The second summand can be rewritten as
\begin{align*}
	\Pi_0 (F(\zeta)-F(0)) \zeta &= \sum_n (\varpi^n \zeta) \Pi_0 (F(\zeta)-F(0)) \varpi_n \\
	&= \zeta \odot_\varpi \Pi_0 (F(\zeta)-F(0)) \sum_{n'} \varpi_{n'} \\
	&= - \zeta \odot_\varpi W \zeta
\end{align*}
with the linear map
$$W:\ \mathcal{Z}\to\mathcal{Z},\quad W\zeta := -\Pi_0 \cdot (F(\zeta)-F(0)) \cdot \sum_n \varpi_n$$
Substituting this into (\ref{eq:proof_general_intermed2}), we get
\begin{equation}\label{eq:proof_general_cone_equation}
	0 = \zeta\odot_\varpi \left(\zeta_0 - W \zeta\right)
\end{equation}
Note that (\ref{eq:proof_general_cone_equation}) always has ``monodominance solutions'' with $\zeta$ of the form $\zeta^n\varpi_n$ for a single $n\in\mathfrak{N}$, but we'll shortly see that there may be more. 
At any rate, substituting a given solution $\zeta$ into (\ref{eq:proof_general_transformed_equation_at0}) and applying the projector $\Pi_-$ yields
\begin{equation}\label{eq:proof_general_uminus}
	0 = F(0;0) u(0)_- + \Pi_-\cdot (F(\zeta)-F(0))\cdot \zeta
\end{equation}
Observe that $F(0;0)$ acts as an isomorphism on the image of $\Pi_-$ because, by construction of $\Pi_-$, this image is spanned by generalized Eigenspaces of $F(0;0)$ whose Eigenvalues have strictly negative real parts. Therefore (\ref{eq:proof_general_uminus}) determines a unique vector $u(0)_-$, implying that any solution $u(0)$ of (\ref{eq:proof_general_transformed_equation_at0}) is entirely determined by the respective solution $\zeta= u(0)_0$ of (\ref{eq:proof_general_cone_equation}). We can classify solutions according to its \emph{non-vanishing blocks} $\mathfrak{n}\subset\mathfrak{N}$, defined as the smallest subset such that $[\zeta]_\mathfrak{n} = \zeta$. 
We now note the way in which the assumption of generic parameter values is used, namely by assuming the following two generic properties (cf. Appendix \ref{app:genericity}):
\begin{itemize}
	\item \emph{Property 1:} Any principal submatrix of $W$ (with respect to the basis $\{\varpi_n\}_{n\in\mathfrak{N}}$) has full rank.
	\item \emph{Property 2:} For any $\mathfrak{n}\subset\mathfrak{N}$, $[\zeta_0]_\mathfrak{n}$ does not lie in the span of a proper subset of $\set{\pi_{\mathfrak{n}}(\psi_n)}{n\in\mathfrak{n}}$.
\end{itemize}
We still need to show, that $u(0)$ can be deformed into a curve of solutions. For this we now invoke the implicit function theorem. For it to hold, we need to assure invertibility of $\dd G(\zeta;0)$, i.e. $\dd G(\zeta;0)$ needs to have empty kernel. For any solution of (\ref{eq:proof_general_cone_equation}) with non-vanishing blocks $\mathfrak{n}$, we have $h\in\ker \dd G(\zeta; 0)$ iff
\begin{equation*}
	0 = \dd G(\zeta; 0) h = \zeta_0 \odot_\varpi h_0 + F(0;0) h_- + (F(\zeta) - F(0)) h_0 + \partial_{h_0} F(0) \zeta
\end{equation*}
Applying the projector $\Pi_0$ and writing $\eta:=h_0$, we again get an equation in $\mathcal{Z}$, namely
\begin{align}
	0 &= \eta \odot_\varpi (\zeta_0 - W\zeta) - \zeta \odot_\varpi W\eta \nonumber\\
	&= \eta \odot_\varpi [\zeta_0 - W\zeta]_{\mathfrak{N}\setminus\mathfrak{n}} - [\zeta]_\mathfrak{n} \odot_\varpi W\eta \nonumber\\
	&= [\eta]_{\mathfrak{N}\setminus\mathfrak{n}} \odot_\varpi [\zeta_0 - W\zeta]_{\mathfrak{N}\setminus\mathfrak{n}} - [\zeta]_\mathfrak{n} \odot_\varpi [W\eta]_\mathfrak{n} \label{eq:proof_general_intermed6}
\end{align}
Due to Property 2 we have $\varpi^{n'}\cdot (\zeta_0 - W\zeta) \neq 0$ for any $n'\in\mathfrak{N}\setminus\mathfrak{n}$. Thus the first summand of (\ref{eq:proof_general_intermed6}) constrains $[\eta]_{(\mathfrak{N}\setminus\mathfrak{n})} = 0$, i.e. $\eta = [\eta]_\mathfrak{n}$. Substituting this back into (\ref{eq:proof_general_intermed6}) yields
\begin{equation*}
	0 = - [\zeta]_\mathfrak{n} \odot_\varpi [W[\eta]_\mathfrak{n}]_\mathfrak{n} = - [\zeta]_\mathfrak{n} \odot_\varpi [W]_\mathfrak{n}^\mathfrak{n}\cdot[\eta]_\mathfrak{n}
\end{equation*}
Now Property 2 implies $\varpi^n\cdot \zeta \neq 0$ for any $n\in\mathfrak{n}$ and Property 1 implies that $[W]_\mathfrak{n}^\mathfrak{n}$ is invertible. Together, this forces $[\eta]_\mathfrak{n} = 0$ and hence $\eta = 0$. Thus we have shown that $\dd G(\zeta;0)$ is invertible and may apply the implicit function theorem to obtain an interval $\varepsilon > 0$ and a smooth curve
$u:\ [0,\varepsilon) \to \mathcal{V}$ solving (\ref{eq:proof_general_IFT_equation}).

Observe that, for any $s>0$, the coordinate transform (\ref{eq:proof_general_transformation}) is smooth and maps the curve $s\mapsto u(s)$ to a curve $s\mapsto x(s)$, for $s < \varepsilon$, which by construction satisfies (\ref{eq:proof_general_deformed}) with $x=x(s)$. We now show that this curve can be extended up to $s=1$. Indeed the implicit function theorem allows to extend the curve as long as 
\begin{equation}\label{eq:proof_general_IFT_equationx}
	0 = \dd (F(x;s)x)\cdot \dot{x} + \tfrac{\partial}{\partial s} F(x;s)x
\end{equation}
determines a unique non-zero $\dot{x}\equiv \tfrac{\dd x}{\dd s}$, i.e. as long as the non-zero vector
\begin{equation}
	\tfrac{\partial}{\partial s} F(x;s)x = Z_0 x
\end{equation}
where $Z_0 = \sum_n z_0^n [\One]^n_n$,
lies in the image of the matrix 
\begin{equation*}
	\dd (F(x;s)x) = F(x;s) - L(x)
\end{equation*}
where the non-negative matrix $L(x)$ is defined by $L(x)h := (F(h)-F(0))x$. Observe first that, if the curve is defined on $[0,\varepsilon)$, it can be completed to a curve on $[0,\varepsilon]$ because it is confined to the compact set $\bar{\mathcal{X}}$. Thus the curve can be extended until the first $\varepsilon > 0$ where $F(x(\varepsilon);\varepsilon) - L(x(\varepsilon))$ becomes singular. However, observe that the curve defined by (\ref{eq:proof_general_deformed})
is in fact an affine variety in $\R\times\mathcal{V}$. We may consider the normalization of this variety (i.e. the variety associated to the integral closure of the original variety's coordinate ring in its field of fractions). The normalization doesn't have singular points according to Serre's criterion (\cite[\S 5.8.6]{Grothendieck1965}), which means that it is a smooth one-dimensional manifold (see \cite{Zariski1947}), and projects to the original curve birationally, and bijectively apart from the singular points of the original curve. Hence the normalized curve has a regular parametrization which projects down to $t\mapsto(\tilde{x}(t), s(t))$ and wlog. we may assume that, for $s\in [0,\varepsilon]$, we have $t=s$, i.e. $\tilde{x}|_{[0,\varepsilon]} = x|_{[0,\varepsilon]}$. Now the crucial observation is that, because $\tilde{x}$ is smooth across $s=\varepsilon$, we do actually have
\begin{equation*}
	0 = \left.\left(\dd (F(\tilde{x};s)\tilde{x})\cdot \dot{\tilde{x}} + \tfrac{\partial}{\partial s} (F(\tilde{x};s)\tilde{x}) \dot{s}\right)\right|_{s=\varepsilon} = (F(x(\varepsilon);\varepsilon) - L(x(\varepsilon))) \dot{\tilde{x}}(\varepsilon) + Z_0 x(\varepsilon)
\end{equation*}
i.e. $Z_0 x(\varepsilon)$ is in the image of $F(x(\varepsilon);\varepsilon) - L(x(\varepsilon))$ as required and this determines a unique non-zero tangent vector $\dot{\tilde{x}}(\varepsilon) =: \dot{x}(\varepsilon)$. Thus the curve $s\mapsto x(s)$ can be extended past $s=\varepsilon$. The same argument applies whenever the curve encounters singular points of $\dd (F(x;s)x)$, and thus it can be extended until $s=1$. Note furthermore that the implicit function theorem preserves the set of non-vanishing blocks $\mathfrak{n}$ along the deformation curve due to $F(x;s)$ being $\mathfrak{N}$-block diagonal. In summary, we have shown that the above construction yields exactly one vector $x(1)$ satisfying the equilibrium condition $F(x(1))x(1) = 0$ for any $\mathfrak{n}\subset\mathfrak{N}$.

To be physically meaningful, $x(1)$ has to be in $\bar{\mathcal{X}}$. We now show that $x(1)$ is in the relative interior of $[\bar{\mathcal{X}}]_\mathfrak{n} \subset \mathcal{V}$ iff the corresponding 
$\zeta = u(0)_0$ is a solution of (\ref{eq:proof_general_cone_equation}) with positive blocks $\mathfrak{n}$, i.e. a solution with non-vanishing blocks $\mathfrak{n}$ and  $[\zeta]_\mathfrak{n} \geq 0$. This follows from three facts: First, we already saw that $x(1)\in[\mathcal{V}]_\mathfrak{n}$ if $\zeta$ has non-vanishing blocks $\mathfrak{n}$. Second, for $n\in\mathfrak{n}$, the principal $n$-block of $F(x)$ is irreducible with non-negative off-diagonal entries and the $n$-block of $x$ is the Eigenvector associated to its dominant Eigenvalue. Thus the $n$-block of $x$ cannot have zero entries due to Lem.\ref{lem:essentially-non-negative}.  Then, by continuity of Eigenvectors, the $n$-block of $x(1)$ is  strictly positive iff the $n$-block of $x(s)$ is positive for any (and equivalently all) $s\in (0,1]$ which is the case iff $\zeta$ is a solution of (\ref{eq:proof_general_cone_equation}) with positive blocks $\mathfrak{n}$. Lastly rewriting $F(x)x=0$ as
\begin{equation*}
	(C_1 + \diag(P^\transpose Kx))x = (\bar{S} - \diag((E-P)x))K x
\end{equation*}
we see that the $\mathfrak{n}$-blocks of the left-hand side are strictly positive and the respective blocks of the right-hand side are positive iff the diagonal matrix $(\bar{S} - \diag((E-P)x))$ has strictly positive diagonal. 
This shows that there exists a (sessiles-)equilibrium $x(1)$ in the relative interior of $[\bar{\mathcal{X}}]_\mathfrak{n}$ for every solution $\zeta$ of (\ref{eq:proof_general_cone_equation}) with positive blocks $\mathfrak{n}$, which is precisely the condition $\mathfrak{n}\in\mathfrak{C}(q)$. To show that this exhausts all equilibria, we observe that any equilibrium $x$ in the relative interior of $[\bar{\mathcal{X}}]_n \subset [\mathcal{V}]_n$ can be extended into a curve of equilibria $x(s)$, for $s\in[0,1]$, solving $F(x(s);s)x(s)=0$ using the implicit function theorem (with possible extensions at singular points of the Jacobian as before) in the reversed direction, i.e. with $s$ decreasing from 1 down to 0. In this way, different equilibria generate different deformation curves converging all to the zero equilibrium for $s\to 0$. These curves might intersect at isolated points, but they can never merge over intervals of $s$, because they all lie in the image of the projection of the smooth normalization of the one-dimensional variety defined by  $F(x;s)x=0$. This projection is bijective apart from the singular set which is of dimension 0, i.e. consists of isolated points. In particular, in a neighborhood of the origin $(x,s)=(0,0)$, there are as many such curves with non-vanishing $\mathfrak{n}$-blocks as there were such equilibria to begin with. But we showed that in a neighborhood of the origin there is exactly one such curve with non-vanishing $\mathfrak{n}$-blocks for every $\mathfrak{n}\in\mathfrak{C}(q)$.
\qed

\section{Model parameters for Example 2}\label{app:general_exmp}

For full reproducibility, we give an account of all model parameters used in Example \ref{exmp:general}. Recall that there are three superparameters with values in the unit interval, namely \emph{environmental heterogeneity} $s_1$, \emph{trait heterogeneity} $s_2$ and \emph{displacement strength} $s_3$. As described in the main text, these superparameters enter into
\begin{itemize}[topsep=4pt, itemsep=4pt]
	\item Resource availability $f^+_b$ at macrosite $b$ as 
	$$f^+_b(s_1) = 1 + s_1 \tfrac{1}{4}(b_1+b_2)$$
	\item Resource requirement $f^-_{n,a}$ of individuals of quasi-species $n$ with trait-type $a$ as
	$$f^-_{n,a}(s_2) = \tfrac{1}{2} + \tfrac14 (s_2 + \tfrac{1}{20})(n-1) + \tfrac{1}{10}a$$
\end{itemize}

The model's kinetic parameters are then set up as follows.
\begin{itemize}[leftmargin=*, topsep=4pt, itemsep=4pt]
	\item \emph{Sessile mortality}: We introduce two additional parameters (all rates are in units of 1 per unit time):
	\begin{itemize}[topsep=4pt, itemsep=3pt]
		\item basal mortality rate $q_{1,0} := 0.1$
		\item starvation rate $q_{1,1} := 4.0$
	\end{itemize}
	The sessile mortality coefficients are then defined as
	\begin{equation*}
		c_{1,i} := q_{1,0} + q_{1,1} \frac{f^-_{N(i),A(i)}(s_2)}{f^+_{B(i)}(s_1)}
	\end{equation*}
	
	\item \emph{Propagule mortality}: We introduce just a single additional parameter
	\begin{itemize}[topsep=4pt, itemsep=3pt]
		\item basal mortality rate $q_{2,0} := 10.0$
	\end{itemize}
	The propagule mortality coefficients are then
	\begin{equation*}
		c_{2,i} := q_{2,0}
	\end{equation*}
	
	\item \emph{Drift/Dispersal}: We introduce four additional parameters
		\begin{itemize}[topsep=4pt, itemsep=3pt]
		\item basal propagule production rate $q_{3,0} := 0.0$
		\item resource utilization rate $q_{3,1} := 30.0$
		\item drift half probability $q_{3,3} := 0.1$
		\item dispersal probability $q_{3,4} := 0.1$
		\end{itemize}
	The propagule dispersal coefficients are then
	\begin{equation*}
		c_{3,i} := q_{3,0} + q_{3,1} f^-_{N(i),A(i)}(s_2)
	\end{equation*}
	and the drift/dispersal success probabilities are
	\begin{equation*}
		d_i^{i'} := 
		\begin{cases}
			(1-2q_{3,3})\cdot(1 - q_{3,4}) & \text{if } i'=i \text{ and } A(i)=0\\
			(1-q_{3,3})\cdot(1 - q_{3,4}) & \text{if } i'=i \text{ and } A(i)\neq 0\\
			 \delta_{N(i)}^{N(i')} \delta^1_{|A(i')-A(i)|} \delta^1_{\|B(i') - B(i)\|_1}  q_{3,3}\frac{q_{3,4}}{4}\exp(-1) & \text{if } i'\neq i
		\end{cases}
	\end{equation*}
	Note that this parametrization incorporates propagule loss during dispersal, as $q_{3,4}$ is the probability that a propagule leaves its macrosite, and thus $1-q_{3,4}$ is the probability that it doesn't, while the probability that a leaving propagule reaches an adjacent macrosite (at distance 1) is depressed by $\exp(-1)$. As a consequence, the columns of $(d_i^{i'})$ are strictly sub-stochastic.
	
	\item \emph{Migration}:  We introduce two additional parameters
	\begin{itemize}[topsep=4pt, itemsep=3pt]
		\item migration rate $q_{4,0} := 0.1$
		\item emigration probability $q_{4,1} := 0.1$
	\end{itemize}
	The propagule migration coefficients are then
	\begin{equation*}
		c_{4,i} := q_{4,0}
	\end{equation*}
	and the migration success probabilities are
	\begin{equation*}
		\tilde{d}_i^{i'} := \begin{cases}
			1-q_{4,1} & \text{if } i'=i\\
			\delta_{N(i)}^{N(i')}\delta_{A(i)}^{A(i')}\delta^1_{\|B(i')-B(i)\|_1}\frac{q_{4,1}}{4}\exp(-1) & \text{if } i'\neq i
		\end{cases}
	\end{equation*}
	
	\item \emph{Establishment}: We introduce three additional parameters
	\begin{itemize}[topsep=4pt, itemsep=3pt]
		\item establishment rate $q_{5,0} := 5.0$
		\item preemption success probability $q_{5,1} := 1.0$
		\item displacement sigmoid steepness $q_{5,2} := 10.0$
	\end{itemize}
	The establishment coefficients are
	\begin{equation*}
		c_{5,i} := q_{5,0}
	\end{equation*}
	The preemption success probabilities are
	\begin{equation*}
		p^0_i := q_{5,1}
	\end{equation*}
	and the displacement success modifiers are
	\begin{equation*}
		p^i_{i'} := s_3 \left(1 + \exp\left(q_{5,2}(f^-_{n,a}(s_2) - f^-_{n',a'}(s_2))\right)\right)^{-1}
	\end{equation*}
\end{itemize}

In total we thus parametrized the model through 12 kinetic event specific parameters plus 3 superparameters affecting several kinetic events.

\end{document}